\documentclass[sigconf]{acmart}

\usepackage[linesnumbered,ruled,noend]{algorithm2e}

\usepackage{tikz}
\usetikzlibrary{positioning,calc}
\usetikzlibrary{decorations.text}
\usetikzlibrary{decorations.pathmorphing,decorations.pathreplacing}
\usetikzlibrary{arrows,petri, topaths,fit}

\usepackage{listings}
\lstset{language=Prolog} 
\usepackage{wrapfig}
\usepackage{framed}   
\renewenvironment{shaded}{%
  \MakeFramed{\advance\hsize-\width \FrameRestore\FrameRestore}}%
 {\endMakeFramed}
\definecolor{shadecolor}{gray}{0.75}
\usepackage{thmtools, thm-restate}

\usepackage{multirow}
\usepackage{subcaption}
\usepackage{url}
\usepackage{graphicx}
\usepackage{tcolorbox}
\usepackage{enumitem}



\definecolor{mycolor}{rgb}{0.122, 0.435, 0.698}
\makeatletter
\newcommand{\mybox}[1]{%
  \setbox0=\hbox{#1}%
  \setlength{\@tempdima}{\dimexpr\wd0+13pt}%
  \begin{tcolorbox}[colframe=mycolor,boxrule=0.5pt,arc=4pt,
      left=6pt,right=6pt,top=3pt,bottom=3pt,boxsep=0pt,width=\@tempdima]
    #1
  \end{tcolorbox}
}

\newcommand{\hangdong}[1]{{\color{red}{\noindent\textsf{\textbf{Hangdong}: #1}}}}
\newcommand{\highlight}[1]{{\color{black}{\noindent #1}}}


\newcommand{\mC}{\mathcal{C}}

\newcommand{\bU}{\mathbf{U}}
\newcommand{\bZ}{\mathbf{Z}}
\newcommand{\bu}{\mathbf{u}}

\newcommand{\slack}{\alpha}


\newcommand{\anc}{\mathsf{anc}}

\newcommand{\vars}[1]{\mathsf{vars}(#1)}


\usepackage{mdframed}
\usepackage{graphicx}   
\usepackage{hyperref}   
\usepackage{bbm}
\hypersetup{
	colorlinks = true,
	citecolor=blue
}
\usepackage{thm-restate}
\usepackage{mdwlist}
\usepackage{mathtools}
\usepackage{xspace}
\usepackage{verbatim}   
\usepackage{xcolor}
\usepackage{makecell}
\usepackage{tikz}
\usepackage{relsize}
\usepackage{booktabs}
\usepackage{tikz-qtree}
\usepackage{subcaption}
\usepackage{multirow}
\usetikzlibrary{arrows.meta}
\usetikzlibrary{fit,shapes,trees,shapes.geometric}
\usetikzlibrary{patterns,decorations.pathreplacing,calc}
\usetikzlibrary{matrix, positioning, arrows}
\usetikzlibrary{chains,shapes.multipart}
\usetikzlibrary{shapes,calc}
\usetikzlibrary{automata}


\usepackage[normalem]{ulem}

\newcommand{\shaleen}[1]{{\color{magenta}{\noindent\textsf{\textbf{Shaleen}: #1}}}}

\allowdisplaybreaks

\newcommand{\cut}[1]{}   
\makeatletter
\newcommand{\mytag}[2]{%
	\text{#1}%
	\@bsphack
	\protected@write\@auxout{}%
	{\string\newlabel{#2}{{#1}{\thepage}}}%
	\@esphack
}
\makeatother




\newcommand{\setdisj}{\textsf{Set Disjointness }}

\newcommand{\introparagraph}[1]{\smallskip \noindent {{\bf \em #1.}}}  

\usepackage{aliascnt}

\newtheorem{observation}{Observation}[section]

\providecommand{\bx}[0]{\mathbf{x}}
\providecommand{\by}[0]{\mathbf{y}}

\providecommand{\bz}[0]{\mathbf{z}}

\providecommand{\bu}[0]{\mathbf{u}}

\providecommand{\ba}[0]{\mathbf{a}}

\providecommand{\mD}[0]{\mathcal{D}}

\providecommand{\mH}[0]{\mathcal{H}}

\providecommand{\mJ}[0]{\mathcal{J}}

\providecommand{\mP}[0]{\mathcal{P}}

\providecommand{\OBJ}[0]{\mathsf{OBJ}}

\providecommand{\edges}[0]{\mathcal{E}}

\providecommand{\vars}[1]{\textsf{vars}(#1)}

\providecommand{\poly}[0]{poly}

\providecommand{\tree}[0]{\mathcal{T}}

\providecommand{\htree}[0]{\mathcal{T}}

\providecommand{\mw}[0]{\mathsf{w}}
\providecommand{\mF}[0]{\mathcal{F}}
\providecommand{\bU}[0]{\boldsymbol{U}}

\providecommand{\bt}[0]{\mathbf{t}}

\providecommand{\eat}[1]{}
\providecommand{\mF}[0]{\mathsf{F}}

\newcommand{\parent}{\mathit{parent}}

\newcommand{\inv}{\raisebox{0.15ex}{$\scriptscriptstyle-\!1$}}

\providecommand{\blambda}[0]{\boldsymbol{\lambda}} 
\providecommand{\bdelta}[0]{\boldsymbol{\delta}} 
\providecommand{\btheta}[0]{\boldsymbol{\theta}} 
\providecommand{\bmu}[0]{\boldsymbol{\mu}} 
\providecommand{\bsigma}[0]{\boldsymbol{\sigma}} 
\providecommand{\bgamma}[0]{\boldsymbol{\gamma}} 

\providecommand{\DC}[0]{\mathsf{DC}}
\providecommand{\AC}[0]{\mathsf{AC}}

\providecommand{\SC}[0]{\mathsf{SC}}
\providecommand{\HDC}[0]{\mathsf{HDC}}
\providecommand{\HWC}[0]{\mathsf{HAC}}
\providecommand{\HSC}[0]{\mathsf{HSC}}
\providecommand{\PANDA}[0]{\mathsf{PANDA}}

\providecommand{\inflow}[0]{\mathsf{inflow}}

\providecommand{\bA}[0]{\boldsymbol{A}} 
\providecommand{\vectorb}[0]{\boldsymbol{b}}
\providecommand{\bD}[0]{\boldsymbol{D}}
\providecommand{\bd}[0]{\boldsymbol{d}}
\providecommand{\bC}[0]{\boldsymbol{C}}
\providecommand{\bc}[0]{\boldsymbol{c}}

\providecommand{\bg}[0]{\boldsymbol{g}}
\providecommand{\bzero}[0]{\mathbf{0}}
\providecommand{\bone}[0]{\mathbf{1}}

\providecommand{\polyO}[0]{\widetilde{O}}

\providecommand{\sfB}[0]{\mathsf{B}}
\providecommand{\sfBS}[0]{\mathsf{BS}}
\providecommand{\sfBT}[0]{\mathsf{BT}}

\providecommand{\bR}[0]{\mathbb{R}}  
\providecommand{\bQ}[0]{\mathbb{Q}}
\newcommand*{\defeq}{\stackrel{\text{def}}{=}}
\providecommand{\bh}[0]{\boldsymbol{h}} 
\providecommand{\bs}[0]{\boldsymbol{s}} 
\providecommand{\bm}[0]{\boldsymbol{m}} 
\providecommand{\bc}[0]{\boldsymbol{c}} 
\providecommand{\bd}[0]{\boldsymbol{d}} 
\providecommand{\bL}[0]{\mathbf{L}} 

\providecommand{\hS}[0]{\textcolor{blue}{h_S}} 
\providecommand{\bhS}[0]{\textcolor{blue}{\boldsymbol{h}_S}} 
\providecommand{\hT}[0]{\textcolor{red}{h_T}} 
\providecommand{\hA}[0]{\textcolor{red}{h_T}} 
\providecommand{\bhT}[0]{\textcolor{red}{\boldsymbol{h}_T}} 

\providecommand{\proofstep}[0]{\boldsymbol{f}} 
\providecommand{\proofseq}[0]{\mathsf{ProofSeq}} 
\providecommand{\LogSizeBound}[0]{\mathsf{LogSizeBound}}
\providecommand{\btime}[0]{T} 
\providecommand{\bspace}[0]{S} 

\AtBeginDocument{%
}




\copyrightyear{2023} 
\acmYear{2023} 
\setcopyright{acmlicensed}
\acmConference[PODS '23]{Proceedings of the 42nd ACM SIGMOD-SIGACT-SIGAI Symposium on Principles of Database Systems}{June 18--23, 2023}{Seattle, WA, USA} 
\acmBooktitle{Proceedings of the 42nd ACM SIGMOD-SIGACT-SIGAI Symposium on Principles of Database Systems (PODS '23), June 18--23, 2023, Seattle, WA, USA}
\acmPrice{15.00}
\acmDOI{10.1145/3584372.3588675}
\acmISBN{979-8-4007-0127-6/23/06}

\settopmatter{printacmref=false}
\usepackage{tcolorbox}
\usepackage{enumitem}
\usepackage{wrapfig}
\usepackage{balance} 
\usepackage{framed}   
{\endMakeFramed}
\definecolor{shadecolor}{gray}{0.75}
\sloppy
\begin{document}
\title{Space-Time Tradeoffs for Conjunctive Queries with Access Patterns}

\settopmatter{printfolios=true}

\author{Hangdong Zhao}
\affiliation{%
\institution{University of Wisconsin-Madison}
\city{Madison}
\state{WI}
\country{USA}
}
\email{hangdong@cs.wisc.edu}

\author{Shaleen Deep}
\affiliation{%
\institution{Microsoft Gray Systems Lab}
\city{Madison}
\state{WI}
\country{USA}}
\email{shaleen.deep@microsoft.com}

\author{Paraschos Koutris}
\affiliation{%
\institution{University of Wisconsin-Madison}
\city{Madison}
\state{WI}
\country{USA}}
\email{paris@cs.wisc.edu}
\renewcommand{\shortauthors}{Hangdong Zhao, Shaleen Deep, \& Paraschos Koutris}
\sloppy
 \begin{abstract}
In this paper, we investigate space-time tradeoffs for answering conjunctive queries with access patterns (CQAPs). The
goal is to create a space-efficient data structure in an initial preprocessing phase and use it for answering (multiple)
queries in an online phase. Previous work has developed data structures that trades off space usage for answering time
for queries of practical interest, such as the path and triangle query. However, these approaches lack a comprehensive
framework and are not generalizable. 
Our main contribution is a general algorithmic framework for obtaining space-time tradeoffs for any CQAP. Our framework builds upon the $\PANDA$ algorithm and tree decomposition techniques. We demonstrate that our framework captures all state-of-the-art tradeoffs that were independently produced for various queries. Further, we show surprising improvements over the state-of-the-art tradeoffs known in the existing literature for reachability queries.
 \end{abstract}

 \maketitle
 
\section{Introduction}
\label{sec:intro}

We study a class of problems that splits an algorithmic task into two phases: the {\em preprocessing phase}, which computes a space-efficient data structure from the input, and the {\em online phase}, which uses the data structure to answer requests of a specific form over the input as fast as possible. Many important algorithmic tasks such as set intersection problems~\cite{Cohen2010,goldstein2017conditional}, reachability in directed graphs~\cite{agarwal2011approximate, agarwal2014space, cohen2010hardness}, histogram indexing~\cite{chan2015clustered, kociumaka2013efficient}, and problems related to document retrieval~\cite{afshani2016data, larsen2015hardness} can be expressed in this way. The fundamental algorithmic question related to these problems is to find {\em the tradeoff between the space $S$ necessary for storing the data structures and the time $T$ for answering a request.} 

Let us look at one of the simplest tasks in this setup. Consider the $2$-\setdisj problem: given a universe of elements $U$ and a collection of $m$ sets $S_1, \dots, S_m \subseteq U$, we want to create a data structure such that for any pair of integers $1 \leq i,j \leq m$, we can efficiently decide whether $S_i \cap S_j$ is empty or not. It is well-known that the space-time tradeoff for $2$-\setdisj is captured by the equation $S \cdot T^2 = O(N^2)$, where $N$ is the total size of all sets~\cite{Cohen2010, goldstein2017conditional}. Similar tradeoffs have also been established for other data structure problems. In the $k$-\textsf{Reachability} problem~\cite{goldstein2017conditional, Cohen2010} we are given as input a directed  graph $G = (V,E)$, an arbitrary pair of  vertices $u, v$, and the goal is to decide whether there exists a path of length $k$ between $u$ and $v$.  The data structure obtained was conjectured to be optimal by~\cite{goldstein2017conditional}, and the conjectured optimality was used to develop conditional lower bounds for other problems, such as approximate distance oracles~\cite{agarwal2011approximate, agarwal2014space} where no progress has been made in improving the upper bounds in the last decade. In the \textsf{edge triangle detection} problem~\cite{goldstein2017conditional}, we are given as input a graph $G = (V,E)$, and the goal is to develop a data structure that can answer whether a given edge $e \in E$ participates in a triangle or not. Each of these problems has been studied in isolation and therefore, the algorithmic insights are not readily generalizable into a comprehensive framework. 

In this paper, we cast many of the above problems into answering {\em Conjunctive Queries with Access Patterns (CQAPs)} over a relational database. For example, by using the relation $R(x,y)$ to encode that element $x$ belongs to set $y$, $2$-\setdisj can be captured by  the following CQAP: $\varphi (\mid y_1, y_2) \leftarrow R(x,y_1) \land R(x,y_2)$. The expression $\varphi (\mid y_1, y_2)$ can be interpreted as follows: given values for $y_1, y_2$, compute whether the query returns true or not. Different access patterns capture different ways of accessing the result of the CQ and result in different tradeoffs. 

Tradeoffs for enumerating Conjunctive Query results under static and dynamic settings have been a subject of previous research~\cite{olteanu2016factorized, greco2013structural, deep2018compressed, CQAP, kara19, kara2019counting}. However, previous work either focuses on the tradeoff between preprocessing time and answering time~\cite{CQAP, kara19, kara2019counting}, or the tradeoff between space and delay in enumeration~\cite{olteanu2016factorized,deep2018compressed}. In this paper, we focus explicitly on the tradeoff between space and answering time, without optimizing for preprocessing time.  Most closely related to our setting is the problem of answering Boolean CQs~\cite{deep2021space}. In that work, the authors slightly improve upon the data structure proposed in~\cite{deep2018compressed} and adapt it for Boolean CQ answering. Further,~\cite{deep2021space} identified that the conjectured tradeoff for the $k$-reachability problem is suboptimal by showing slightly improved tradeoffs for all $k \geq 3$. The techniques used in this paper are quite different and a vast generalization of the techniques used in~\cite{deep2021space}. The proposed improvements in~\cite{deep2021space} for $k$-reachability are already captured in this work and in many cases, surpass the ones from~\cite{deep2021space}.

\introparagraph{Our Contribution}
Our key contribution is a general algorithmic framework for obtaining space-time tradeoffs for any CQAP. Our framework builds upon the \textsf{PANDA} algorithm~\cite{DBLP:conf/pods/Khamis0S17} and tree decomposition techniques~\cite{gottlob2014treewidth,Marx13}. Given any CQAP, it calculates a tradeoff that can find the best possible time for a given space budget. To achieve this goal, we need two key technical contributions. 

First, we introduce the novel notion of \emph{partially-materialized tree decompositions} (PMTDs) that allow us to capture different possible materialization strategies on a given tree decomposition (\autoref{sec:pmtd}). At a high level, a PMTD augments a tree decomposition with information on which bags should be materialized and which should be computed online.  To use a PMTD, we propose a variant of the Yannakakis algorithm (\autoref{Yannakakis}) such that during the online phase we incur only the cost of visiting the non-materialized bags. 

The second key ingredient is an extension of the \textsf{PANDA} algorithm~\cite{DBLP:conf/pods/Khamis0S17} that computes a disjunctive rule in two phases. The computation of a disjunctive rule allows placing an answer to any of the targets in the head of the rule. A key technical component in the \textsf{PANDA} algorithm is the notion of a \emph{Shannon-flow inequality}. For any Shannon-flow inequality, one can construct a proof sequence that has a direct correspondence with relational operators. Consequently, a proof sequence can be transformed into a join algorithm.
The disjunctive rules we consider are computed in two phases: in the first phase (preprocessing), we can place an answer only to targets that will be materialized during the preprocessing phase. In the second phase (online), we place an answer to the remaining targets. We call these rules {\em 2-phase disjunctive rules} (\autoref{sec:TPDR}). To achieve this 2-phase computation, we introduce a type of Shannon-flow inequalities, called \emph{joint Shannon-flow inequalities} (\autoref{sec:2pd}), such that each inequality gives rise to a space-time tradeoff. The joint Shannon-flow inequality generates two parallel proof sequences, one proof sequence for the preprocessing phase and another proof sequence for the answering phase. This transformation allows us to use the \textsf{PANDA} algorithm as a blackbox on each of the proof sequences independently and is instrumental in achieving  space-time tradeoffs. 

We demonstrate the versatility of our framework by recovering state-of-the-art space-time tradeoffs for Boolean CQAPs, $2$-\setdisj as well as its generalization $k$-\setdisj, and $k$-\textsf{Reachability} (\autoref{sec:results}). We also apply our framework to the previously unstudied setting of space-time tradeoffs (in the static setting) for access patterns over a subset of \emph{hierarchical queries}, a fragment of acyclic CQs that is of great interest~\cite{dalvi2009probabilistic, berkholz2017answering, kara19, idris2017dynamic, deep2021enumeration, bonifati2020analytical}. Interestingly, we can recover strategies that are very similar to how specialized enumeration algorithms with provable guarantees work for this class of CQs~\cite{kara19, deep2021enumeration}. More importantly, we improve state-of-the-art tradeoffs. Our most interesting finding is that we can obtain complex tradeoffs for $k$-\textsf{Reachability}  that exhibit different behavior for different regimes of $S$. For the $3$-\textsf{Reachability} problem, we show how to improve the tradeoff for a significant part of the spectrum. For the $4$-\textsf{Reachability} problem, we are able to show (via a rather involved analysis) that the space-time tradeoff can be improved \emph{everywhere} when compared to the conjectured optimal! These results falsify the proposed optimal tradeoff of $S \cdot T^{2/(k-1)} = \polyO(|E|^2)$ for $k$-\textsf{Reachability} for regimes that are even larger than what was shown in~\cite{deep2021space}.

\smallskip
\introparagraph{Organization} We introduce the basic terminology and problem definition in~\autoref{sec:background}. In~\autoref{sec:pmtd}, we describe the augmented tree decompositions that are necessary for our framework.~\autoref{sec:framework} introduces the general framework while~\autoref{sec:2pd} presents the algorithms used in our framework. We present the applications of the framework in~\autoref{sec:results}. The related work is described in~\autoref{sec:related} and we conclude with a list of open problems in~\autoref{sec:conclusion}.

\section{Background} 
\label{sec:background}

 \introparagraph{Conjunctive Query} We associate a Conjunctive Query (CQ) $\varphi$ with a hypergraph $\mH = ([n], \edges)$, where $[n] = \{1, \dots, n\}$ and $\edges \subseteq 2^{[n]}$. The body of the query has atoms $R_F$, where $F \in \edges$. To each node $i \in [n]$, we associate a variable $x_i$. The CQ is then
 $$
 \varphi(\bx_H) \leftarrow \bigwedge_{F \in \edges} R_F(\bx_F),
 $$
 where $\bx_I$ denotes the tuple $(x_i)_{i \in I}$ for any $I \subseteq [n]$.
The variables in $\bx_H$ are called the {\em head variables} of the CQ.  The CQ is \textit{full} if $H = [n]$ and \textit{Boolean} if $H = \emptyset$. We use $\varphi$ to denote the output of the CQ $\varphi$.
 
 \introparagraph{Degree Constraints}
  A \textit{degree constraint} is a triple $(X, Y, N_{Y|X})$ where $X \subset Y \subseteq [n]$ and $N_{Y|X}$ is a natural number. A relation $R_F$ is said to \textit{guard} the degree constraint $(X, Y, N_{Y|X})$ if $X \subset Y \subseteq F$ and for every tuple $\bt_X$ (over $X$),
 $ \max_{\bt_X} \deg_F (Y|\bt_X) \leq N_{Y|X},$
 where $\deg_F(Y|\bt_X) = \left|\Pi_Y(\sigma_{X=\bt_X} R_F)\right|$. We use $\DC$ to denote a set of degree constraints and say that $\DC$ is guarded by a database instance $\mD$ if every $(X, Y, N_{Y|X}) \in \DC$ is guarded by some relation in $\mD$. A degree constraint $(X, Y, N_{Y|X})$ is a \textit{cardinality constraint} if $X = \emptyset$. Throughout this work, we make the following assumptions on $\DC$ guarded by a database instance $\mD$:
 \begin{itemize}
     \item \textit{(best constraints assumption)} w.l.o.g, for any $X \subset Y \subseteq [n]$,
     there is at most one $(X, Y, N_{Y|X}) \in \DC$. This assumption can be maintained by only keeping the minimum $N_{Y|X}$ if there is more than one.
     \item for every relation $R_F \in \mD$, there is a \textit{cardinality constraint} $(\emptyset, F, |R_F| \defeq N_{F|\emptyset}) \in \DC$. The \textit{size} of the database $\mD$ is denoted as $|\mD| \defeq \max_{R_F \in \mD} |R_F|$.
 \end{itemize}
 In this work, we use degree constraints to measure data complexity. All logs are in base $2$, unless otherwise stated.

\subsection{CQs with Access Patterns}

We define CQs with access patterns following the definition from~\cite{CQAP}:

\begin{definition}[CQ with access patterns] 
A Conjunctive Query with {\em Access Patterns} (CQAP) is an expression of the form 
$$ 
\varphi(\bx_H \mid \bx_A) \leftarrow \bigwedge_{F \in \edges} R_F(\bx_F),
$$
where $A \subseteq [n]$ is called the {\em access pattern} of the query.
\end{definition}

The access pattern tells us how a user accesses the result of the CQ. In particular, the user will provide an instance of a relation $Q_A(\bx_A)$, which we call an {\em access request}. The task is then to return the result of the following CQ, denoted as $\varphi$, where
$$ \varphi(\bx_H) \leftarrow Q_A(\bx_A) \wedge \bigwedge_{F \in \edges} R_F(\bx_F).$$ 
We call $\varphi$ the \textit{access CQ}. The most natural access request is one where $|Q_A|=1$; in other words, the user provides only one fixed value for every variable $x_i, i \in A$. This can be thought of as using the CQ result as an index with search key $\bx_A$. By allowing the access request $Q_A$ to consist of more tuples, we can capture other scenarios. For example, one can take a stream of access requests of size $1$ and batch them together to obtain a (possibly faster) answer for all of them at once. Prior work~\cite{deep2018compressed,CQAP} has only considered the case where $|Q_A|=1$. 

\eat{
\introparagraph{Discussion} In~\cite{CQAP}, the authors define the head of a CQAP to be $ \varphi(\bx_{H \cup A})$ instead. This makes no difference when $|Q_A|=1$, but is a more restrictive definition when $|Q_A| > 1$. 
}

\subsection{Problem Statement} \label{ps}

Let $\varphi(\bx_H \mid \bx_A)$ be a CQAP under degree constraints $\DC$ guarded by the input relations. In addition, we denote by $\AC$ another set of degree constraints known in prior, guarded by any access request $Q_A$. Similar to $\DC$, we that assume there is a cardinality constraint $(\emptyset, A, |Q_A| = N_{A|\emptyset}) \in \AC$ guarded by $Q_A$. For example, the case where $|Q_A| = 1$ can be interpreted as a cardinality constraint $(\emptyset, A, 1) \in \AC$. 
Given a database instance $\mD$ guarding $\DC$, our goal is to construct a data structure, such that we can answer any access request as fast as possible. More formally, we split query processing into two phases:
\begin{description}
 \item [{\em Preprocessing phase:}] it constructs a data structure in space $\polyO(\bspace)$\footnote{The notation $\polyO$ hides a polylogarithmic factor in $|\mD|$.}. The overall space cost takes the form $\polyO(\bspace + |\mD|)$, where $\bspace$ is called the \textit{intrinsic space cost} of the data structure and $|\mD|$ is the (unavoidable) space cost for storing the database.
     \item [ {\em Online phase:}]
     given an access request $Q_A$ (guarding $\AC$), it returns the results of the access CQ $\varphi$ using the data structure built in the preprocessing phase. The (worst-case) answering time is then $\polyO(\btime + |Q_A|) +  O(|\varphi|)$, where $\btime$ is called the \textit{intrinsic time cost} and $|Q_A| + |\varphi|$ is the (unavoidable) time cost of reading the access request $Q_A$ and enumerating the output. For the Boolean case and when $|Q_A| = 1$, the answering time simply becomes $\polyO(\btime)$. 
 \end{description}
In this work, we study the tradeoffs between the two intrinsic quantities, $\bspace$ and $\btime$, which we will call an {\em intrinsic tradeoff}.
 At one extreme, the algorithm stores nothing, thus $\bspace = O(1)$, and we answer each access request from scratch. At the other extreme, the algorithm stores the results of the CQ $\varphi_M(\bx_{H \cup A}) \leftarrow \bigwedge_{F \in \edges} R_F(\bx_F)$
 as a hash table with index key $\bx_A$. For any access request $Q_A$, we simply evaluate the query $\varphi(\bx_H) \leftarrow Q_A \wedge \varphi_M$ in the online phase by probing each tuple of $Q_A$ in the hash table. If $H \supseteq A$, then any access request can be answered in (instance-optimal) time $O(|Q_A| + |\varphi|)$, in which case $\btime = O(1)$.
 
 


\begin{example}[$k$-Set Disjointness]
In this problem, we are given sets $S_1, \dots, S_m$ with elements drawn from the same universe $U$. Each access request asks whether the intersection between $k$ sets is empty or not. By encoding the family of sets as a binary relation $R(y, x)$ such that element $y$ belongs to set $x$, we can express the problem as the following CQAP:
\begin{align}\label{k-disjoint-boolean}
    \varphi(  \mid \bx_{[k]}) \leftarrow \bigwedge_{i \in [k]} R(y, x_i).
\end{align}
If we also want to enumerate the elements in their intersection, we would instead use the non-Boolean version:
\begin{align}\label{k-disjoint-nonboolean}
    \varphi( y  \mid \bx_{[k]}) \leftarrow \bigwedge_{i \in [k]} R(y, x_i).
\end{align}

\end{example}

\begin{example}[$k$-Reachability]\label{example:k-reachability}
Given a direct graph $G$ , the $k$-reachability problem asks, given a pair vertices $(u, v)$, to check whether they are connected by a path of length $k$. Representing the graph as a binary relation $R(x, y)$, we can model this problem through the following CQAP (the $k$-path query):
$$ \phi_k(  \mid x_1, x_{k+1}) \leftarrow \bigwedge_{i \in [k]} R(x_i, x_{i+1}).$$
We can also check whether there is a path of length at most $k$ by combining
the results of $k$ such queries (one for each $1, \dots, k$).
\end{example}


In this work, we focus on the CQAP such that $H \supseteq A$. If we are given a CQAP where $H \nsupseteq A$, we replace the head of the CQAP with $\varphi(\bx_{H \cup A} \mid \bx_A)$, and simply project on the desired results in the end.

\section{Partially Materialized Tree Decompositions}
\label{sec:pmtd}

In this section, we introduce a type of tree decomposition that augments a decomposition with information about what bags we want to materialize.

\begin{definition}[Tree Decomposition] A {\em tree decomposition} of a hypergraph $\mH = ([n], \edges)$ is a pair 
    $(\htree, \chi)$ where $(i)$ $\htree$ is an undirected tree, and $(ii)$ $\chi: V(\htree) \rightarrow 2^{[n]}$ is a mapping that assigns to every node $t \in V(\htree)$ a subset of $[n]$,  called the {\em bag} of $t$, such that 
    	\begin{enumerate}
    	    \item [(1)] For every hyperedge $F \in \edges$,  the set $F$ is contained in some bag; and
    	    \item [(2)] For each vertex $x \in [n]$, the set of nodes $\{t \mid x \in \chi(t) \}$ forms a (non-empty) connected subtree of $\htree$.
    	\end{enumerate}
\end{definition}

Take a tree decomposition $(\htree, \chi)$ and a node $r \in V(\htree)$. We define $\textsf{TOP}_r(x)$ as the highest node in $\htree$ containing $x$ in its bag if we root the tree at $r$. We now say that $(\htree, \chi)$ is \textit{free-connex w.r.t. $r $} if for any $x \in H$ and $y \in [n] \setminus H$, $\textsf{TOP}_r(y)$ is not an ancestor of $\textsf{TOP}_r(x)$~\cite{Secure}. We say that  $(\htree, \chi)$ is free-connex if it is free-connex w.r.t. some  $r \in V(\htree)$.


  We can now introduce our key concept of a partially materialized tree decomposition, tailored for CQAPs. Let $\varphi( \bx_H \mid \bx_A)$ be a CQAP such that $A \subseteq H$. Let $\mH$ be the hypergraph associated with $\varphi(\bx_H)$, the access CQ.

 \begin{definition}[PMTD]
 A {\em Partially Materialized Tree Decomposition (PMTD)} of the CQAP $\varphi( \bx_H \mid \bx_A)$ with $H \supseteq A$ is a tuple $(\htree,\chi, M, r)$ such that the following properties hold:
\begin{enumerate}
\item $(\htree, \chi)$ is a free-connex tree decomposition of $\mH$ w.r.t. node $r$, called the root ; and
\item  $A \subseteq \chi(r)$ ; and
\item  $M \subseteq V(\htree)$ such that whenever $t \in M$ then all the nodes of its subtree (w.r.t. orienting the tree away from $r$) are in $M$.
\end{enumerate}
\end{definition}


Given a PMTD $(\htree,\chi, M, r)$, we call $M$ the {\em materialization set}. We also associate with each node $t \in V(\htree)$ a {\em view} with variables $\bx_{\nu(t)}$, where the mapping $\nu: V(\htree) \rightarrow 2^{[n]}$ is defined as follows. If the node $t \notin M$, then $\nu(t) \defeq \chi(t)$ and the view is of the form $T_{\nu(t)}(\bx_{\nu(t)})$, called a \textit{$T$-view}. Otherwise, $t \in M$. If $t = r \in M$, define $\nu(r) \defeq \chi(t) \cap H$. Let $p$ be the parent node of a non-root node $t \in M$ and define
 \begin{align*}
      \nu(t) & \defeq \begin{cases} 
      \chi(t) \cap (H \cup \chi(p)) 
      & \text { if } p \notin M  \\
      \chi(t) \cap H & \text { if } p \in M \text { and } \chi(t) \cap H \nsubseteq \chi(p) \cap H  \\
      \emptyset & \text { if } p \in M \text { and } \chi(t) \cap H \subseteq \chi(p) \cap H.
       \end{cases}
 \end{align*}
 %
The view (for each $t \in M$) then is of the form $S_{\nu(t)}(\bx_{\nu(t)})$, called the \textit{$S$-view}. This definition of $S$-views corresponds to running a bottom-up semijoin-reduce pass of the Yannakakis algorithm in the materialization set $M$ of the free-connex tree decomposition $(\htree, \chi)$. Indeed, any variables in $\chi(t) \setminus \nu(t)$ are safely projected out after the semijoin-reduce. 
 
 On a high level, $M$ specifies the type of views associated with each bag ($S$-view or $T$-view), and $\nu(\cdot)$ pinpoints the schema of that view (possibly empty). A PMTD appoints its $S$-views to be materialized in the preprocessing phase and its $T$-views to be computed in the online phase. In the case where $M = \emptyset$, every view in the decomposition is obtained in the online phase. When $H = A$ or $H = [n]$, the free-connex property does not put any additional restrictions on the tree decompositions for a PMTD.

\begin{example}\label{example:3path}
We use the CQAP for 3-reachability as an example: 
$$\phi_3(x_1, x_4 \mid x_1, x_4) \leftarrow R_{1}(x_1, x_2) \wedge R_{2}(x_2, x_3) \wedge R_{3}(x_3, x_4).$$
Here, $(x_1, x_4)$ is the access pattern.~\autoref{fig:cfhw} shows three PMTDs for the above query, along with the associated views of each bag in each PMTD. The leftmost PMTD has an empty materialization set. The middle PMTD materializes the bag $\{x_1, x_2, x_3\}$ but the associated view $S_{13}$ projects out $x_2$. The rightmost PMTD materializes the only bag $\{x_1, x_2, x_3, x_4\}$ but the view $S_{14}$ keeps only the variables $x_1, x_4$. 
\end{example}
 
\introparagraph{Redundancy \& Domination}
  We say that a tree decomposition is {\em non-redundant} if no bag is a subset of another bag. We say that a tree decomposition $(\htree_1, \chi_1)$ is {{\em dominated}} by another tree decomposition $(\htree_2, \chi_2)$ if every bag of $(\htree_1, \chi_1)$ is a subset of some bag of $(\htree_2, \chi_2)$. Here, we will generalize both notions to PMTDs.
  
  \begin{definition}[PMTD Redundancy]
  A PMTD $(\htree,\chi, M, r)$ is \textit{non-redundant} if $(1)$ for $t \in M$, $\nu(t) \neq \emptyset$ and no $\nu(t)$ is a subset of another; and $(2)$ for $t \notin M$, no $\nu(t)$ is a subset of another.
 \end{definition}

\begin{definition}[PMTD Domination]
 A PMTD $(\htree_1,\chi_1, M_1, r_1)$ is dominated by another PMTD $(\htree_2,\chi_2, M_2, r_2)$ if $(1)$ for every node $t_1 \in M_1$, there is some node $t_2 \in M_2$ such that $\nu(t_1) \subseteq \nu(t_2)$, and $(2)$ for every node $t_1 \in V(\htree_1) \setminus M_1$, there is some node $t_2 \in V(\htree_2) \setminus M_2$ such that $\nu(t_1) \subseteq \nu(t_2)$.
\end{definition}

For PMTDs, both redundancy and domination are defined using the materialization set and views instead of the bags. For PMTDs with $M = \emptyset$, both PMTD redundancy and domination become equivalent to the standard definition. 

\begin{example}
Continuing Example~\ref{example:3path}, suppose we consider a PMTD with that takes the same tree decomposition as the left PMTD, but with both bags in the materialization set. The $S$-view associated with the root bag is $S_{14}$, and $S_{\emptyset}$ for the child bag; thus, this PMTD is redundant. Moreover, suppose we consider a PMTD with one bag $\{x_1, x_2, x_3,x_4\}$ which is the root, but is not in $M$. The $T$-view associated with this bag is $T_{1234}$; thus, this PMTD dominates the left PMTD in~\autoref{fig:cfhw}. On the other hand, all PMTDs in~\autoref{fig:cfhw} are non-redundant and non-dominant to each other.
\end{example}

 As we later suggest in our general framework, we {mostly} focus on sets of non-redundant and non-dominant PMTDs. Note that a non-redundant PMTD  $(\htree,\chi, M, r)$ satisfies $\nu(t) \neq \emptyset$, for any $t \in V(\htree)$, thus we can safely assume that all views are non-empty.

	\begin{figure}[t]
	    \begin{subfigure}{0.3\linewidth}
					\centering
				\scalebox{1}{\begin{tikzpicture}
				\tikzset{edge/.style = {->,> = latex'},
					vertex/.style={circle, thick, minimum size=5mm}}
				\def\x{0.25}
				
				\begin{scope}[fill opacity=1]
				
				\draw[] (0,-3.5) ellipse (0.8cm and 0.33cm) node {\small \small ${x_1, x_3, x_4}$};
				\node[vertex]  at (1.2,-3.5) {$T_{134}$};				
				\draw[] (0,-5) ellipse (0.8cm and 0.33cm) node {\small \small ${x_1, x_2, x_3}$};						
				\node[vertex]  at (1.2,-5) {$T_{123}$};								
				\draw[edge] (0,-3.83) -- (0,-4.65);			
				\end{scope}	
				\end{tikzpicture} 
				}
		\end{subfigure}
		\begin{subfigure}{0.3\linewidth}
					\centering
			\scalebox{1}{\begin{tikzpicture}
				\tikzset{edge/.style = {->,> = latex'},
					vertex/.style={circle, thick, minimum size=5mm}}
				\def\x{0.25}
				
				\begin{scope}[fill opacity=1]
				
				\draw[] (5,-3.5) ellipse (0.8cm and 0.33cm) node {\small \small ${x_1, x_3, x_4}$};
				\node[vertex]  at (6.2,-3.5) {$T_{134}$};				
				\draw[fill=black!10] (5,-5) ellipse (0.8cm and 0.33cm) node {\small \small ${x_1, x_2, x_3}$};						
				\node[vertex]  at (6.2,-5) {$S_{13}$};								
				\draw[edge] (5,-3.83) -- (5,-4.65);			
				\end{scope}	
				\end{tikzpicture} 
				}
		\end{subfigure}
		\begin{subfigure}{0.3\linewidth}
					\centering
								\scalebox{1}{\begin{tikzpicture}
				\tikzset{edge/.style = {->,> = latex'},
					vertex/.style={circle, thick, minimum size=5mm}}
				\def\x{0.25}
				
				\begin{scope}[fill opacity=1]
				
				\draw[fill=black!10] (0,-3.5) ellipse (1cm and 0.33cm) node {\small \small ${x_1, x_2, x_3, x_4}$};
				\node[vertex]  at (1.4,-3.5) {$S_{14}$};	\node[vertex]  at (1.4,-5.1) { };	
				\end{scope}	
				\end{tikzpicture} 
				}
		\end{subfigure}
		
		\caption{Three PMTDs for the 3-reachability CQAP. The materialized nodes are shaded and labeled as $S$-views.}
		\label{fig:cfhw}	
	\end{figure}
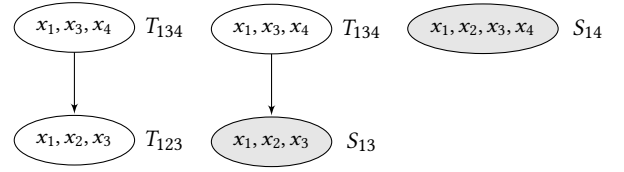

 \subsection{Online Yannakakis for PMTDs} 
 \label{Yannakakis}
 
We introduce an adaptation of the Yannakakis algorithm~\cite{Yannakakis} for a non-redundant PMTD (so no empty views), called \textit{Online Yannakakis}. Recall that for a non-redundant PMTD, the $S$-views, one for each $t \in M$, are stored in the preprocessing phase, while the $T$-views, one for each $t \in \htree \setminus M$, and the access request $Q_A$ are accessible only in the online phase. 
  
 \begin{theorem}\label{lem:Yannakakis}
 Consider a PMTD $(\htree, \chi, M, r)$ and its view $\nu(\cdot)$. Given $S$-views, we can preprocess them in space linear in their size such that we can compute the free-connex acyclic CQ 
  \begin{align}\label{acyclicjoin}
    \psi (\bx_H) \leftarrow Q_A \wedge \bigwedge_{t \in M} S_{\nu(t)}  \wedge \bigwedge_{t \in V(\htree) \setminus M} T_{\nu(t)}
 \end{align}
for any $T$-view and $Q_A$ in time $O(\max_{t \in V(\htree) \setminus M} |T_{\nu(t)}| + |Q_A| +  |\psi|)$, where $|\psi|$ is the output size of \eqref{acyclicjoin}.
 \end{theorem}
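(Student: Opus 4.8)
The plan is to set up a rooted version of the PMTD and run a two-pass Yannakakis-style algorithm — a bottom-up semijoin-reduce pass followed by a top-down pass — but carefully accounting for which views are available in which phase and showing that the free-connex property lets us achieve the claimed time bound. First I would orient $\htree$ away from the root $r$ and observe that \eqref{acyclicjoin} is a join over the views $\{S_{\nu(t)} : t \in M\}$, $\{T_{\nu(t)} : t \in V(\htree)\setminus M\}$, and $Q_A$. The schemas $\nu(t)$ do not literally form a tree decomposition of this join on their own; the first technical step is to argue that they do form a valid (free-connex) join tree once we attach $Q_A$ to the root, because the connectedness property of $(\htree,\chi)$ together with the way $\nu(\cdot)$ was defined (it only drops a variable $x$ from $\chi(t)$ when $x$ has already "left" the relevant part of the tree, i.e. $x \notin H \cup \chi(p)$ or $x$ already appears in an ancestor's $S$-view) ensures the running-intersection property still holds for the $\nu$-schemas. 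I would formalize this as a lemma: the $\nu$-schemas, with the same tree shape, satisfy the tree-decomposition axioms for the hypergraph of \eqref{acyclicjoin}, and moreover $A \subseteq \nu(r)$ so it is free-connex with respect to $r$ and $Q_A$ can be added as a leaf (or folded into the root bag).

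Next, the \textbf{preprocessing} part: given the $S$-views, I would run a single bottom-up semijoin-reduce pass \emph{restricted to the materialized subtree(s)} — since $M$ is closed under taking subtrees away from $r$, $M$ is a union of complete subtrees hanging off their topmost nodes. Within each such subtree we can fully Yannakakis-reduce the $S$-views against each other in time and space linear in their total size, and build hash indices on each $S_{\nu(t)}$ keyed by $\nu(t) \cap \nu(p)$ (the join key with its parent). This is the "space linear in their size" preprocessing. Crucially this pass does \emph{not} touch $Q_A$ or the $T$-views, so it is legitimately offline. The point of doing the reduce now is that in the online phase we can treat each topmost materialized node $t$ as presenting an already-consistent relation over $\nu(t)$ that behaves like a single leaf hanging off its parent $p \notin M$.

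For the \textbf{online} part, given $Q_A$ and the $T$-views, I would run the standard two-pass Yannakakis algorithm on the join tree of the $\nu$-schemas (with $Q_A$ at the root): a bottom-up semijoin pass that reduces every view against its parent, then a top-down semijoin pass, then — using free-connexity w.r.t. $r$ — compute the output $\psi(\bx_H)$ by a sequence of joins proceeding from $r$ outward through exactly the head-variable frontier, so that every intermediate result is a subset of the final output $\psi$ and the total output-sensitive cost is $O(|\psi|)$. The semijoin passes cost $O(\sum_{t \notin M}|T_{\nu(t)}| + |Q_A| + (\text{size of reduced }S\text{-views}))$; since each reduced $S$-view is a subset of the original it can be charged to preprocessing space, and one argues the materialized part contributes nothing to the \emph{online} running time beyond what is already hashed — so the online cost collapses to $O(\max_{t \notin M}|T_{\nu(t)}| + |Q_A| + |\psi|)$ after noting the semijoin-pass terms over the $T$-views are each dominated by the max. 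I would be a little careful here: to get $\max$ rather than $\sum$ one needs that each $T$-view is semijoin-reduced against its neighbors so its contribution is bounded, and that the number of views is a constant (data complexity), which is standard.

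The main obstacle I expect is the first lemma — proving that the $\nu$-schemas genuinely form a free-connex join tree for \eqref{acyclicjoin}, i.e. that projecting variables out of bags according to the three-case definition of $\nu(t)$ never breaks the running-intersection property, and that the resulting tree is free-connex with respect to $r$ so that the output can be assembled in time $O(|\psi|)$. This requires a careful case analysis tracking, for each variable $x$, the set of nodes whose $\nu$-schema contains $x$: one must check it stays connected, which hinges on the interplay between $\chi$-connectedness, the free-connex condition ("$\textsf{TOP}_r(y)$ not an ancestor of $\textsf{TOP}_r(x)$" for $y \notin H$, $x \in H$), and the $M$-closure property. Once that structural lemma is in hand, the algorithmic part is a routine instantiation of Yannakakis with indices, and the time bound follows from the usual semijoin-cost and output-sensitive-join accounting.
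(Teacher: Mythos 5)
Your proposal follows essentially the same route as the paper's own proof: preprocess the $S$-views by a bottom-up semijoin-reduce within the materialized subtrees together with hash indices keyed on the variables shared with the parent, then online run a bottom-up semijoin pass that touches only the $T$-views and $Q_A$ (probing the $S$-view indices), followed by a top-down, output-sensitive join from $r$ whose cost is $O(|\psi|)$ thanks to free-connexity. The only notable difference is one of bookkeeping: where you plan a running-intersection lemma for the $\nu$-schemas, the paper instead proves (\autoref{lem:reducedTree}) that the tree remaining after the bottom-up pass contains only head variables, which is the fact it actually uses to bound the top-down pass; your extra top-down semijoin pass is harmless but unnecessary.
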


Note that the time cost has no dependence on the size of $S$-views, because throughout Online Yannakakis, $S$-views will be only used {for hash probing} in semijoin operations. We defer the details of the algorithm and the proof of its correctness to Appendix~\ref{sec:pmtd:appendix}.

 \section{General Framework}
\label{sec:framework}

Consider a CQAP $\varphi(\bx_H \mid \bx_A)  \leftarrow \bigwedge_{F \in \edges} R_F(\bx_F) $ with $H \supseteq A$. Recall that our goal is to find the best space-time tradeoffs under degree constraints $\DC$ (guarded by input relations) and $\AC$ (guarded by any access requests $Q_A$), as specified in \autoref{ps}. Our main algorithm is parameterized by:
\begin{itemize}
\item  $\mP = \{P_i \}_{i \in I}$, a (finite) indexed set of non-redundant and non-dominant PMTDs such that $P_i = (\htree_i, \chi_i, M_i, r_i)$ for every $i \in I$. Including all such PMTDs in $\mP$ (which are finite) will result in the best possible tradeoff. However, as we will see later, it is meaningful to consider smaller sets of PMTDs that result in more interpretable space-time tradeoffs.
\item $S$, the space budget.
\end{itemize}

 \subsection{2-Phase Disjunctive Rules} \label{sec:TPDR}
 
In this section, we define a specific type of disjunctive rule that will be necessary to acquire the $S$-views and $T$-views for PMTDs. We start by recalling the notion of a disjunctive rule. A disjunctive rule has the exact body of a CQ, while the head is a disjunction of output relations $T_B(\bx_B)$, which we call {{\em targets}}. Let $\sfBT \subseteq 2^{[n]}$ be a non-empty set, then a {\em disjunctive rule} $\rho$ takes the form:
 \begin{equation}\label{def:disjunctiveRule}
     \rho: \quad \bigvee_{B \in \sfBT} T_{B}(\bx_{B}) \leftarrow \bigwedge_{F \in \edges} R_F(\bx_F).
 \end{equation}
 Given a database instance $\mD$, a {{\em model}} of $\rho$ is a tuple $(T_B)_{B \in \sfBT}$ of relations, one for each target, such that the logical implication indicated by \eqref{def:disjunctiveRule} holds. More precisely, for any tuple $\ba$ that satisfies the body, there is a target $T_B \in (T_B)_{B \in \sfB}$ such that $\Pi_{B}(\ba) \in T_B$. The {{\em size}} of a model is defined as the maximum size of its output relations and the {\em output size} of a disjunctive rule $\rho$, denoted as $|\rho|$, is defined as the minimum size over all models.

 \smallskip   
 For our purposes, we define a type of disjunctive rules, called \textit{2-phase disjunctive rules}.
 \begin{definition}[2-phase Disjunctive Rules] A {{\em 2-phase disjunctive rule}} $\rho$ defined by a CQAP $\varphi(\bx_H \mid \bx_A)$ is a single disjunctive rule that takes the body of the access CQ $\varphi$, while the head has two sets of output relations. In other words, $\rho$ takes the form
     \begin{equation}\label{def:partitionDisjunctiveRule}
          \rho: \quad \bigvee_{B \in \sfBS} S_{B}(\bx_{B}) \vee \bigvee_{B \in \sfBT} T_{B}(\bx_{B}) \leftarrow Q_A(\bx_A) \wedge \bigwedge_{F \in \edges} R_F(\bx_F), 
     \end{equation}
     where $\sfBS, \sfBT \subseteq 2^{[n]}$ and at most one can be empty. A \textit{model} of $\rho$ thus consists of two sets of output relations,
     i.e. the \textit{$S$-targets} $(S_{B})_{B \in \sfBS}$ and the \textit{$T$-targets} $(T_B)_{B \in \sfBT}$.  
 \end{definition}
 
 As the name suggests, a model of a 2-phase disjunctive rule $\rho$ is computed in two phases, the preprocessing and online phase:
  \begin{description}
     \item [Preprocessing phase:] we obtain the $S$-targets $(S_B)_{B \in \sfBS}$ using a {\textit{preprocessing disjunctive rule}}
  \begin{align} \label{rule:preprocess}
     \rho_S & : \quad \bigvee_{B \in \sfBS} S_{B}(\bx_{B}) \leftarrow \bigwedge_{F \in \edges} R_F(\bx_F),
 \end{align}
  The space cost for storing the $S$-targets is $\polyO(\bspace_\rho)$, and the overall space cost is $\polyO(\bspace_\rho + |\mD|)$. The preprocessing phase has no knowledge of $Q_A$ except for the degree constraints $\AC$, so as to explicitly force the $S$-targets to be universal for any instance of access request.
     \item [Online phase:] given an access request $Q_A$ (under $\AC$), we obtain the $T$-targets $(T_B)_{B \in \sfBT}$ using an {\textit{online disjunctive rule}}
 \begin{align} \label{rule:online}
       \rho_T & : \quad \bigvee_{B \in \sfBT} T_{B}(\bx_{B}) \leftarrow Q_A(\bx_A) \wedge \bigwedge_{F \in \edges} R_F(\bx_F)
 \end{align}
     in time and space $\polyO(\btime_\rho)$. The overall time is $\polyO(\btime_\rho + |Q_A|)$.
   \end{description}
  
If $\sfBS = \emptyset$, then the model is computed from scratch in the online phase (and vice versa).  
As in \autoref{ps}, our focus is on analyzing the space-time tradeoffs between the two intrinsic quantities, $\bspace_\rho$ and $\btime_\rho$.

For the next part, assume that we have a 2-phase algorithm (called $\twoPP$) that, given a space budget $S$, has a preprocessing procedure $\twoPPp$ using space $S_\rho \leq S$ and an online procedure $\twoPPo$ using time (and space) $T_\rho$. We will discuss this algorithm in the next section.


 \subsection{Preprocessing Phase}
 
As a first step, we construct from $\mP$ a set of 2-phase disjunctive rules as follows. Let $\nu_i$ be the mapping for associated views of $P_i$. Let us define the cartesian product $\mathbf{A} = \times_{i \in I} \{V(\htree_i)\}$ and let $M = |\mathbf{A}|$.  Informally, every element $\mathbf{a} \in \mathbf{A}$ picks one view from every PMTD in the indexed set.  For every $\mathbf{a} \in \mathbf{A}$, we construct the following 2-phase disjunctive rule {(recall that $M_i$ is the materialization set of PMTD $(\htree_i, \chi_i, M_i, r_i)$)}:
 \begin{align*}
      \bigvee_{\mathbf{a}_i \in M_i} S_{\nu_i(\mathbf{a}_i)}(\bx_{\nu_i(\mathbf{a}_i)}) \vee \bigvee_{\mathbf{a}_i \notin M_i}  T_{\nu_i(\mathbf{a}_i)}& (\bx_{\nu_i(\mathbf{a}_i)})  \leftarrow 
     Q_A(\bx_A) \wedge \bigwedge_{F \in \edges} R_F(\bx_F)
 \end{align*}

The body of the rule is the same independent of $\mathbf{a} \in \mathbf{A}$. The head of the rule introduces an $S$-target whenever the corresponding bag is in the materialization set of the PMTD (using the corresponding view); otherwise, it introduces a $T$-target. There are exactly $M$ 2-phase disjunctive rules constructed from the given set of PMTDs, which is a query-complexity quantity.

\begin{example}
Continuing our running example, consider the three PMTDs in Figure~\ref{fig:cfhw}. These result in four 2-phase disjunctive rules (after removing redundant $T$-targets and $S$-targets):
\begin{align*}
T_{134}(x_1, x_3, x_4) \vee S_{14}(x_1, x_4) & \leftarrow  \textsf{body}  \\
T_{134}(x_1, x_3, x_4) \vee S_{13}(x_1, x_3)  \vee S_{14}(x_1, x_4) & \leftarrow \textsf{body}  \\
T_{123}(x_1, x_2, x_3) \vee T_{134}(x_1, x_3, x_4) \vee S_{14}(x_1, x_4) & \leftarrow \textsf{body}  \\
T_{123}(x_1, x_2, x_3) \vee S_{13}(x_1, x_3) \vee S_{14}(x_1, x_4)  & \leftarrow \textsf{body} 
\end{align*}
where 
\begin{align*}
 \textsf{body} = Q_{14}(x_1, x_4)  \wedge R_{1}(x_1, x_2) \wedge R_{2}(x_2, x_3) \wedge R_{3}(x_3, x_4)
\end{align*}
%
\end{example} 
 
 
 For each 2-phase disjunctive rule $\rho_k$, where $k \in [M]$, we run $\twoPPp$ with the space budget $S$. $\twoPPp$ generates the $S$-targets for $\rho_k$. Next, we compute each $S$-view of a PMTD $P_i$ by unioning all $S$-targets with the same schema as the $S$-view (possibly from outputs of different disjunctive rules).
Then, we semijoin-reduce every $S$-view with the full join $\bowtie_{F \in \edges} R_F$. This semijoin-reduce can be accomplished by tentatively storing $\bowtie_{F \in \edges} R_F$ as an intermediate truth table and remove it after the semijoin-reduce of all $S$-views. This step guarantees that any tuple in a $S$-view participates in $\bowtie_{F \in \edges} R_F$. Finally, we preprocess the $S$-views as described in Theorem~\ref{lem:Yannakakis}.

%
 
 \subsection{Online Phase}
Recall that upon receiving an instance of access request $Q_A$, we need to return the results of the CQ, $\varphi(\bx_H)$. We obtain $\varphi(\bx_H)$ as follows. First, we apply $\twoPPo$ for every $\rho_k$ to get its $T$-targets (of size $\polyO(\btime_{\rho_k})$) in time $\polyO(\btime_{\rho_k} + |Q_A|)$.  Let $T_{\textsf{max}} = \max_{k \in [M]} T_{\rho_k}$.
 
 
 Next, we compute each $T$-view of a PMTD $P_i$ by unioning all $T$-targets with the same schema as the $T$-view (possibly from outputs of different disjunctive rules). We semijoin-reduce every $T$-view (of size $\polyO(T_{\rho})$) with every input relation and $Q_A$. Then, for every PMTD in $\{P_i\}_{i \in I}$, we compute the free-connex acyclic CQ
 \begin{align}\label{free-connex-i}
    {\psi_{i}(\bx_H)} \leftarrow Q_A \wedge \bigwedge_{t \in M_i} S_{\nu_i(t)}  \wedge \bigwedge_{t \in V_i(\htree_i) \setminus M_i} T_{\nu_i(t)}
 \end{align}
 by applying Online Yannakakis as described in Section~\ref{Yannakakis} in time $\polyO(T_{\textsf{max}}) + O(|Q_A| + |\phi_{i} \cap \varphi|)$.
We obtain the final result by unioning the outputs across all PMTDs in our set, {$\varphi = \bigcup_{i \in I} \psi_i$}. In total, we answer the access request $Q_A$ in time $\polyO(T_{\textsf{max}}  + |Q_A|) + O(|\varphi|)$.

\section{Constructing the Tradeoffs}
\label{sec:2pd}

Let $\rho$ be a 2-phase disjunctive rule taking the form \eqref{def:partitionDisjunctiveRule}, under degree constraints $\DC$ (guarded by input relations) and degree constraints $\AC$ (guarded by $Q_A$). In this section, we will discuss how we can obtain a model of $\rho$ in two phases using $\PANDA$, and the resulting space-time tradeoff. Due to limited space, we will keep the presentation informal and introduce the key ideas through an example. The full details and proofs are deferred to Appendix~\ref{sec:panda} and~\ref{sec:flow}. We will use the following rule as our running example, where $|R_1| = |R_2| = |\mD|$:
$$ T_{123} \vee S_{13} \leftarrow Q_{13}(x_1, x_3), R_{1}(x_1, x_2), R_{2}(x_2, x_3).$$
This rule is the only rule we obtain from considering two PMTDs for the 2-reachability query.
To compute a disjunctive rule, $\PANDA$ starts with a {\em Shannon-flow inequality}, which is an inequality over set functions $h: 2^{[n]} \rightarrow \bR_{+}$ that must hold for any set function that is a polymatroid\footnote{A polymatroid is a set function $h: 2^{[n]} \rightarrow \bR_{+}$ that is non-negative, monotone, and submodular, with $h(\emptyset)=0$.}. For our purposes, we need a {\em joint Shannon-flow inequality}, which holds over two set functions $\hS, \hT$ that must be polymatroids. Intuitively, $\hS$ governs the preprocessing phase, while $\hT$ governs the online phase. The joint Shannon-flow inequality for our example is:
  \begin{align*}
    \underbrace{\hS(1) + \hT( 2 | 1)}_{R_1} + \underbrace{\hS(3) + \hT(2| 3)}_{R_2} + 2 \underbrace{\hT(1 3)}_{Q_{13}} \geq \underbrace{\hS(1 3)}_{S_{13}} + 2 \underbrace{\hT(1 2 3)}_{T_{123}} 
\end{align*}
where $h(Y|X) =h(Y)-h(X)$. The right-hand side includes terms of $\hS$ that correspond to $S$-targets and terms of $\hT$ that correspond to $T$-targets. The left-hand side includes a term of $\hT$ that corresponds to the access request $Q_A$, and possibly terms of $\hS$ ($\hT$) that encode the degree constraints $\DC$ ($\DC \cup \AC$). More importantly, it contains terms that correlate the two polymatroids by splitting an input relation with attributes $Y$ into two parts, either $(i)$ $\hS(X) + \hT(Y|X)$, or  $(ii)$ $\hT(X) + \hS(Y|X)$, where $X \subseteq Y$. Intuitively, the first split materializes the heavy $X$-values and sends everything else to the online phase, while the second split preprocesses the light $X$-values and sends the heavy $X$-values to the online phase. In our example, relation $R_1$ is split into $\hS(1) + \hT( 2 | 1)$, and each part is sent to a different polymatroid. 
Using the coefficients of the above joint Shannon-flow inequality, we get the following intrinsic space-time tradeoff:
$$ S \cdot T^2 \cong |Q_{13}|^2 \cdot |\mD|^2$$
We will use the $\cong$ notation to mean that $S \cdot T^2 = \polyO(|Q_{13}|^2 \cdot |\mD|^2)$. Generally, we show (for a formal definition, see~\autoref{thm:main}):

\begin{theorem}[Informal] \label{thm:informal}
Every joint Shannon-flow inequality for a 2-phase disjunctive rule implies a space-time tradeoff computed by reading the coefficients of the inequality. 
\end{theorem}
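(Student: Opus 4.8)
The plan is to prove the formal version of this statement (\autoref{thm:main}) by exhibiting a reduction from a joint Shannon-flow inequality to the guarantees of the $\twoPP$ algorithm, using $\PANDA$ as a blackbox on each of the two generated proof sequences. First I would recall the classical $\PANDA$ machinery: given a (single) Shannon-flow inequality $\sum_F \lambda_F \cdot h(\text{terms for } R_F) \geq \sum_B \delta_B \cdot h(B)$ valid over all polymatroids, one can extract a \emph{proof sequence} — a sequence of elementary steps (submodularity, monotonicity, composition, decomposition) certifying the inequality — and this proof sequence translates mechanically into a sequence of relational operators (joins, projections, partitioning into heavy/light parts according to the degree constraints) that computes a model of the corresponding disjunctive rule with output size $\polyO$ of the bound read off from the coefficients. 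The key new move is to show that a \emph{joint} Shannon-flow inequality over the pair $(\hS,\hT)$ decomposes into \emph{two} ordinary Shannon-flow inequalities — one purely in $\hS$ (the preprocessing inequality) and one purely in $\hT$ (the online inequality) — linked only through the relation-splitting terms of type $(i)$ $\hS(X)+\hT(Y\mid X)$ and type $(ii)$ $\hT(X)+\hS(Y\mid X)$.

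The core of the argument is therefore the following structural lemma about joint Shannon-flow inequalities: every such inequality admits a \emph{joint proof sequence} that factors into a proof sequence $\proofseq_S$ for $\hS$ and a proof sequence $\proofseq_T$ for $\hT$, where the only interaction is that each split term licenses, in the $\hS$-sequence, the creation of a relation on $X$ (resp.\ $Y$), and correspondingly in the $\hT$-sequence the creation of the complementary relation on $Y$ conditioned on $X$ (resp.\ on $X$). Concretely, for a split of type $(i)$ applied to $R_F$ with $\vars{F} = Y$: in the preprocessing phase we partition $R_F$ by the \emph{heavy} $X$-values (those appearing with degree above a threshold $\theta$) — there are at most $|R_F|/\theta$ of them, giving a relation of size bounded by $\hS(X)$'s contribution — and materialize a relation on $X$; in the online phase we keep, for each $X$-value, the residual tuples (degree at most $\theta$), a relation on $Y$ of size bounded by $\theta$ times the projection, matching $\hT(Y\mid X)$. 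This is exactly the heavy/light dichotomy described in the excerpt. Feeding $\proofseq_S$ to $\PANDA$ (with the degree constraints $\DC$ and the $S$-targets as the right-hand side) yields the $S$-targets within space $\polyO(S_\rho)$ where $\log S_\rho$ equals the $\hS$-value read from the coefficients; feeding $\proofseq_T$ to $\PANDA$ (with $\DC \cup \AC$, using the already-partitioned light parts and $Q_A$ as inputs, and the $T$-targets as the right-hand side) yields the $T$-targets in time and space $\polyO(T_\rho)$. Duality of linear programming (the Shannon-flow LP and its dual, the information-theoretic relaxation of the worst-case output size) then gives that these bounds are the best achievable from the chosen inequality, establishing the tradeoff $\log S_\rho + c \cdot \log T_\rho \le (\text{coefficient-combination of } \log|\mD| \text{ and } \log|Q_A|)$ for the coefficients $c$ appearing on the targets.

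I expect the main obstacle to be precisely the factorization of the joint proof sequence — showing that a proof of the joint inequality can always be \emph{scheduled} so that all $\hS$-steps and all $\hT$-steps can be executed as two independent $\PANDA$ runs, with the split steps acting only as an interface that hands partitioned relations from the first run to the second. The subtlety is that a generic proof of the joint inequality might interleave $\hS$- and $\hT$-reasoning in a way that does not respect the phase boundary (e.g.\ using a $\hT$-term to justify an $\hS$-step); one must argue that submodularity/monotonicity steps never mix the two polymatroids (since $\hS$ and $\hT$ are formally distinct functions, any valid step is either entirely within $\hS$ or entirely within $\hT$), so the only cross-phase steps are the designated split and merge steps, and those are, by construction, one-directional (preprocessing produces, online consumes). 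Once this scheduling claim is in hand, correctness of the resulting model follows from the correctness of $\PANDA$ on each proof sequence plus the observation that the union of $S$-targets and $T$-targets forms a model of the original 2-phase rule, and the size/time bounds follow by reading coefficients exactly as in the single-rule case. The remaining bookkeeping — handling the $\polyO$ polylogarithmic factors from the geometric series of thresholds, and verifying that the semijoin-reduce steps in \autoref{lem:Yannakakis} do not inflate the costs — is routine.
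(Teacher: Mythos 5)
Your proposal follows essentially the same route as the paper: the joint inequality is split (by zeroing one polymatroid) into the two participating Shannon-flow inequalities, each gets its own proof sequence fed to $\PANDA$ as a blackbox, and the correlated terms $\hS(X)+\hT(Y\mid X)$ are realized by heavy/light (split-step) partitioning whose complementary size/degree bounds make the coefficient-read tradeoff go through — your "scheduling obstacle" is resolved exactly as in the paper, since no elementary step ever mixes the two polymatroids. The only pieces the paper additionally nails down are technical refinements of the mechanism you describe: the per-subproblem potential accounting with an abort condition (online potential $+\ \|\btheta\|_1\cdot$ preprocessing potential $\le \ell(\blambda_{\sfBT},\btheta_{\sfBS})$, so any bucket whose preprocessing cost exceeds the space budget is deferred online within the complementary time bound), and a slight augmentation of $\PANDA$ so it runs correctly on the resulting non-optimal proof sequences; LP duality is needed only for the separate claim that an optimal inequality can be found for a given budget, not for the tradeoff implied by a fixed inequality.
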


The above theorem requires that we are given a joint Shannon-flow inequality to obtain a space-time tradeoff. We additionally show that, given a space budget $S$, we can also compute via a linear program the optimal inequality that will result in the best possible answering time.

\introparagraph{The  $\twoPP$ algorithm} We now present how our main algorithm works (see Appendix~\ref{sec:flow} for a detailed description). For the running example, we take $|Q_{13}| = 1$, and $S$ is a fixed space budget.

As a first step, $\twoPP$ scans the joint Shannon-flow inequality and partitions $R_{1}(x_1, x_2)$ (on $x_1$) into $R^H_{1}$ and $R^L_{1}$, where $R^H_1$ contains all $(x_1, x_2)$ tuples where $|\sigma_{x_1 = t}(R_{12})| \geq |\mD|/\sqrt{S}$, and $R^L_1$ contains the tuples that satisfy $\deg_{12}(x_2|x_1) \leq |\mD|/\sqrt{S}$. $R_2$ is partitioned symmetrically (on $x_3$) into $R^H_{2}$ and $R^L_{2}$.
This creates four subproblems, $\{R^H_1, R^H_2\}$, $\{R^H_1, R^L_2\}$, $\{R^L_1, R^H_2\}$ and $\{R^L_1, R^L_2\}$. In general, these splits will be done according to the correlated terms in the joint flow.

The preprocessing phase ($\twoPPp$) is governed by the Shannon-flow inequality for $\hS$, which is $\hS(1)+\hS(3) \geq \hS(13)$. We now follow $\PANDA$ and construct a {\em proof sequence} for this inequality. A proof sequence proves the inequality via a sequence of smaller steps, such that each step can be interpreted as a relational operator. The proof sequence for our case is:
      \begin{align*}
     \textcolor{black}{\hS(1)} + \textcolor{black}{\hS(3)} & \geq \textcolor{black}{\hS(13|3)} + \textcolor{black}{\hS(3)} && (submodularity) \\
     & = \textcolor{black}{\hS(13)} && (composition)
    \end{align*}
In this case, $\PANDA$ attempts to join the two relations in each subproblem. However, we allow this to happen only if the resulting space is at most $S$. Because $R^H_1$ and $R^H_2$ have size at most $|\mD| / (|\mD|/\sqrt{S}) = \sqrt{S}$ values for $x_1, x_3$ respectively, the subproblem $\{R^H_1, R^H_2\}$ can be stored in $S_{13}$ in space at most $\sqrt{S} \cdot \sqrt{S} = S$.

The online phase ($\twoPPo$) takes an access request $Q_{13}(x_1, x_3)$ that contains one tuple. Now, $\twoPPo$ follows the second  proof sequence for the polymatroid $\hT$:
    \begin{align*}
      \textcolor{black}{\hT(2 | 1)} + \textcolor{black}{\hT(2| 3)}+ 2 \textcolor{black}{\hT(1 3)} & \geq  2\textcolor{black}{\hT(2|13)} + 2 \textcolor{black}{\hT(1 3)}   && (submod.)\\
      &  = 2 \textcolor{black}{\hT(123)}  && (comp.) 
 \end{align*}
    For the other $3$ subproblems, $\twoPPo$ computes 
    the following $3$ joins: $Q_{13}(x_1, x_3)  \bowtie R^L_2(x_2, x_3)$, $Q_{13}(x_1, x_3) \bowtie R^L_1(x_1, x_2)$ and $Q_{13}(x_1, x_3) \bowtie R^L_1(x_1, x_2)$. In the submodularity step, $\twoPPo$ identifies that for the first join, $\deg_{23}(x_2|x_3) \leq |\mD|/\sqrt{S}$, so this join takes time $|Q_{13}| \cdot |\mD|/\sqrt{S} \leq |\mD|/\sqrt{S}$; and since $\deg_{12}(x_2|x_1) \leq |\mD|/\sqrt{S}$, the last two {identical} joins take time $|Q_{13}| \cdot \deg_{12}(x_2|x_1) \leq  |\mD|/\sqrt{S}$. Therefore, the overall online computing time is $|\mD|/\sqrt{S}$.

\begin{figure}[t]
	    \begin{subfigure}{0.4\linewidth}
				\vspace{2em}
					\centering
				\scalebox{1}{\begin{tikzpicture}
				\tikzset{edge/.style = {->,> = latex'},
					vertex/.style={circle, thick, minimum size=5mm}}
				\def\x{0.25}
				
				\begin{scope}[fill opacity=1]
				
				\draw[] (0,-3.5) ellipse (0.8cm and 0.33cm) node {\small \small ${x_1, x_3, x_4}$};
				\node[vertex]  at (1.2,-3.5) {$T_{134}$};				
				\draw[] (0,-5) ellipse (0.8cm and 0.33cm) node {\small \small ${x_1, x_2, x_3}$};						
				\node[vertex]  at (1.2,-5) {$T_{123}$};								
				\draw[edge] (0,-3.83) -- (0,-4.65);			
				\end{scope}	
				\end{tikzpicture} 
				}
		\end{subfigure}
		\begin{subfigure}{0.4\linewidth}
				\vspace{2em}
					\centering
								\scalebox{1}{\begin{tikzpicture}
				\tikzset{edge/.style = {->,> = latex'},
					vertex/.style={circle, thick, minimum size=5mm}}
				\def\x{0.25}
				
				\begin{scope}[fill opacity=1]
				
				\draw[fill=black!10] (0,-3.5) ellipse (1cm and 0.33cm) node {\small \small ${x_1, x_2, x_3, x_4}$};
				\node[vertex]  at (1.4,-3.5) {$S_{14}$};	\node[vertex]  at (1.4,-5.1) { };	
				\end{scope}	
				\end{tikzpicture} 
				}
		\end{subfigure}
		
		\caption{Two PMTDs for the square CQAP. The materialized nodes are shaded and labeled as $S$-views.}
		\label{fig:sqaure}	
\end{figure}
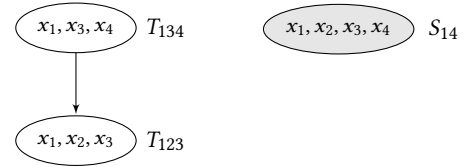

 \begin{example}[The square query]
We now give a comprehensive example of how to construct tradeoffs for the following CQAP:
 $$   \varphi(x_1, x_3 \mid x_1, x_3 ) \leftarrow R_1(x_1, x_2) \wedge R_2(x_2, x_3) \wedge R_3(x_3, x_4) \wedge R_4(x_4, x_1). $$ 
This captures the following task: given two vertices of a graph, decide whether they occur in two opposite corners of a square. We consider two PMTDs. The first PMTD has a root bag $\{1, 3, 4\}$ associated with a $T$-view $T_{134}$, and a bag $\{1, 3, 2\}$ associated with a $T$-view $T_{132}$. The second PMTD has one bag $\{1, 2, 3, 4\}$ associated with an $S$-view $S_{13}$. The two PMTDs are depicted in \autoref{fig:sqaure}. This in turn generates two disjunctive rules:
\begin{align*}
T_{134} \vee S_{13} \leftarrow  \textsf{body}, \quad \quad T_{132} \vee S_{13} \leftarrow \textsf{body}
\end{align*}
where $
 \textsf{body} = Q_{13}(x_1, x_3) \wedge R_1(x_1, x_2) \wedge R_2(x_2, x_3) \wedge R_3(x_3, x_4) \wedge R_4(x_4, x_1).
$
We can construct the following joint Shannon-flow
inequality (and its proof sequence) for the first rule:
\begin{align*}
    \underbrace{\hS(1) + {\hT( 4 | 1)}}_{R_4} & +  \underbrace{{\hS(3)} + {\hT(4| 3)}}_{R_3} + 2\cdot \underbrace{{\hT(1 3)}}_{Q_{13}} \\
    & \geq {\hS(1 3)} + {\hT(4 | 1)} + {\hT(4|3)} + 2\cdot {\hT(1 3)} \\ 
    & \geq {\hS(1 3)} + {\hT(4 | 1 3)} + {\hT(1 3)}  + {\hT(4| 1 3)} + {\hT(1 3)} \\ 
    & = \underbrace{\hS(1 3)}_{S_{13}} + 2 \cdot \underbrace{\hT(1 3 4)}_{T_{134}}.
\end{align*}
For the second rule, we symmetrically construct a proof sequence for $ 2 \log |\mD| + 2 \log |Q_{13}| \geq  {\hS(1 3)} + 2 \cdot {\hT(1 3 2)}$. Hence, reading the coefficients of the above joint
Shannon-flow inequalities, we obtain the following intrinsic space-time tradeoff $S \cdot T^2 \cong |\mD|^2 \cdot |Q_{13}|^2 $ for the given square CQAP.

\end{example}

\section{Applications}
\label{sec:results}

In this section, we apply our framework to obtain state-of-the-art space-time tradeoffs for several specific problems, as well as obtain new tradeoff results. We defer the discussion for hierarchical CQAPs to the full version of the paper~\cite{full}.

\subsection{Tradeoffs for $k$-Set Intersection}

We will first study the CQAP \eqref{k-disjoint-nonboolean} that corresponds to the non-Boolean \textsf{$k$-Set Disjointness} problem (set $y = x_{k+1}$)
 $$\varphi(\bx_{[k+1]} \mid \bx_{[k]}) \leftarrow 
 \bigwedge_{i \in [k]} R(x_{k+1}, x_i)
 $$ 
 From the decomposition with a single node $t$ with $\chi(t) = [k+1]$, we construct two PMTDs, one with $M_1 = \emptyset$, another with $M_2 = \{t\}$. Thus, $\nu_1(t) = \nu_2(t) = [k+1]$. This gives rise to the following (only) two-phase disjunctive rule:
$$ T_{[k+1]} \vee S_{[k+1]} \leftarrow 
 Q_{[k]}(\bx_{[k]}) \wedge \bigwedge_{i \in [k]} R(x_{k+1}, x_i)
$$
For this rule, we have the following joint Shannon-flow inequality:
 \begin{align*}
      \hS(k, k+1) &+ \sum_{i \in [k - 1]} \{\hS(i |{k+1}) + \hT({k+1})\}  + (k-1) \cdot  {\hT([k])} \\
     & \geq {\hS({[k+1]})} + (k-1) \cdot  {\hT([k+1])}.
 \end{align*}
 By \autoref{thm:informal}, we get the tradeoff $S \cdot T^{k - 1} \cong |\mD|^{k} \cdot |Q_A|^{k - 1}$.
  
	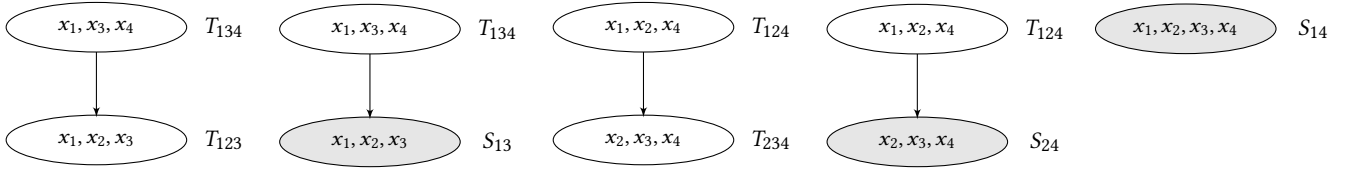
\begin{figure*}[t]
		\begin{subfigure}{0.2\linewidth}
				\centering
				\scalebox{1}{\begin{tikzpicture}
				\tikzset{edge/.style = {->,> = latex'},
					vertex/.style={circle, thick, minimum size=5mm}}
				\def\x{0.25}
				\begin{scope}[fill opacity=1]
				\draw[] (0,0) ellipse (1.2cm and 0.33cm) node {\small ${x_1, x_3, x_4}$};
				\node[vertex]  at (1.7,0) {$T_{134}$};
				\draw[] (0,-1.5) ellipse (1.2cm and 0.33cm) node {\small \small ${x_1,  x_2 , x_3}$};
				\node[vertex]  at (1.7,-1.5) {$T_{123}$};				
				\draw[edge] (0,-0.33) -- (0,-1.2);
				\end{scope}	
				\end{tikzpicture} 
				}
		\end{subfigure}
		\begin{subfigure}{0.2\linewidth}
				\centering
				\scalebox{1}{\begin{tikzpicture}
				\tikzset{edge/.style = {->,> = latex'},
					vertex/.style={circle, thick, minimum size=5mm}}
				\def\x{0.25}
				\begin{scope}[fill opacity=1]
				\draw[] (0,0) ellipse (1.2cm and 0.33cm) node {\small ${x_1, x_3, x_4}$};
				\node[vertex]  at (1.7,0) {$T_{134}$};
				\draw[fill=black!10] (0,-1.5) ellipse (1.2cm and 0.33cm) node {\small \small ${x_1,  x_2 , x_3}$};
				\node[vertex]  at (1.7,-1.5) {$S_{13}$};				
				\draw[edge] (0,-0.33) -- (0,-1.2);
				\end{scope}	
				\end{tikzpicture} 
				}
		\end{subfigure}		
		\begin{subfigure}{0.2\linewidth}
					\centering
								\scalebox{1}{\begin{tikzpicture}
				\tikzset{edge/.style = {->,> = latex'},
					vertex/.style={circle, thick, minimum size=5mm}}
				\def\x{0.25}
				\begin{scope}[fill opacity=1]
				\draw[] (0,0) ellipse (1.2cm and 0.33cm) node {\small ${x_1, x_2, x_4}$};
				\node[vertex]  at (1.7,0) {$T_{124}$};
				\draw[] (0,-1.5) ellipse (1.2cm and 0.33cm) node {\small \small ${x_2, x_3, x_4}$};
				\node[vertex]  at (1.7,-1.5) {$T_{234}$};				
				\draw[edge] (0,-.33) -- (0,-1.2);
				\end{scope}	
				\end{tikzpicture} 
				}
		\end{subfigure}
		\begin{subfigure}{0.2\linewidth}
					\centering
								\scalebox{1}{\begin{tikzpicture}
				\tikzset{edge/.style = {->,> = latex'},
					vertex/.style={circle, thick, minimum size=5mm}}
				\def\x{0.25}
				\begin{scope}[fill opacity=1]
				\draw[] (0,0) ellipse (1.2cm and 0.33cm) node {\small ${x_1, x_2, x_4}$};
				\node[vertex]  at (1.7,0) {$T_{124}$};
				\draw[fill=black!10] (0,-1.5) ellipse (1.2cm and 0.33cm) node {\small \small ${x_2, x_3, x_4}$};
				\node[vertex]  at (1.7,-1.5) {$S_{24}$};				
				\draw[edge] (0,-.33) -- (0,-1.2);
				\end{scope}	
				\end{tikzpicture} 
				}
		\end{subfigure}
		\begin{subfigure}{0.18\linewidth}
					\centering
				\scalebox{1}{\begin{tikzpicture}
				\tikzset{edge/.style = {->,> = latex'},
					vertex/.style={circle, thick, minimum size=5mm}}
				\def\x{0.25}
				\begin{scope}[fill opacity=1]
				\draw[fill=black!10] (0,0) ellipse (1.2cm and 0.33cm) node {\small ${x_1, x_2, x_3, x_4}$};
				\node[vertex]  at (1.7,0) {$S_{14}$};		
				\node[vertex]  at (1.4,-1.6) { };
				\end{scope}	
				\end{tikzpicture} 
				}
		\end{subfigure}
		\caption{The PMTDs for the 3-reachability CQAP.}
		\label{fig:example3}
	\end{figure*}  
  
\subsection{Tradeoffs via Fractional Edge Covers} \label{sec:edgecover}

Let $\varphi(\bx_A \mid \bx_A)$ be a CQAP with hypergraph $([n],\edges)$ of $\varphi$. A fractional edge cover of $S \subseteq [n]$ is an assignment $\bu = (u_F)_{F \in \edges}$ such that $(i)$ $u_F \geq 0$, and $(ii)$ for every $i \in S$, $\sum_{F: i \in F} u_F \geq 1$. For any fractional edge cover $\bu$ of $[n]$, we define the {\em slack} of $\bu$ w.r.t. $A \subseteq [n]$:
$$ \slack(\bu,A) \defeq \min_{i \notin A}  \sum_{F \in \edges: i \in F} u_F. $$
In other words, the slack is the maximum factor by which we can scale down the fractional cover $\bu$ so that it remains a valid edge cover of the variables not in $A$. Hence $(u_F/\slack(\bu,A))_{F \in \edges}$ is a fractional edge cover of $[n] \setminus A$. We always have $\slack(\bu,A) \geq 1$.

\begin{theorem}\label{thm:main1}
Let $\varphi(\bx_A \mid \bx_A)$ be a CQAP. Let $\bu$ be any fractional edge cover of the hypergraph of $\varphi$. 
Then, for any input database  $\mD$, and any access request, the following intrinsic tradeoff holds:
$$S \cdot T ^{\slack(\bu,A)} \cong |Q_A|^{\slack(\bu,A)} \cdot \prod_{F \in \edges} |R_F|^{u_F} $$
\end{theorem}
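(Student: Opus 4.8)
The plan is to instantiate the framework of Sections~\ref{sec:framework} and~\ref{sec:2pd} with the \emph{trivial} single-bag tree decomposition together with one carefully chosen joint Shannon-flow inequality. Since $H = A$, the free-connex requirement is vacuous, so I take $(\htree,\chi)$ to consist of a single node $t$ with $\chi(t) = [n]$, rooted at $t$ (note $A \subseteq \chi(t)$). On it I build two PMTDs: $P_1 = (\htree,\chi,\{t\},t)$, whose only view is the $S$-view $S_A$ (since $\nu_1(t) = \chi(t) \cap H = A$), and $P_2 = (\htree,\chi,\emptyset,t)$, whose only view is the $T$-view $T_{[n]}$. Both are non-redundant and non-dominant, and the construction of \autoref{sec:framework} produces exactly one 2-phase disjunctive rule, $S_A(\bx_A) \vee T_{[n]}(\bx_{[n]}) \leftarrow Q_A(\bx_A) \wedge \bigwedge_{F \in \edges} R_F(\bx_F)$. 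Given a space budget $S$, $\twoPPp$ materializes the $S$-target of size $\polyO(S)$ in the preprocessing phase and $\twoPPo$ produces the $T$-target of size $\polyO(T)$ in the online phase. By \autoref{lem:Yannakakis}, Online Yannakakis on $P_1$ reduces to probing $Q_A$ into $S_A$, and on $P_2$ it computes the acyclic CQ over $Q_A$ and $T_{[n]}$; unioning the two outputs yields $\varphi$. Thus the scheme uses space $\polyO(S + |\mD|)$ and time $\polyO(T + |Q_A|) + O(|\varphi|)$, and it remains to bound $S$ and $T$ through a joint Shannon-flow inequality for this rule.

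The crux is exhibiting that inequality. Writing $\slack = \slack(\bu,A)$, I claim that for all polymatroids $\hS, \hT$,
\[
 \sum_{F \in \edges} u_F \bigl( \hS(F \cap A) + \hT(F \mid F \cap A) \bigr) \;+\; \slack \cdot \hT(A) \;\;\geq\;\; \hS(A) \;+\; \slack \cdot \hT([n]).
\]
This is verified using only monotonicity and submodularity, in two independent pieces. First, since $\bu$ is a fractional edge cover of $[n] \supseteq A$, the sets $\{F \cap A\}_{F}$ with weights $(u_F)$ fractionally cover $A$, so the fractional-cover (Shearer-type) bound gives $\sum_F u_F\,\hS(F \cap A) \geq \hS(A)$. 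Second, by submodularity $\hT(F \mid F \cap A) \geq \hT(F \mid A) = \hT\bigl((F \setminus A) \mid A\bigr)$, and by the definition of slack the weights $(u_F/\slack)$ fractionally cover $[n] \setminus A = \bigcup_F (F \setminus A)$; applying the fractional-cover bound to the polymatroid $W \mapsto \hT(W \mid A)$ yields $\sum_F u_F\,\hT(F \mid F \cap A) \geq \slack\cdot\hT([n] \mid A) = \slack\cdot(\hT([n]) - \hT(A))$. Adding the two pieces and cancelling the $\slack\cdot\hT(A)$ terms proves the inequality. (As a sanity check, for the $2$-reachability CQAP with the edge cover $u_{R_1} = u_{R_2} = 1$ this is verbatim the running example of \autoref{sec:2pd}.)

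Finally, I invoke \autoref{thm:informal} (formally, \autoref{thm:main}) to convert the inequality into a tradeoff: the $S$-target $S_A$ carries coefficient $1$, the $T$-target $T_{[n]}$ carries coefficient $\slack$, the access request enters via the cardinality constraint on $Q_A$ in the term $\slack\cdot\hT(A)$, and each $R_F$ enters via $u_F$ copies of a split $\hS(F\cap A) + \hT(F \mid F \cap A)$; reading off these coefficients yields $S \cdot T^{\slack} \cong |Q_A|^{\slack}\cdot\prod_{F \in \edges}|R_F|^{u_F}$, achievable for every space budget and uniformly over all databases and access requests. I expect the main obstacle to be the bookkeeping of this last step rather than any new idea: one must check that the inequality is in the exact syntactic shape the $\twoPP$ machinery consumes --- each correlated term $\hS(X) + \hT(Y\mid X)$ with $X \subseteq Y \subseteq F$ where $R_F$ guards the relevant degree constraint --- and that the framework overheads (the temporary materialization of $\bowtie_F R_F$ for the semijoin-reduction of $S_A$, and the union over the two PMTDs) cost only polylogarithmic factors. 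A minor loose end is the degenerate case $A = [n]$ (and symmetrically $A = \emptyset$, where $S_A$ degenerates to a Boolean flag), where the slack is vacuous and one simply stores $\bowtie_F R_F$ to get $S \cong \prod_F |R_F|^{u_F}$ with $T \cong 1$; I would dispatch these separately.
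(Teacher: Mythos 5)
Your proposal is correct and takes essentially the same route as the paper: the same two single-bag PMTDs, the same single 2-phase disjunctive rule $T_{[n]} \vee S_A \leftarrow Q_A \wedge \bigwedge_F R_F$, the identical joint Shannon-flow inequality, and the final appeal to \autoref{thm:main}. The only cosmetic difference is that you verify the online-side step via submodularity plus Shearer applied to the conditional polymatroid $W \mapsto \hT(W \cup A) - \hT(A)$, whereas the paper packages exactly this step as \autoref{lem:conditional:Shearer}, proved by a chain-rule expansion.
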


The above theorem can also be shown as a corollary of Theorem 1 in~\cite{deep2018compressed}. However, the data structure used in~\cite{deep2018compressed} is much more involved, since its goal is to also bound the delay during enumeration (while we are interested in total time instead). A simpler construction with the same tradeoff was shown in~\cite{deep2021space}. Our framework recovers the same result using a simple materialization strategy with two PMTDs.

\begin{example}
Consider $\varphi(\bx_{[k]} \mid \bx_{[k]}) \leftarrow \bigwedge_{i \in [k]} R(y, x_i)$ (corresponds to the $k$-\setdisj problem) with the fractional edge cover $\bu$, where $u_j=1$ for $j \in \{1, \dots, k\}$. The slack w.r.t. $[k]$ is $k$, since the fractional edge cover $\hat{\bu}$, where $\hat{u}_i = u_i/k = 1/k$ covers $x$.	
 Applying Theorem~\ref{thm:main1}, we obtain a tradeoff of $S \cdot T^k  \cong |Q_A|^k \cdot |\mD|^k$. When $|Q_A|=1$, this matches the best-known space-time tradeoff for the $k$-Set Disjointness problem.
\end{example}


\subsection{Tradeoffs via Tree Decompositions} \label{sec:decompostion}

Let $\varphi(\bx_A \mid \bx_A)$ be a CQAP. In the previous section, we recovered a space-time tradeoff using two trivial PMTDs. Here, we will show how our framework recovers a better space-time tradeoff by considering a larger set of PMTDs that corresponds to one decomposition.

Pick any arbitrary non-redundant free-connex decomposition $(\htree, \chi, r)$. We start by taking any set of nodes that are not ancestors of each other in the decomposition as a materialization set. Then, for each node $t$ in the materialization set, we merge all bags in the subtree of $t$ into the bag of $t$ (and truncate the subtree). By ranging over all such materialization sets, we construct a {fixed (finite)} set of PMTDs. We say that this set of PMTDs is \textit{induced} from $(\htree, \chi, r)$.
We now input the induced set of PMTDs to our general framework. To discuss the obtained space-time tradeoff, take any assignment of a fractional edge cover $\bu_t$ to each node $t \in V(\htree)$ and let $u^*_t$ be its total weight. Let $A_t$ denote the common variables between node $t$ and its parent (for the root, $A_r = A$), and define $\alpha_t = \alpha(\bu_t, A_t)$ to be the slack in node $t$ w.r.t. $A_t$. Now, take the nodes $P$ of any root-to-leaf path in $\htree$. We can show that any such path $P$ generates the following intrinsic tradeoff:
$$ S^{\sum_{t \in P} 1/\alpha_t} \cdot T \cong  |Q_A| \cdot  |\mD|^{\sum_{t \in P} u^*_t  / \alpha_t}$$
To obtain the final space-time tradeoff, we take the worst such tradeoff across all root-to-leaf paths. We show in Appendix~\ref{sec:tree-tradeoff} how to obtain this tradeoff, and also show why it recovers prior results~\cite{deep2021space}. Our framework guarantees that adding PMTDs to the induced set we considered here can only make this tradeoff better.


\begin{example}
Consider the 4-reachability CQAP.
Here, $H = A = \{x_1, x_5\}$. We will consider the tree decomposition with bags $t_1 = \{x_1, x_2, x_4, x_5 \} \rightarrow t_2 = \{x_2, x_3, x_4\}$.

Take the edge cover $u_1=1, u_4=1$ for the bag $t_1$, and the edge cover $u_2=1, u_3=1$ for the bag $t_2$. The first bag has slack $\alpha_1 = 1$ (w.r.t. $x_1, x_5$), while the second has slack $\alpha_2 = 2$ (w.r.t. $x_2, x_4$). Here we have one root-to-leaf path, hence we get the tradeoff $S^{1+1/2} \cdot T \cong |Q_A| \cdot |\mD|^{2/1 + 2/2}$, or equivalently $S^{3/2} \cdot T \cong |Q_A| \cdot |\mD|^{3}$.
\end{example}

\eat{Note the proof sequence is valid for any disjunctive rule of the form given by~\autoref{eq:disjunctive}. The last piece is to show that it is sufficient to consider only the rules generated by~\autoref{eq:disjunctive} and that the dominating tradeoff happens for $\ell = k$. We proceed to show this inductively on the value of $\ell$. First, observe that the S-view $S_{\bx_A}(\bx_A)$ is always picked. Consider the base case $\ell = 1$. For the first PMTD, we can either pick $T_{\bx}(\bx)$ (which corresponds to root bag $t_1$) or $S(\bx_\nu(t_2))$. If we pick $T_{\bx}(\bx)$, it does not matter what we pick from the other PMTDs because the proof sequence for~\autoref{eq:disjunctive} with $\ell=1$ already generates entropy terms for $T_{\bx}(\bx)$ and $S_{\bx_A}(\bx_A)$ in the RHS. If we pick $S(\bx_\nu(t_2))$, we consider the second PMTD, where we have three choices: $T_{\bx}(\bx), T(\bx_{\nu(t_2)}), S(\bx_{\nu(t_3)})$. Picking the first term takes us back to the base case. Picking the second term corresponds to~\autoref{eq:disjunctive} with $\ell=2$ and the choice of the terms from other PMTDs is again immaterial. If the third term is picked, we move on to the third PMTD and continue the process. At the end of the process, the set of disjunctive rules generated from~\autoref{eq:disjunctive} cover all possible disjunctive rules. \shaleen{this goes away; newer argument by Paris}}

%
%
%
%

\eat{\begin{figure}[t]
	    \begin{subfigure}{0.24\linewidth}
				\vspace{2em}
					\centering
				\scalebox{.9}{\begin{tikzpicture}
				\tikzset{edge/.style = {->,> = latex'},
					vertex/.style={circle, thick, minimum size=5mm}}
				\def\x{0.25}
				
				\begin{scope}[fill opacity=1]
				
				\draw[] (0,-2) ellipse (1.2cm and 0.33cm) node {\small \small ${x_1, x_4, x_5}$};
				\node[vertex]  at (1.7,-2) {$T_{145}$};
				\draw[] (0,-3.5) ellipse (1.2cm and 0.33cm) node {\small \small ${x_1, x_3, x_4}$};
				\node[vertex]  at (1.7,-3.5) {$T_{134}$};				
				\draw[] (0,-5) ellipse (1.2cm and 0.33cm) node {\small \small ${x_1, x_2, x_3}$};						
				\node[vertex]  at (1.7,-5) {$T_{123}$};								
				\draw[edge] (0,-3.83) -- (0,-4.65);		
				\draw[edge] (0,-2.33) -- (0,-3.2);		
				\end{scope}	
				\end{tikzpicture} 
				}
		\end{subfigure}
		\begin{subfigure}{0.24\linewidth}
				\vspace{2em}
					\centering
				\scalebox{.9}{\begin{tikzpicture}
				\tikzset{edge/.style = {->,> = latex'},
					vertex/.style={circle, thick, minimum size=5mm}}
				\def\x{0.25}
				
				\begin{scope}[fill opacity=1]
				
				\draw[] (0,-2) ellipse (1.2cm and 0.33cm) node {\small \small ${x_1, x_4, x_5}$};
				\node[vertex]  at (1.7,-2) {$T_{145}$};
				\draw[] (0,-3.5) ellipse (1.2cm and 0.33cm) node {\small \small ${x_1, x_3, x_4}$};
				\node[vertex]  at (1.7,-3.5) {$T_{134}$};				
				\draw[fill=black!10] (0,-5) ellipse (1.2cm and 0.33cm) node {\small \small ${x_1, x_2, x_3}$};						
				\node[vertex]  at (1.7,-5) {$S_{123}$};								
				\draw[edge] (0,-3.83) -- (0,-4.65);		
				\draw[edge] (0,-2.33) -- (0,-3.2);		
				\end{scope}	
				\end{tikzpicture} 
				}
		\end{subfigure}
		\begin{subfigure}{0.24\linewidth}
				\vspace{2em}
					\centering
			\scalebox{.9}{\begin{tikzpicture}
				\tikzset{edge/.style = {->,> = latex'},
					vertex/.style={circle, thick, minimum size=5mm}}
				\def\x{0.25}
				
				\begin{scope}[fill opacity=1]
				
				\draw[] (5,-2) ellipse (1.2cm and 0.33cm) node {\small \small ${x_1, x_4, x_5}$};
				\node[vertex]  at (6.7,-2) {$T_{145}$};
				\draw[fill=black!10] (5,-3.5) ellipse (1.2cm and 0.33cm) node {\small \small ${x_1, x_3, x_4}$};
				\node[vertex]  at (6.7,-3.5) {$S_{134}$};				
				\draw[edge] (5,-2.33) -- (5,-3.2);
				\end{scope}	
				\end{tikzpicture} 
				}
		\end{subfigure}
		\begin{subfigure}{0.24\linewidth}
				\vspace{2em}
					\centering
								\scalebox{.9}{\begin{tikzpicture}
				\tikzset{edge/.style = {->,> = latex'},
					vertex/.style={circle, thick, minimum size=5mm}}
				\def\x{0.25}
				
				\begin{scope}[fill opacity=1]
				
				\draw[fill=black!10] (0,-1.5) ellipse (1.2cm and 0.33cm) node {\small \small ${x_1, x_4}$};
				\node[vertex]  at (1.7,-1.5) {$S_{14}$};									
				\end{scope}	
				\end{tikzpicture} 
				}
		\end{subfigure}
		
		\caption{Four PMTDs for the 5-cycle query. The materialized nodes are shaded. Note that the tree in the chosen decomposition is a path.}
		\label{fig:qd}	
	\end{figure}}

 \def\ra{1.5}
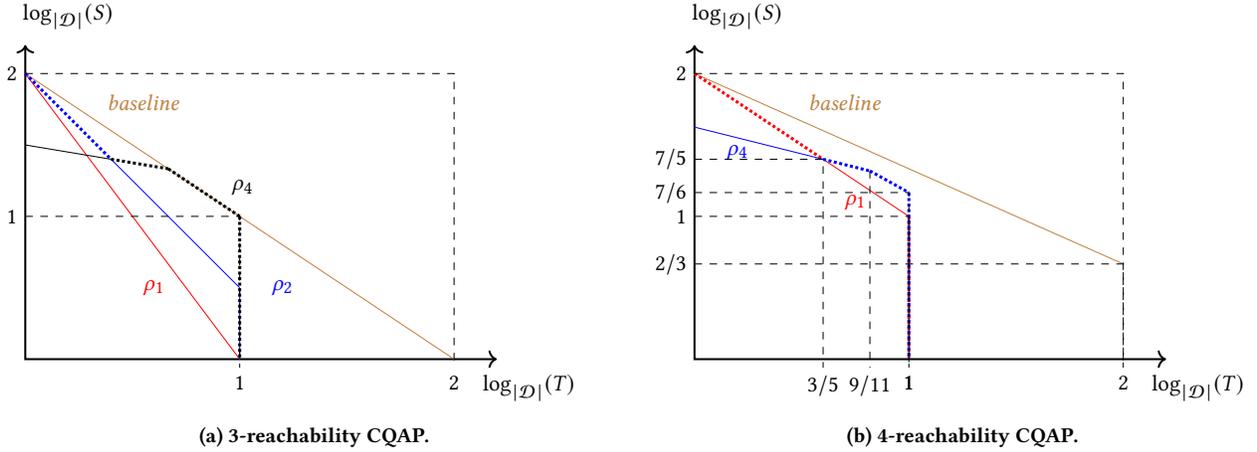
\begin{figure*}[!htp]
    \centering
    \begin{subfigure}[b]{0.48\linewidth} 
    \scalebox{0.95}{
    \begin{tikzpicture}[scale=2]
    \draw [<->,thick] (0,2.2) node (yaxis) [above] {}
        |- (2.2*\ra,0) node (xaxis) [below] {};
     \node at (0.3,2.4) {$\log_{|\mD|}(S)$};   
     \node at (2.35*\ra,-0.2) { $\log_{|\mD|}(T)$};
    \draw[brown] (0,2) coordinate (A) -- (2*\ra,0) coordinate (B);
    \node at (0.55*\ra,1.8) (r4) {{\color{brown} \textit{baseline}}};
    \draw[red] (0,2) coordinate (b_1) -- (1*\ra,0) coordinate (b_2);
    \node at (0.6*\ra,0.5) (r1) {{\color{red} $\rho_1$}};
    \draw[densely dotted,very thick,blue] (A) -- (2/5*\ra,7/5) coordinate (C);
    \draw[blue] (C) -- (1*\ra,1/2) coordinate (D);
    \draw[blue] (D) -- (1*\ra,0);
    \node at (1.2*\ra,0.5) (r2) {{\color{blue} $\rho_2$}};
    \draw (0,1.5) coordinate (c_1) --  (C);
    \draw[densely dotted,very thick] (C) -- (2/3*\ra,4/3) coordinate (D);
    \draw[densely dotted,very thick] (D) -- (1*\ra,1);
    \draw[densely dotted,very thick] (1*\ra,1) -- (1*\ra,0);
    \node at (1*\ra,1.2) (r4) {{ $\rho_4$}};
    \coordinate (c) at (2*\ra,2);
    \draw[dashed] (yaxis |- c) node[left] {$2$}
        -| (xaxis -| c) node[below] {$2$};
      \coordinate (u) at (1*\ra,1);
     \draw[dashed] (yaxis |- u) node[left] {$1$}
        -| (xaxis -| u) node[below] {$1$};       
    \end{tikzpicture}}
    \caption{3-reachability CQAP.}
 \label{fig:3path}
\end{subfigure}
\begin{subfigure}[b]{0.48\linewidth}
\scalebox{0.95}{
    \begin{tikzpicture}[scale=2]
    \draw [<->,thick] (0*\ra,2.2) node (yaxis) [above] {}
        |- (2.2*\ra,0) node (xaxis) [below] {};
          \node at (0.3,2.4) {$\log_{|\mD|}(S)$};      
          \node at (2.35*\ra,-0.2) { $\log_{|\mD|}(T)$};
    \draw[brown] (0,2) coordinate (A) -- (2*\ra,2/3) coordinate (B);
    \node at (0.7*\ra,1.8) (r4) {{\color{brown} \textit{baseline}}};

    \draw[densely dotted, very thick, red] (0,2) coordinate (b_1) -- (3/5*\ra,7/5) coordinate (b_2);
    \draw[red]  (3/5*\ra,7/5) -- (1*\ra,1);
    \draw[densely dotted, very thick, red] (1*\ra, 1) -- (1*\ra, 0);
    \node at (0.75*\ra,1.1) (r1) {{\color{red} $\rho_1$}};
    
    \draw[blue] (0*\ra, 13/8) -- (3/5*\ra, 7/5);
    \draw[densely dotted, very thick,blue] (3/5*\ra, 7/5) -- (9/11*\ra, 29/22);
    \draw[densely dotted, very thick,blue] (9/11*\ra, 29/22) -- (1*\ra,7/6) coordinate (C);
    \draw[densely dotted, very thick, blue] (C) -- (1*\ra, 0);
    \node at (0.2*\ra, 1.45) (r2) {{\color{blue} $\rho_4$}};
    \coordinate (c) at (2*\ra,2);
    \draw[dashed] (yaxis |- c) node[left] {$2$}
        -| (xaxis -| c) node[below] {$2$};
    \coordinate (u) at (1*\ra,1);
     \draw[dashed] (yaxis |- u) node[left] {$1$}
        -| (xaxis -| u) node[below] {$1$};       
        
    \coordinate (u) at (2*\ra,2/3);
     \draw[dashed] (yaxis |- u) node[left] {$2/3$}
        -| (xaxis -| u) node[below] {};       
        
    \coordinate (u) at (3/5*\ra,7/5);
     \draw[dashed] (yaxis |- u) node[left] {$7/5$}
        -| (xaxis -| u) node[below] {$3/5$};       
        
    \coordinate (u) at (1*\ra,7/6);
     \draw[dashed] (yaxis |- u) node[left] {$7/6$}
        -| (xaxis -| u) node[below] {$1$};       
    \draw[dashed] (9/11*\ra, 29/22) -- (9/11*\ra, -0.06) node[below] {$9/11$};    
    \end{tikzpicture}}
\caption{4-reachability CQAP.}
 \label{fig:4path}
\end{subfigure}  
\caption{Space-time tradeoffs for the 3- and 4-reachability CQAP. The new tradeoffs obtained from our framework are depicted via the dotted segments. The brown lines (baseline) show the previous state-of-the-art tradeoffs.}
\end{figure*}	

\subsection{Tradeoffs for $k$-Reachability}
\label{sec:path}

In this part, we will consider the CQAP that corresponds to the {\textsf{$k$-reachability} problem described in \autoref{example:k-reachability}:
$$ \phi_k(x_1, x_{k+1} \mid x_1, x_{k+1}) \leftarrow \bigwedge_{i \in [k]} R(x_i, x_{i+1}).$$
Prior work~\cite{goldstein2017conditional} has shown the following tradeoff for a input $\mD$, which was conjectured to be asymptotically optimal for $|Q_A|=1$:
$$S \cdot T^{2/(k-1)} \cong |\mD|^2 \cdot |Q_A|^{2/(k-1)}.$$


We will show that the above tradeoff can be significantly improved for $k \geq 3$ by applying our framework. 

\introparagraph{3-reachability} 
As a first step, we consider the set of all non-redundant and non-dominant PMTDs (five in total), as seen in Figure~\ref{fig:example3}.
The five PMTDs will lead to $2^4 = 16$ disjunctive rules, but we can reduce the number of rules we analyze by discarding rules with strictly more targets than other rules. For example, the disjunctive rule with head $T_{134} \vee S_{13} \vee T_{124} \vee S_{14}$ can be ignored, since there is another disjunctive rule which has a strict subset of targets, i.e., $T_{134} \vee T_{124} \vee S_{14}$. We list out the heads of the two-phase disjunctive rules we need to consider (we omit the variables for simplicity), along with the intrinsic tradeoffs for each rule in~\autoref{table:3reachability}.

\begin{table}
\caption{2-phase disjunctive rules for $3$-reachability} \label{table:3reachability}
\begin{center}
\begin{tabular}{c | c | m{3cm} } 
 \toprule
 rule & head & tradeoff \\ 
 \midrule
 $\rho_1$ & $T_{134} \vee T_{124} \vee S_{14} $  & $S \cdot T^2  \cong |\mD|^2 \cdot |Q_A|^2$  \\ 
 \midrule
 $\rho_2$ & $T_{123}  \vee S_{13} \vee T_{124} \vee S_{14}$ &  $S^2 \cdot T^3 \cong |\mD|^4 \cdot |Q_A|^3$ $T \cong |\mD|\cdot |Q_A|$  \\
 \midrule
 $\rho_3$ & $ T_{134} \vee T_{234} \vee S_{24} \vee S_{14} $  & $S^2 \cdot T^3 \cong |\mD|^4 \cdot |Q_A|^3$ $T \cong |\mD| \cdot |Q_A|$ \\
 \midrule
 $\rho_4$ & $T_{123} \vee S_{13} \vee T_{234} \vee S_{24} \vee S_{14}$ & $S \cdot T \cong |\mD|^2 \cdot |Q_A|$  $S^4 \cdot T \cong |\mD|^6 \cdot |Q_A|$ $T \cong |\mD|\cdot |Q_A|$ \\ [1ex] 
 \bottomrule
\end{tabular}
\end{center}
\end{table}
Note that rules can admit two (or more) tradeoffs that do not dominate each other; hence, we need to pick the best tradeoff depending on the regime we consider (see Appendix~\ref{sec:missing} for how we prove each tradeoff).  To understand the combined tradeoff we obtain from our analysis, we plot in Figure~\ref{fig:3path} each tradeoff curve by taking $\log_{|\mD|}$ and then taking the $x$-axis as $\log_{|\mD|}(T)$ and the $y$-axis as $\log_{|\mD|}(S)$ (fixing $|Q_A|=1$). The dotted line in the figure shows the resulting tradeoff, which is a piecewise linear function. Note that each linear segment denotes a different strategy that is optimal for that regime of space. Note that Figure~\ref{fig:3path} is not necessarily optimized for $|Q_A| > 1$. Suppose that $S = |\mD|$ and we receive $|\mD|$ single-tuple access requests in the online phase. Answering them one-by-one costs time $\polyO(|\mD|^{2})$ using the above tradeoffs. However, one better strategy is to batch the $|\mD|$ tuples (into a $4$-cycle query) and use $\PANDA$ to answer it from scratch, which costs time  $\polyO(|\mD|^{3/2})$.

\introparagraph{4-reachability} We also study the CQAP for the $4$-reachability problem. We leave the (quite complex) calculations to Appendix~\ref{sec:missing}, but we include here a plot (Figure~\ref{fig:4path}) similar to the one in Figure~\ref{fig:3path}. One surprising observation is that the new space-time tradeoff is better than the prior state-of-the-art for {\em every regime of space}. We should also point out that the tradeoff can possibly be further improved by including even more PMTDs (our analysis involved 11 PMTDs!), but the calculations were beyond the scope of this work.

\introparagraph{General reachability} Analyzing the best possible tradeoff for $k \geq 5$ becomes a very complex proposition. However, from ~\autoref{sec:decompostion} and the analysis of~\cite{deep2021space}, our framework can at least obtain the $S \cdot T^{2/(k-1)} \cong |\mD|^{2}$ tradeoff, and can likely strictly improve it.

\section{Related Work}
\label{sec:related}

\noindent {{\bf \em Set Intersection and Distance Oracles.}} 
Space-time tradeoffs for query answering (exact and approximate) has been an active area of research across multiple communities in the last decade~\cite{kociumaka2013efficient,afshani2016data, larsen2015hardness,Cohen2010}. Cohen and Porat~\cite{Cohen2010} introduced the fast intersection problem and presented a data structure to enumerate the intersection of two sets with guarantees on the total answering time. Goldstein et. al~\cite{goldstein2017conditional} formulated the $k$-reachability problem on graphs, and showed a simple recursive data structure which achieves the $S \cdot T^{2/(k-1)} = O(|\mD|^2)$ tradeoff. They also conjectured that the tradeoff is optimal and used it to justify the optimality of an approximate distance oracle proposed by~\cite{agarwal2014space}. The study of (approximate) distance oracles over graphs was initiated by Patrascu and Roditty~\cite{patrascu2010distance}, where lower bounds are shown on the size of a distance oracle for sparse graphs based on a conjecture about the best possible data structure for a set intersection problem. Cohen and Porat~\cite{cohen2010hardness} also connected set intersection to distance oracles. Agarwal et al.~\cite{agarwal2011approximate, agarwal2014space} introduced the notion of \emph{stretch} of an oracle that controls the error allowed in the answer. Further, for stretch-2 and stretch-3 oracles, we can achieve tradeoffs $S \cdot T = O(|\mD|^2)$ and $S \cdot T^2 = O(|\mD|^2)$ respectively, and for any integer $k>0$, a stretch-$(1+1/k)$ oracle exhibits an $S \cdot T^{1/k} = O(|\mD|^2)$ tradeoff. Unfortunately, no lower bounds are known for non-constant query time. 

\smallskip
\noindent {{\bf \em Space/Delay Tradeoffs.}} A different line of work considers the problem of enumerating query results of a non-Boolean query, with the goal of minimizing the {\em delay} between consecutive tuples of the output. In constant delay enumeration~\cite{Segoufin13,BaganDG07}, the goal is to achieve constant delay for a CQ after a preprocessing step of linear time (and space); however, only a subset of CQs can achieve such a tradeoff. Factorized databases~\cite{OlteanuZ15} achieve constant delay enumeration after a more expensive super-linear preprocessing step for any CQ. If we want to reduce preprocessing time further, it is necessary to increase the delay. Kara et. al~\cite{kara19} presented a tradeoff between preprocessing time and delay for enumerating the results of any hierarchical CQ under static (and dynamic) settings. {Deng et.al \cite{DBLP:conf/icdt/DengL023} initiates the study of the space-query tradeoffs for range subgraph counting and range subgraph listing problems}. The problem of CQs with access patterns was first introduced by Deep and Koutris~\cite{deep2018compressed} (under the restriction $|Q_A|=1$), but the authors only consider full CQAPs. Previous work~\cite{X17} considered the problem of constructing space-efficient views of graphs to perform graph analytics, but did not offer any theoretical guarantees. More recently, Kara et. al~\cite{CQAP} studied tradeoffs between preprocessing time, delay, and update time for CQAPs. They characterized the class of CQAPs that admit linear preprocessing time, constant delay enumeration, and constant update time.  
All of the above results concern the tradeoff between space (or preprocessing time) and delay, while our work focuses on the total time to answer the query. Our work is most closely related to the non-peer-reviewed work in~\cite{deep2021space}. There, the authors also study the problem of building tradeoffs for Boolean CQs. The authors propose two results that slightly improve upon~\cite{deep2018compressed}. They were also the first to recognize that the $k$-reachability tradeoff is not optimal by proposing a small improvement for $k \geq 3$. The results in our work are a vast generalization that is achieved using a more comprehensive framework. For the dynamic setting,~\cite{berkholz2017answering} initiated the study of answering CQs under updates. Recently,~\cite{kara2019counting} presented an algorithm for counting the number of triangles under updates.~\cite{CQAP} proposed dynamic algorithms for CQAPs and provided  a syntactic characterization of queries that admit constant time per single-tuple update and whose output tuples can be enumerated with constant delay.

\smallskip
\noindent {{\bf \em CQ Evaluation.}} Our proposed framework is based on recent advances in efficient CQ evaluation, and in particular the \textsf{PANDA} algorithm~\cite{DBLP:conf/pods/Khamis0S17}. This powerful algorithmic result follows a long line of work on query decompositions~\cite{gottlob2014treewidth,Marx13,Marx10}, worst-case optimal algorithms~\cite{NgoRR13}, and connections between CQ evaluation and information theory~\cite{KhamisNR16,KhamisK0S20}. 
\section{Conclusion}
\label{sec:conclusion}

In this paper, we present a framework for computing general space-time tradeoffs for answering CQs with access patterns. We show the versatility of our framework by demonstrating how it can capture state-of-the-art tradeoffs for problems that have been studied separately. The application of our framework has also uncovered previously unknown tradeoffs. Many open problems remain, among which are obtaining (conditional) lower bounds that match our upper bounds, and investigating how to make our approach practical. 

Many open problems remain that merit further work. In particular, there are no known lower bounds to prove the optimality of the space-time tradeoffs. The optimality of existing space-time tradeoffs for approximate distance oracles is also now an open problem again and we believe our proposed framework should be able to improve the upper bounds. \eat{We would also like to extend the framework to incorporate general semirings to support operations such as counting etc. using the proposed variants of \textsf{PANDA} in the existing literature.} It would also be very interesting to see the applicability of this framework in practice. In particular, our framework can be extended to also include views that have been precomputed, which is a common setting. In this regard, challenges remain to optimize the constants in the time complexity to ensure implementation feasibility.

\bibliographystyle{plain}  
\balance
\bibliography{ref}

\newpage
\onecolumn
\appendix
\section{Missing Details from Section~\ref{sec:pmtd}}
\label{sec:pmtd:appendix}

We will adapt the Yannakakis algorithm for a free-connex tree decomposition that works into two passes: the bottom-up semijoin-reduce pass and the top-down join pass.
 The algorithm first groups all edges by whether the edge connects two $S$-views, two $T$-views or one $S$-view and one $T$-view. We name them the $SS$-edges, $TT$-edges and $ST$-edges, respectively. One key observation of a PMTD is that the bottom-up order of edges on each branch is always: first some $SS$-edges, then at most one $ST$-edges followed by some $TT$-edges. 
 
 To preprocess the $S$-views, we first run a bottom-up semijoin pass on the $SS$-edges. Then, for each $S$-view, we create a hash index with search key the common variables with its (unique) parent. We now illustrate the two passes of Online Yannakakis. This allows the semijoin of a parent with a child that is an $S$-view to be done in time linear to the size of the parent view.
 
 
 \introparagraph{Bottom-up Semijoin-Reduce Pass}
  We first apply a bottom-up semijoin-reduce pass to remove all non-head variables in the tree by semijoins and projections. There are three scenarios as we go upwards, depending on the type of the edge we visit.
  Let $(t, p) \in E(\htree)$ be the edge we are visiting, where $t$ is the child node and $p$ is the parent of $t$. We distinguish the following cases:
 \begin{enumerate}
	\item $(t, p)$ is an $SS$-edge: we skip the edge (recall we have handled this edge during the bottom-up semijoin-reduce pass in $M$). 
    \item $(t, p)$ is an $ST$-edge: we update the view $T_{\nu(p)}  \leftarrow T_{\nu(p)} \ltimes S_{\nu(t)}$. If every head variable in $\nu(t)$ is also in $\nu(p)$, we remove $S_{\nu(t)}$ from the tree.
    \item $(t, p)$ is a $TT$-edge: we update $T_{\nu(p)}  \leftarrow  T_{\nu(p)} \ltimes T_{\nu(t)}$. If every head variable in $\nu(t)$ is also in $\nu(p)$, we remove $T_{\nu(t)}$ from the tree; otherwise, we update $\nu(t) \leftarrow \nu(t) \cap H$ and $T_{\nu(t)} \leftarrow \Pi_{\nu(t) \cap H}(T_{\nu(t)})$.
\end{enumerate}

 At the end of the bottom-up pass, for the root node $r$, if $r \in M$, we update $Q_A \leftarrow Q_A \ltimes S_{\nu(r)}$; or if  $r \notin M$, we update $\nu(r) \leftarrow \nu(r) \cap H$, $T_{\nu(r)}  \leftarrow \Pi_{\nu(t) \cap H}(T_{\nu(r)})$ and $Q_A  \leftarrow Q_A \ltimes T_{\nu(r)}$. Now, a bottom-up semi-join reducer is accomplished on the reduced tree. We prove that this reduced tree contains only head variables in \autoref{lem:reducedTree}.

 \introparagraph{Top-down Join Pass}
 If $r \in M$, we compute $Q_A \bowtie S_{\nu(r)}$, or if $r \notin M$, we compute $Q_A \bowtie T_{\nu(r)}$. From here, we apply the exact top-down full-join pass of Yannakakis on the reduced tree (from $r$, use the parent view to probe the child view) to get the output $\psi$.

 \begin{example}
 We use the non-redundant PMTD shown in \autoref{fig:Yannakakis} of a CQAP $\varphi(x_1, x_2, x_3, x_4, x_7, x_8 \mid x_1, x_2)$, where $(x_1, x_2)$
 is the access pattern. We use the following free-connex acyclic CQ to demonstrate Online Yannakakis:
\begin{align*} 
\psi(\bx_H) \leftarrow  Q_{12}(x_1, x_2) \wedge  T_{12}(x_1, x_2) \wedge T_{13}(x_1, x_3) \wedge T_{345}(x_3, x_4, x_5) \wedge  S_{45}(x_4, x_5) \wedge S_{37}(x_3, x_7) \wedge S_{78}(x_7, x_8),
 \end{align*}
 where $\bx_H = (x_1, x_2, x_3, x_4, x_7, x_8)$. The following is the sequence of 
 semijoin-reduces in the bottom-up semijoin-reduce pass (the $SS$-edge $(S_{37}, S_{78})$ is skipped)
 \begin{align*}
    TS\text{-edge } (T_{345}, S_{45}): &   \quad T_{345}^{(1)} \leftarrow T_{345} \ltimes S_{45}, \quad \text{remove } S_{45}    \\
    TS\text{-edge } (T_{13}, S_{37}):  &   \quad T_{13}^{(1)} \leftarrow T_{13}  \ltimes S_{37}    \\
    TT\text{-edge } (T_{13}, T_{345}): &   \quad T_{13}^{(2)} \leftarrow T_{13}  \ltimes T_{345}^{(1)}, \quad T_{34}^{(1)}  \leftarrow \Pi_{34}(T_{345}^{(1)})    \\
    TT\text{-edge } (T_{12}, T_{13}):  &   \quad T_{12}^{(1)} \leftarrow T_{12} \ltimes T_{13}^{(2)} \\
    \text{root } :  &   \quad Q_A^{(1)}  \leftarrow Q_A \ltimes T_{12}^{(1)}
 \end{align*}
  In the top-down pass, to get the result of $\psi$, Online Yannakakis computes the following joins from the root to the leaves, starting from $ Q_A^{(1)}$
  \begin{align*}
      Q_A^{(1)} \bowtie T_{12}^{(1)} \bowtie T_{13}^{(2)} \bowtie T_{34}^{(1)}  \bowtie S_{37} \bowtie S_{78}.
  \end{align*}

 \end{example}
 
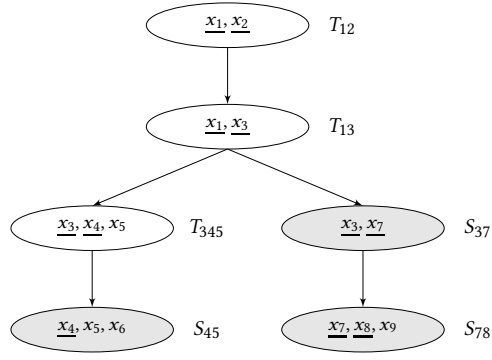
\begin{figure}[t]
			\vspace{2em}
			\centering
			\scalebox{.9}{\begin{tikzpicture}
			\tikzset{edge/.style = {->,> = latex'},
				vertex/.style={circle, thick, minimum size=5mm}}
			\def\x{0.25}
			
			\begin{scope}[fill opacity=1]
			
			\draw[] (0,-2) ellipse (1.2cm and 0.33cm) node {\small \small ${\underline{x_1}, \underline{x_2}}$};
			\node[vertex]  at (1.7,-2) {$T_{12}$};
			\draw[] (0,-3.5) ellipse (1.2cm and 0.33cm) node {\small \small ${\underline{x_1}, \underline{x_3}}$};
			\node[vertex]  at (1.7,-3.5) {$T_{13}$};				
			\draw[] (-2,-5) ellipse (1.2cm and 0.33cm) node {\small \small ${\underline{x_3}, \underline{x_4}, x_5}$};		
			\node[vertex]  at  ( -0.3,-5) {$T_{345}$};	
			
			\draw[fill=black!10] ( 2,-5) ellipse (1.2cm and 0.33cm) node {\small \small ${\underline{x_3}, \underline{x_7}}$};		
			\node[vertex]  at  ( 3.7,-5) {$S_{37}$};	
            
            \draw[fill=black!10] ( -2,-6.5) ellipse (1.2cm and 0.33cm) node {\small \small ${\underline{x_4}, x_5, x_6}$};	\node[vertex]  at  ( -0.3, -6.5) {$S_{45}$};	
            
            \draw[fill=black!10] ( 2,-6.5) ellipse (1.2cm and 0.33cm) node {\small \small ${\underline{x_7}, \underline{x_8}, x_9}$};	\node[vertex]  at  ( 3.7, -6.5) {$S_{78}$};	
            
			\draw[edge] (0,-2.33) -- (0,-3.17);	
			\draw[edge] (0,-3.83) -- (-2,-4.67);	
			\draw[edge] (0,-3.83) -- ( 2,-4.67);
			\draw[edge] (-2, -5.33) -- ( -2, -6.17);
			
			\draw[edge] ( 2, -5.33) -- ( 2, -6.17);
			
			\end{scope}	
			\end{tikzpicture} 
			}
		\caption{A non-redundant PMTD as an example for Online Yannakakis. The materialization set is shaded (and marked as S-views) and the head variables are underlined.}
		\label{fig:Yannakakis}	
	\end{figure}

\begin{lemma}\label{lem:reducedTree}
  The reduced tree after the bottom-up semijoin-reduce pass of the Online Yannakakis contains only head variables, i.e. $\bx_H$. 
 \end{lemma}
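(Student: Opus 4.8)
The plan is to prove the claim by structural induction on the reduced tree, processing the nodes in the same bottom-up order used by the semijoin-reduce pass, and showing the invariant that after we visit an edge $(t,p)$, the view sitting at $p$ retains only head variables from the entire subtree rooted at $p$ that has been processed so far. First I would recall the shape observation stated in the appendix: on any root-to-leaf branch, the edges encountered bottom-up are first some $SS$-edges, then at most one $ST$-edge, then some $TT$-edges. Crucially, the $S$-views were already semijoin-reduced against each other during preprocessing (the $SS$-edges), and by the definition of $\nu(\cdot)$ for a PMTD, the schema $\nu(t)$ of an $S$-view only ever contains variables that are either in $H$ or in $\chi(p)\cap H$ — in particular, by the non-redundancy assumption every $\nu(t)\neq\emptyset$, and by construction of $\nu$ the non-head variables of $\chi(t)$ are projected away as soon as an $S$-view is formed. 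So the starting point of the induction — the leaves and the $S$-view portion of each branch — already satisfies the invariant: every $S$-view schema is a subset of $H$.

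Next I would handle the inductive step by the three cases of the pass. For an $SS$-edge, nothing happens and the invariant is trivially preserved. For the (unique on a branch) $ST$-edge $(t,p)$: here $t$ carries an $S$-view $S_{\nu(t)}$ with $\nu(t)\subseteq H$ by the base case, and $p$ carries a $T$-view; the pass performs $T_{\nu(p)}\leftarrow T_{\nu(p)}\ltimes S_{\nu(t)}$, a semijoin which does not change the schema of $T_{\nu(p)}$, and then, if every head variable of $\nu(t)$ already lies in $\nu(p)$, removes $S_{\nu(t)}$ entirely. The point to argue is that after this step the non-head variables of $\nu(t)$ have genuinely been eliminated from the tree — either because $S_{\nu(t)}$ was dropped, or because $\nu(t)$ had no non-head variables to begin with (true since $\nu(t)\subseteq H$), so there is nothing to eliminate. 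For a $TT$-edge $(t,p)$: the pass does $T_{\nu(p)}\leftarrow T_{\nu(p)}\ltimes T_{\nu(t)}$ (schema-preserving on $p$), and then either removes $T_{\nu(t)}$ if $\nu(t)\cap ([n]\setminus H)=\emptyset$ relative to $\nu(p)$, or replaces $T_{\nu(t)}$ by $\Pi_{\nu(t)\cap H}(T_{\nu(t)})$, which by definition leaves only head variables at $t$. In either subcase the node $t$ ends up with a view over $\subseteq H$. This is where the \emph{free-connex} property is essential and should be invoked explicitly: free-connexity w.r.t. the root guarantees that the non-head variables occupy a "lower" part of every branch, so that once we have passed the $ST$-edge on a branch, every remaining node above it can legitimately have its non-head variables projected out without destroying the connectivity (running-intersection) property needed for the subsequent top-down join pass to be correct. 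Finally, the root is handled by the special rule at the end of the bottom-up pass ($\nu(r)\leftarrow\nu(r)\cap H$ and the semijoin into $Q_A$), which by the same projection argument leaves the root over $\subseteq H$ as well, and $Q_A$ itself is over $\bx_A\subseteq\bx_H$.

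The main obstacle I anticipate is not the schema bookkeeping per se but justifying that the projections performed during the pass are \emph{lossless} for the final output $\psi(\bx_H)$ — i.e., that after all the semijoins and projections, the top-down join over the reduced tree still produces exactly the result of the free-connex acyclic CQ~\eqref{acyclicjoin}. This is really the content of the correctness of Online Yannakakis (Theorem~\ref{lem:Yannakakis}) and reduces to the classical fact that projecting out a variable that appears in a single leaf subtree below the current frontier, after a bottom-up semijoin reducer has been run on that subtree, does not change the projection of the full join onto the surviving variables; the free-connex condition is exactly what ensures the non-head variables form such disposable subtrees. I would state this reduction cleanly, cite the standard Yannakakis argument for the lossless-projection fact, and keep the induction above as the combinatorial skeleton. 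The routine verification that each of the three edge cases preserves the "only head variables" invariant is then short, and the lemma follows once the induction reaches the root.
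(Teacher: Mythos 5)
There is a genuine gap, and it sits exactly at the one case the lemma is really about. Your induction rests on the claim that every $S$-view schema satisfies $\nu(t)\subseteq H$ ``by construction of $\nu$,'' but that is not what the definition says: when $t\in M$ has a parent $p\notin M$ (the $ST$-edge case), the definition is $\nu(t)=\chi(t)\cap(H\cup\chi(p))$, so the $S$-view deliberately retains non-head variables of $\chi(t)\cap\chi(p)$ — these are the join variables needed for the semijoin $T_{\nu(p)}\ltimes S_{\nu(t)}$. Consequently your base case is false, and your treatment of the $ST$-edge (``either $S_{\nu(t)}$ was dropped, or $\nu(t)$ had no non-head variables to begin with'') begs the question. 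The lemma's actual content is precisely this situation: an $S$-view at an $ST$-edge survives the bottom-up pass only when there is a head variable $y\in\nu(t)\setminus\nu(p)$, which (by connectivity of the decomposition, since $\nu(p)=\chi(p)$) means $t=\textsf{TOP}_r(y)$; if that surviving view also contained some $z\notin H$, then by the definition of $\nu$ we would have $z\in\chi(p)$, so $\textsf{TOP}_r(z)$ would be a strict ancestor of $\textsf{TOP}_r(y)$, contradicting free-connexity w.r.t.\ $r$. That contradiction is the missing step in your argument; the remaining cases (root $S$-view, $S$-view with $S$-view parent, and all $T$-views, which are explicitly projected onto $\nu(t)\cap H$ or removed during the pass) are indeed immediate, as you say.

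A secondary issue: you invoke free-connexity only to argue that the projections at $TT$-edges are lossless and that the top-down pass remains correct. That concern belongs to the proof of Theorem~\ref{lem:Yannakakis}, not to this lemma, which is a purely schema-level statement about which variables remain in the reduced tree; and it is not where free-connexity is needed here. Redirect the free-connex argument to the surviving-$S$-view case as above and the proof closes.
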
 
 \begin{proof}
  Obviously, $T$-views in the reduced tree contain only head variables. Therefore, we only need to show the property for every $S$-view. This is obvious for a root $S$-view or a non-root $S$-view that has a parent $S$-view by definition of $\nu(\cdot)$. We are left to show for a $ST$-edge $(t, p)$ such that there is a head vertex $y \in \nu(t) \setminus \nu(p)$, which indicates that $t$ is the top-most bag to contain $y$. Suppose the $S$-view $S_{\nu(t)}$ contains some $z \notin H$, then $z$ must be in $\nu(p) = \chi(p)$ by definition of $\nu(\cdot)$. This contradicts the free-connex property since $\textsf{TOP}_r(z)$ is an ancestor of $\textsf{TOP}_r(y)$, i.e. a non-head vertex $z \in \nu(p)$ is above the head vertex $y \in \nu(t)$.
 \end{proof}
 
 \begin{proof}[Proof of Theorem~\ref{lem:Yannakakis}]
Define $T =  \max_{t \in V(\htree) \setminus M} |T_{\nu(t)}|$. The bottom-up pass costs time $O(T + |Q_A|)$ as only $T$-views and $Q_A$ are semijoin-reduced ($S$-views, by the index construction, are also bottom-up semijoin-reduced). Moreover, by \autoref{lem:reducedTree}, the reduced tree after the bottom-up pass contains only variables in $H$. So, joining top-down from the root circumvents any intermediate dangling tuples and costs time $O(|\psi|)$. The overall time cost is $O(T+|Q_A| +  |\psi|)$. 
 \end{proof}
 
\section{Missing Details from Section~\ref{sec:framework}}

  We show that the general framework, for any access request $Q_A$, returns the correct results $\varphi$, i.e. $\varphi = \bigcup_{i \in I} \psi_i$. As a benefit from semijoin-reduces of every view in a PMTD, it is obvious that $\bigcup_{i \in I} \psi_i \subseteq \varphi$. For the opposite inclusion, we prove the following two  claims, following the proof of Corollary 7.13 in \cite{DBLP:conf/pods/Khamis0S17}.
   
\introparagraph{Claim 1} Let us pick out one target (either a $S$-target or a $T$-target), denote the target as $U_k$, from the head of every rule $\rho_k$, where $k \in [M]$. We call the tuple $(U_k)_{k \in [M]}$ that consists of one picked target per rule a \textit{full-choice}. Then, for any \textit{full choice} $(U_k)_{k \in [M]}$, there is a PMTD $P \in \{P_i \}_{i \in I}$ such that for every tree node $t \in M$, the $S$-target $S_{\nu(t)}$ is in the full-choice and for every tree node $t \notin M$, the $T$-target $T_{\nu(t)}$ is in the full-choice. Breaking ties arbitrarily, we call this PMTD \textit{the} PMTD associated with the full choice $(U_k)_{k \in [M]}$.
    
     \begin{proof}[Proof of Claim 1]
  Fix a full-choice $(U_k)_{k \in [M]}$. Suppose to the contrary that for every PMTD $P_i \in \mP$, there is a tree node such that its corresponding target (either a $T$-target or a $S$-target), denoted as $U_i^*$, is not in the full-choice. Then we examine the 2-phase disjunctive rule $\rho^*$ that takes $\bigvee_{i \in I} U_i^* $ as its head. By definition of a full-choice, one target must be picked from $\rho^*$. However, this is a contradiction because every head $U_i^*$ in $\rho^*$ does not show up in the fixed full-choice $(U_k)_{k \in [M]}$.
 \end{proof}
 
  \introparagraph{Claim 2} For any full-choice $\boldsymbol{U} \defeq (U_k)_{k \in [M]}$ with its associated PMTD $(\htree_i, \chi_i, M_i, r_i)$, we define a CQ
 $$ \varphi_{\boldsymbol{U}}(\bx_H) \leftarrow Q_A \wedge \bigwedge_{k \in [M]} U_k
 $$
 Let $\mathcal{U}$ denote the set of all full-choices. Then, 
 \begin{align}\label{claim2}
     \varphi \subseteq \bigcup_{\boldsymbol{U}  \in \mathcal{U}}   \varphi_{\boldsymbol{U}}  \subseteq \bigcup_{i \in I} \psi_{i}
 \end{align}
 
  \begin{proof}[Proof of Claim 2]
  Take any output tuple $\ba_{H} \in \varphi$. Then, there is some tuple $\ba$ satisfying the body of $\varphi$ such that $\ba_{H} = \Pi_H(\ba)$. For each rule $\rho_k$, where $k \in [M]$, let $U_k^*$ be the target (either $S$-target or $T$-target) associated with the view ${\nu}_k^*$ such that $\Pi_{{\nu}_k^*}(\ba) \in U_k^*$. Therefore, $\boldsymbol{U}^* = (U_k^*)_{k \in M}$ is a full-choice and
  $\ba$ satisfies the body of $\varphi_{\boldsymbol{U}^*}$. Hence, $\ba_H \in \varphi_{\boldsymbol{U}^*}$ and we have shown the first inclusion in \eqref{claim2}. The second inclusion follows by dropping the atoms in the body of $\varphi_{\boldsymbol{U}^*}$ that is not any view of $\boldsymbol{U}^*$'s associated PMTD.
 \end{proof}

 \newpage
 \section{Algorithms for 2-phase Disjunctive Rules} \label{sec:panda}
 
 Let $\rho$ be a 2-phase disjunctive rule taking the form \eqref{def:partitionDisjunctiveRule}, under degree constraints $\DC$ (guarded by input relations) and degree constraints $\AC$ (guarded by access request $Q_A$).
 In this section, we introduce a na\"ive algorithm that uses the $\PANDA$ algorithm to obtain a model of $\rho$ in two phases. First, we present some necessary terminologies and results.
  
  \subsection{Background}
  
  \introparagraph{Entropic Functions}
  Given a disjunctive rule $\eqref{def:disjunctiveRule}$, a set function $h: 2^{[n]} \rightarrow \bR_{+}$ is {\em entropic} if there is a joint probability distribution on $[n]$ such that $h(F)$ is the marginal entropy of $F$ for any $F \subseteq [n]$. Let $\Gamma_n^*$ be the set of all entropic functions and $h(Y|X) \defeq h(Y) - h(X)$. Under a given set of degree constraints $\DC$, any joint distribution on $[n]$ conforms to the constraints $h(Y|X) \leq n_{Y|X}$, where $n_{Y|X} \defeq \log N_{Y|X}$, for each $(X, Y, N_{Y|X}) \in \DC$.
 
 \introparagraph{Polymatroid} 
 A polymatroid is a set function $h: 2^{[n]} \rightarrow \bR_{+}$ that is non-negative, monotone, and submodular, with $h(\emptyset)=0$. To be precise, monotonicity implies that $h(Y) \geq h(X)$ for any $X \subseteq Y \subseteq [n]$ and let $h(Y|X) \defeq h(Y) - h(X)$, then submodularity implies that $h(I|I \cap J) \geq h(I \cup J| J)$ for any $I, J \subseteq [n]$. Let $\Gamma_n$ be the set of all polymatroids on $[n]$. As every entropic function is a polymatroid, it holds that $\Gamma_n^* \subseteq \Gamma_n$. 
 
 \introparagraph{Size Bound for Disjunctive Rules} Let $\rho$ be a disjunctive rule of the form \eqref{def:disjunctiveRule}. Let $\mD$ be a database instance under a given set of degree constraints $\DC$. The set
  $$
     \HDC \defeq \left\{h: 2^{[n]} \rightarrow \mathbb{R}_{+} | \bigwedge_{\left(X, Y, N_{Y | X}\right) \in \DC} h(Y|X) \leq \log N_{Y| X}\right\}
  $$
 contains all entropic functions $h$ on $[n]$ satisfying the degree constraints $\DC$. Fix a closed subset $\mF \subseteq \bR^{2^n}_+$. We define the log-size-bound with respect to $\mF$ of a disjunctive rule $\rho$ to be the quantity:
 $$ \LogSizeBound_{\mF}(\rho) \defeq \max_{h \in \mF} \min_{B \in \sfBT} h(B)
 $$
 Then for the output size of $\rho$, we have the following theorem \highlight{(see Theorem 1.5 in \cite{DBLP:conf/pods/Khamis0S17})}:
 \begin{theorem}[\cite{DBLP:conf/pods/Khamis0S17}]
  Let $\rho$ be any disjunctive rule \eqref{def:disjunctiveRule} under degree constraints $\DC$. Then for any database instance $\mD$ satisfying $\DC$,  the following holds:
    $$ \log |\rho| \leq  \underbrace{\LogSizeBound_{\Gamma_n^* \cap \HDC}(\rho)}_{\textit{entropic bound}} \leq \underbrace{\LogSizeBound_{\Gamma_n \cap \HDC}}_{\textit{polymatroid bound}}(\rho),
 $$
 \end{theorem}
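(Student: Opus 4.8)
The plan: the second inequality $\LogSizeBound_{\Gamma_n^* \cap \HDC}(\rho) \le \LogSizeBound_{\Gamma_n \cap \HDC}(\rho)$ is immediate --- every entropic function is a polymatroid, so $\Gamma_n^* \cap \HDC \subseteq \Gamma_n \cap \HDC$, and $\LogSizeBound_{\mathcal{F}}(\rho) = \max_{h \in \mathcal{F}} \min_{B \in \sfBT} h(B)$ is monotone in $\mathcal{F}$. So I would spend all the effort on the first inequality. Write $L \defeq \LogSizeBound_{\Gamma_n^* \cap \HDC}(\rho)$; unfolding the definition of $|\rho|$, it suffices to exhibit, for every database $\mD$ satisfying $\DC$, a model $(T_B)_{B \in \sfBT}$ of $\rho$ with $\log|T_B| \le L$ for every target $B$.

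I would first record two easy facts about $J \defeq \bowtie_{F \in \edges} R_F$, the set of body-satisfying tuples. (i) For any single target $B_0$, taking $T_{B_0} \defeq \Pi_{B_0}(J)$ and all other targets empty is already a model, so $|\rho| \le \min_{B} |\Pi_B(J)|$. (ii) The uniform distribution on $J$ has a marginal-entropy vector $h$ that is entropic by construction and lies in $\HDC$: for each degree constraint $(X,Y,N_{Y|X}) \in \DC$ guarded by some $R_F$ with $Y \subseteq F$, the projection $\Pi_F(J)$ is contained in $R_F$, so after conditioning on $\bx_X$ the variable $\bx_Y$ ranges over at most $N_{Y|X}$ values, giving $h(Y|X) \le \log N_{Y|X}$. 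For a single-target rule ($\sfBT = \{[n]\}$) fact (ii) already yields $\log|\rho| \le \log|J| = h([n]) \le L$. The genuine difficulty is the disjunction: for a real disjunctive rule we must \emph{route} each tuple of $J$ to one of several targets while keeping \emph{every} $T_B$ simultaneously small, and a single distribution on $J$ cannot in general be near-uniform on the projections to all targets at once, so neither of the above facts alone suffices.

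To handle the routing I would pursue two complementary lines. The constructive line invokes the $\PANDA$ algorithm of~\cite{DBLP:conf/pods/Khamis0S17}: the optimum of the linear program defining the polymatroid bound has, by LP duality, a disjunctive Shannon-flow inequality as its dual certificate, and $\PANDA$ compiles a proof sequence of that inequality into a sequence of relational operators --- joins, projections, and degree-based partitionings of relations into heavy and light parts --- whose output is a model of $\rho$; a weight/potential-tracking argument along the proof sequence bounds all produced targets by $\polyO(2^{\text{polymatroid bound}})$, which together with the trivial second inequality already gives the statement with the polymatroid bound in place of $L$. Sharpening the right-hand side down to the entropic $L$ (which can be strictly smaller) is done by an amortization argument: replace $\mD$ by its $k$-fold product $\mD^{(k)}$, whose degree constraints scale by $k$ so that the defining linear program, and hence the bound, scales by $k$; then relate models of $\mD^{(k)}$ to powers of a model of $\mD$ up to factors polynomial in $k$ and, via a typical-set / asymptotic-equipartition argument, observe that any persistent multiplicative gap would manufacture an entropic $h \in \HDC$ with $\min_B h(B) > L$ in the limit $k \to \infty$, a contradiction.

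The step I expect to be the main obstacle is exactly this routing/simultaneous-flattening phenomenon --- bounding $h([n])$ is the classical AGM-style argument, but controlling $\min_B h(B)$, i.e. covering $J$ by many small targets at once, is what separates disjunctive size bounds from ordinary ones. Making either line fully rigorous (the weight accounting in the correctness proof of $\PANDA$, or the product-database amortization together with the extraction of an \emph{entropic} rather than merely polymatroidal witness) is the technical core, and is precisely the content of Theorem~1.5 of~\cite{DBLP:conf/pods/Khamis0S17}, which is invoked here.
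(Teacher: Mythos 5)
This theorem is imported verbatim from Theorem~1.5 of \cite{DBLP:conf/pods/Khamis0S17}; the paper gives no proof of its own beyond that citation, and your treatment matches it: the containment $\Gamma_n^* \cap \HDC \subseteq \Gamma_n \cap \HDC$ together with monotonicity of $\LogSizeBound_{\mF}$ in $\mF$ correctly disposes of the second inequality, your warm-up facts about the full join and the uniform-distribution entropy are sound, and the first inequality is correctly attributed to the cited result. One caution about your speculative reconstruction of that cited proof: the $k$-fold-product amortization step is not obviously sound as stated, since a product of models of a disjunctive rule need not be a model of the product database (different factors may be covered by different targets), so relating $|\rho(\mD^{(k)})|$ to $|\rho(\mD)|^k$ is delicate --- but because you explicitly lean on Theorem~1.5 rather than on this sketch, the statement stands as in the paper.
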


 The entropic bound is tight under degree constraints in the worst case. However, its computation is often hard in general. The polymatroid bound is tight if $\rho$ is a CQ (i.e. has a single target) and there are only cardinality constraints, in which case the polymatroid bound degenerates into the AGM bound. However, it is not tight under general degree constraints.
 
   \introparagraph{The $\PANDA$ Algorithm} Given a disjunctive rule $\rho$ of the form~\eqref{def:disjunctiveRule},  the $\PANDA$ algorithm takes a database instance $\mD$, a set of degree constraint $\DC$ (guarded by $\mD$) as inputs and computes a model in time and space predicted by its polymatroid bound:
  $$ \polyO(2^{\LogSizeBound_{\Gamma_n \cap \HDC}(\rho)}). $$
  \highlight{The reader can refer to Theorem 1.7 in \cite{DBLP:conf/pods/Khamis0S17} for details.} For now, we will treat $\PANDA$ algorithm as a blackbox; we will later present how it works.
  
  \subsection{The 2-phase Framework} \label{alg:paradim}
  \highlight{In this section, we introduce a 2-phase algorithmic framework that we will follow to design 2-phase algorithms for a 2-phase disjunctive rule of the form 
  \eqref{def:partitionDisjunctiveRule}.
  }
  
  Let $\mD$ be a database instance and $Q_A$ be an arbitrary access request. We assume that \textit{hash tables on necessary index keys (of input relations and degrees of tuples in input relations)} can be pre-built at the start of the preprocessing phase as needed by the framework. That is, we assume constant-time accesses of tuples and degrees of tuples in the input relations during both phases. There are at most $O(2^{2n})$ hash tables to be pre-built per input relation $R_F$ (that is, for every $(Y, X)$-pair where $X \subset Y \subseteq F \in \edges$), so the space cost for storing all necessary hash tables is $O(|\mD|)$ in data complexity. Recall that we also assume w.l.o.g the following \textit{best constraint assumption}: we only keep at most one $(X, Y, N_{Y|X}) \in \DC$ for each $X \subset Y \subseteq [n]$ (keep the minimum $N_{Y|X}$ if there are more than one). 
  
  \introparagraph{Split Steps} Let $R \in \mD$ be the guard of a cardinality constraint $(\emptyset, Z, N_{Z|\emptyset}) \in \DC$. A {\textit{split step}} on a $(Y, X)$-pair, where $\emptyset \neq X \subset Y \subseteq Z$, applies Lemma 6.1 of \cite{DBLP:conf/pods/Khamis0S17} and partitions $R_Y \defeq \Pi_Y(R)$ into $k = 2 \log N_{Z \mid \emptyset}$ sub-tables, i.e. $R_Y^{(1)}, \ldots, R_Y^{(k)}$, such that $N_{X | \emptyset}^{(j)} \cdot N_{Y | X}^{(j)} \leq  N_{Z \mid \emptyset} $, for all $j \in [k]$, where 
 \begin{align*}
     N_{X | \emptyset}^{(j)} & \defeq \left|\Pi_X (R_Y^{(j)})\right| \\
     N_{Y | X}^{(j)} & \defeq \deg_{R_Y^{(j)}}(Y|X).
 \end{align*}
 For each of these sub-tables $R_Y^{(j)}$, we create a subproblem with inputs $(\mD^{(j)}, \DC^{(j)})$, where $\mD^{(j)} \defeq \mD \cup \{R_Y^{(j)}\}$ denotes the input tables and
 $$\DC^{(j)} \defeq \DC \cup \left\{(\emptyset, X, N_{X | \emptyset}^{(j)}), (X, Y, N_{Y | X}^{(j)}) \right\}
 $$
 denotes the degree constraints guarded by $\mD^{(j)}$.
 \textit{A sequence of split steps} on $(Y_1, X_1), (Y_2, X_2), \ldots, (Y_\ell, X_\ell)$, applies the first split step on the $(Y_1, X_1)$-pair, generating $k_1 = O(\log |\mD|)$ subproblems with inputs $(\mD^{(j)}, \DC^{(j)})$, where $j \in [k_1]$. Then, for each subproblem, applies the second split step on the $(Y_2, X_2)$-pair, generating $O((\log |\mD|)^2)$ subproblems. This iterative process goes on until every split step in the sequence is applied, thus it generates $O(poly(\log |\mD|))$ subproblems in total.
 
 \introparagraph{The 2-phase Framework} \label{paradigm}
 Now we formally characterize our 2-phase algorithmic framework. Let $S$ be the given space budget. We denote the task of obtaining a model for a 2-phase disjunctive rule $\rho$ (of the form \eqref{def:partitionDisjunctiveRule}) with input relations $\mD \cup \{Q_A\}$ satisfying degree constraints $\DC \cup \AC$ as $\rho(\mD \cup \{Q_A\}, \DC \cup \AC)$. The framework starts by using a sequence of \textit{split steps} to partition $\rho(\mD \cup \{Q_A\}, \DC \cup \AC)$ into $O(poly(\log |\mD|))$ subproblems. Then, the $j$-th subproblem with input $\mD^{(j)}$ and degree constraint $\DC^{(j)} \supseteq \DC$, denoted as $\rho(\mD^{(j)} \cup \{Q_A\}, \DC^{(j)} \cup \AC)$, either
 \begin{enumerate}
     \item [(1)] generates $S$-targets $(S_B^{(j)})_{B \in \sfBS}$ as a model of the preprocessing disjunctive rule $\rho_S$ of the form~\eqref{rule:preprocess} using $\PANDA$, provided that the output size of $\rho_S$ is within $\polyO(S)$; or
     \item [(2)] generates $T$-targets $(T_B^{(j)})_{B \in \sfBT}$ as a model of the online disjunctive rule $\rho_T$ of the form~\eqref{rule:online} using $\PANDA$. 
 \end{enumerate}

 In other words, the subproblem $\rho(\mD^{(j)} \cup \{Q_A\}, \DC^{(j)} \cup \AC)$ is conquered by applying $\PANDA$ to obtain either a model $\rho_S$ with input relations $\mD^{(j)}$ under degree constraint $\DC^{(j)}$, denoted as $\rho_S(\mD^{(j)}, \DC^{(j)})$, or a model of $\rho_T$ with input relations $\mD^{(j)} \cup \{Q_A\}$ under degree constraint $\DC^{(j)} \cup \AC$, denoted as $\rho_T(\mD^{(j)} \cup \{Q_A\}, \DC \cup \AC)$. After all subproblems are computed, the model of $\rho$ is simply the union over $S$-targets and $T$-targets generated from all subproblems.

 \introparagraph{Analysis of the 2-phase Framework} Next, we analyze the intrinsic space-time tradeoff (specified in \autoref{sec:TPDR}, between $S_{\rho}$ and $T_{\rho}$) that can be obtained by the 2-phase framework introduced above. First, the split steps incur a poly-logarithmic factor on both $S_{\rho}$ and $T_{\rho}$ and spawn $O(poly(\log |\mD|))$ subproblems. Then, the $j$-th spawned subproblem $\rho(\mD^{(j)} \cup \{Q_A\}, \DC^{(j)} \cup \AC)$  is conquered by $\PANDA$ in one of the two phases. Note that $\DC^{(j)} \supseteq \DC$ contains extra degree constraints due to the split steps. For ease of analysis, we separate out the extra constraints by defining a set $\SC^{(j)} \defeq \DC^{(j)}\setminus \DC$. To apply the polymatroid bound, we use two polymatroids, $\hS \in \Gamma_n$ to represent the preprocessing phase and $\hT \in \Gamma_n$ to represent the online phase. We define 
  $$
     \HDC \defeq \left\{h: 2^{[n]} \rightarrow \mathbb{R}_{+} | \bigwedge_{\left(X, Y, N_{Y | X}\right) \in \DC} h(Y|X) \leq \log N_{Y| X}\right\}
  $$
  $$
     \HSC^{(j)} \defeq \left\{h: 2^{[n]} \rightarrow \mathbb{R}_{+} | \bigwedge_{\left(X, Y, N_{Y | X}\right) \in \SC^{(j)}} h(Y|X) \leq \log N_{Y|X}\right\}
  $$
  $$
     \HWC \defeq \left\{h: 2^{[n]} \rightarrow \mathbb{R}_{+} | \bigwedge_{\left(X, Y, N_{Y | X}\right) \in \AC} h(Y|X) \leq \log N_{Y| X}\right\}
  $$
  where $h(Y|X) = h(Y) - h(X)$, to denote collections of set functions that satisfy $\DC$, $\SC^{(j)}$ and $\AC$, respectively. The 2-phase framework enforces that $\hS$ conforms to $\HDC \cap \HSC^{(j)}$ and $\hT$ conforms to $\HDC \cap \HSC^{(j)} \cap \HWC$. Thus, the  $j$-th subproblem costs space $\polyO(S_{\rho}^{(j)})$ in the preprocessing phase, where 
  \begin{align}\label{bound:S}
      \log S_{\rho}^{(j)} \defeq \LogSizeBound_{\hS \in \Gamma_n \cap \HDC \cap \HSC^{(j)}} (\rho_S)
  \end{align}
  provided that $S_{\rho}^{(j)} \leq S$. Otherwise, $j$-th subproblem costs time (and space) $\polyO(T_{\rho}^{(j)})$, where
  \begin{align}\label{bound:T}
      \log T_{\rho}^{(j)} \defeq \LogSizeBound_{\hT \in \Gamma_n  \cap \HDC \cap \HSC^{(j)} \cap \HWC} (\rho_T) 
  \end{align}
  By conquering all subproblems, we conclude that $\bspace_{\rho} = \max_j S_{\rho}^{(j)} \leq S$ and $\btime_{\rho} = \max_j T_{\rho}^{(j)}$. 
  
%
 

  \subsection{A Na\"ive 2-phase Algorithm}
  \highlight{In this section, we use the 2-phase framework to design a 2-phase algorithm for a 2-phase disjunctive rule $\rho$ that attains the smallest possible $T_{\rho}$ for a fixed space budget $S$.}
  
  Recall that for each $(\emptyset, Z, N_{Z|\emptyset}) \in \DC$, there are at most $2^{2 n}$ $(Y, X)$-pairs with $\emptyset \neq X \subset Y \subseteq Z$. Thus, the total number of distinct split steps is a constant in data complexity. To exploit the full potential of split steps, we design a na\"ive 2-phase algorithm that applies a sequence of all distinct split steps. Also, recall that a finite sequence of split steps spawns $O(poly(\log |\mD|))$ subproblems.
  
  \introparagraph{The Na\"ive Algorithm} 
  As said, the na\"ive algorithm first applies a sequence of all distinct split steps. Intuitively, this partitions $\mD$ into its most fine-grained pieces. Let $\rho(\mD^{(j)} \cup \{Q_A\}, \DC^{(j)} \cup \AC)$ be the $j$-th subproblem spawned after the sequence of all distinct split steps. Recall that $\SC^{(j)} = \DC^{(j)}\setminus \DC$. The following {{\em splitting property}} is a direct result of a sequence of all distinct split steps on $\DC$: for any $(\emptyset, Z, N_{Z|\emptyset}) \in \DC$ and $(Y, X)$-pair with $\emptyset \neq X \subset Y \subseteq Z$, there are some $(\emptyset, X,  N_{X | \emptyset}^{(j)}), (X, Y, N_{Y|X}^{(j)}) \in \SC^{(j)}$ such that $ N_{X | \emptyset}^{(j)} \cdot N_{Y | X}^{(j)} \leq N_{Z | \emptyset}$.
  Though each subproblem varies in its own $\SC^{(j)}$, the splitting property holds across all subproblems. To encode the splitting property, we define split constraints.
  
  \begin{definition}[Split Constraints] Let $\DC$ be a set of degree constraints. A \textit{split constraint} is a triple $(X, Y|X, N_{Z|\emptyset})$ where $\emptyset \neq X \subset Y \subseteq Z, (\emptyset, Z, N_{Z|\emptyset}) \in \DC$. A relation $R_F$ is said to guard a split constraint $(X, Y|X, N_{Z|\emptyset})$ if $R_F$ guards $(\emptyset, Z, N_{Z|\emptyset})$. The set of all split constraints spanned from $\DC$, is denoted as
   $$ \SC \defeq \left\{(X, Y|X, N_{Z|\emptyset}) \mid  \emptyset \neq X \subset Y \subseteq Z, (\emptyset, Z, N_{Z|\emptyset}) \in \DC \right\}.
  $$
  \end{definition}
  
  Intuitively, each triple $(X, Y|X, N_{Z|\emptyset}) \in \SC$ encodes the splitting property for the $(Y, X)$-pair on $(\emptyset, Z, N_{Z|\emptyset}) \in \DC$. Since we assume that every $(\emptyset, Z, N_{Z|\emptyset}) \in \DC$ has at least one guard, every $(X, Y|X, N_{Z|\emptyset}) \in \SC$ is guarded by some $R_F \in \mD$.

  The na\"ive algorithm stores $S$-views for
  $\rho_S(\mD^{(j)}, \DC^{(j)})$ whenever its polymatroid bound, as specified in \eqref{bound:S}, is no larger than $\log \bspace$. Otherwise, it applies $\PANDA$ algorithm as a black box for $\rho_T(\mD^{(j)} \cup \{Q_A\}, \DC^{(j)} \cup \AC)$ in the online phase in time as specified in \eqref{bound:T}.

  \introparagraph{Analysis of the Na\"ive Algorithm} \highlight{The na\"ive algorithm exploits the full potential of the framework (by exhausting all possible split steps) and gets the best possible $T_{\rho}$ when $S_{\rho} \leq S$ (up to a poly-logarithmic factor). In particular, we state the following theorem for the best possible $T_{\rho}$:}
 \begin{theorem}
  For a 2-phase disjunctive rule \eqref{def:partitionDisjunctiveRule}, under space budget $S$, the na\"ive algorithm obtains $T_{\rho} = 2^{\OBJ(S)}$, where 
   \begin{equation}
    \begin{aligned}
       \OBJ(S) = \max_{ \substack{ \hS \in \HDC,  \hT \in \HDC \cap \HWC \\ (\hS, \hT) \in (\Gamma_n \times \Gamma_n) \cap \HSC} } \quad & \min_{B \in \sfBT} \hT(B) \\
     \textit{s.t. }  \quad  \quad & \hS(B) > \log \bspace,  & B \in \sfBS,
    \end{aligned}
    \label{OPT-maximin}
 \end{equation}
 assuming that $\OBJ(S)$ is postive and bounded.
 \end{theorem}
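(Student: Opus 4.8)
The plan is to read the theorem off the framework analysis of Section~\ref{sec:panda} as essentially a bookkeeping step. Recall that the na\"ive algorithm applies the sequence of all distinct split steps, spawning $O(\poly(\log|\mD|))$ subproblems $\rho(\mD^{(j)}\cup\{Q_A\},\DC^{(j)}\cup\AC)$; it materialises the $j$-th one exactly when $\log S_\rho^{(j)}=\LogSizeBound_{\Gamma_n\cap\HDC\cap\HSC^{(j)}}(\rho_S)\le\log\bspace$, and otherwise pays $\polyO(T_\rho^{(j)})$ online with $\log T_\rho^{(j)}=\LogSizeBound_{\Gamma_n\cap\HDC\cap\HSC^{(j)}\cap\HWC}(\rho_T)$. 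Since the split steps contribute only a poly-logarithmic factor, $\log\btime_\rho=\max_{j:\,S_\rho^{(j)}>\bspace}\log T_\rho^{(j)}$ up to an additive $O(\log\log|\mD|)$, so it suffices to prove that this maximum equals $\OBJ(S)$, which I would do by two matching inequalities.

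For $\log\btime_\rho\le\OBJ(S)$ I would fix an online subproblem $j$ and, using that $\Gamma_n\cap\HDC\cap\HSC^{(j)}$ is compact (cardinality constraints together with monotonicity bound every entropy value), pick maximisers $\hS^{(j)}$ of the program defining $\log S_\rho^{(j)}$ and $\hT^{(j)}$ of the one defining $\log T_\rho^{(j)}$; then $\hS^{(j)}(B)\ge\log S_\rho^{(j)}>\log\bspace$ for all $B\in\sfBS$ and $\min_{B\in\sfBT}\hT^{(j)}(B)=\log T_\rho^{(j)}$. The claim is that $(\hS^{(j)},\hT^{(j)})$ is feasible for~\eqref{OPT-maximin}. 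Membership constraints and the side constraints are immediate; the one real point is $(\hS^{(j)},\hT^{(j)})\in\HSC$, and this is exactly where the splitting property enters: for each split constraint $(X,Y|X,N_{Z|\emptyset})\in\SC$ it supplies $(\emptyset,X,N_{X|\emptyset}^{(j)}),(X,Y,N_{Y|X}^{(j)})\in\SC^{(j)}$ with $N_{X|\emptyset}^{(j)}\cdot N_{Y|X}^{(j)}\le N_{Z|\emptyset}$, and since both $\hS^{(j)}$ and $\hT^{(j)}$ obey $\HSC^{(j)}$ we get $\max(\hS^{(j)}(X),\hT^{(j)}(X))+\max(\hS^{(j)}(Y|X),\hT^{(j)}(Y|X))\le\log N_{X|\emptyset}^{(j)}+\log N_{Y|X}^{(j)}\le\log N_{Z|\emptyset}$, which is precisely the inequality $\HSC$ records for that split constraint. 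Hence $\OBJ(S)\ge\log T_\rho^{(j)}$ for every online $j$, and maximising over $j$ closes this direction.

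For $\log\btime_\rho\ge\OBJ(S)$ I would start from an optimal pair $(\hS^{\star},\hT^{\star})$ of~\eqref{OPT-maximin} and exhibit one subproblem $j^{\star}$ of the all-splits sequence that is online and has $\log T_\rho^{(j^{\star})}\ge\OBJ(S)$. The construction chooses, independently for each split step on a $(Y,X)$-pair of a relation guarding $(\emptyset,Z,N_{Z|\emptyset})$, a dyadic level: its branches realise (worst case over data) constraints of the form $h(X)\le\log N_{Z|\emptyset}-\ell$, $h(Y|X)\le\ell$ over a dyadic range of split points $\ell$, and the fact that $(\hS^{\star},\hT^{\star})\in\HSC$ certifies $\max(\hS^{\star}(X),\hT^{\star}(X))+\max(\hS^{\star}(Y|X),\hT^{\star}(Y|X))\le\log N_{Z|\emptyset}$, so some $\ell$ makes both $\hS^{\star}$ and $\hT^{\star}$ satisfy the two realised constraints at once (up to a harmless $O(1)$ rounding absorbed into the $\polyO$ factor). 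Doing this for all split steps produces $j^{\star}$ with $\hS^{\star},\hT^{\star}\in\HSC^{(j^{\star})}$, so $\hS^{\star}$ witnesses $\log S_\rho^{(j^{\star})}\ge\min_{B\in\sfBS}\hS^{\star}(B)>\log\bspace$ (hence $j^{\star}$ is online) and $\hT^{\star}$ witnesses $\log T_\rho^{(j^{\star})}\ge\min_{B\in\sfBT}\hT^{\star}(B)=\OBJ(S)$. Combined with the first direction this yields $\log\btime_\rho=\OBJ(S)$ up to $O(\log\log|\mD|)$, i.e.\ $\btime_\rho=\polyO(2^{\OBJ(S)})$; the hypothesis that $\OBJ(S)$ is positive and bounded is invoked only to rule out the degenerate regimes (when $\sfBT=\emptyset$, or when no subproblem goes online).

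The step I expect to be the main obstacle is the second inequality, specifically the dyadic-level construction: one must argue that the finite, discrete family of branches produced by the split steps is rich enough to realise every continuous ``budget splitting'' allowed by the pairwise constraint $\HSC$, carefully tracking the rounding incurred at each dyadic choice and using that distinct $(Y,X)$-pairs of the same relation are split in sequence so their levels can be fixed independently; one also has to verify that the worst-case instance underlying the chosen branch $j^{\star}$ is globally consistent, so that $\log T_\rho^{(j^{\star})}$ is genuinely attained rather than merely an upper estimate. The first inequality is comparatively routine --- a definition-chase once the pairwise form of $\HSC$ in terms of $\SC$ is written out --- with the splitting property doing the only substantive work.
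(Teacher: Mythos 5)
Your first inequality is, in substance, the paper's entire proof, and it is all the theorem requires. The paper merges \eqref{bound:S} and \eqref{bound:T} into a per-subproblem maximin over pairs $(\hS,\hT)$ constrained by $\HSC^{(j)}$ (with the side constraint $\hS(B)>\log\bspace$ encoding that the subproblem was sent online), then observes that the splitting property gives $\HSC^{(j)}\times\HSC^{(j)}\subseteq\HSC$ and relaxes to the subproblem-independent program \eqref{OPT-maximin}. Your variant --- take the maximizers $\hS^{(j)}$ and $\hT^{(j)}$ of the two individual polymatroid programs, verify the pair lies in $\HSC$ via $N^{(j)}_{X|\emptyset}\cdot N^{(j)}_{Y|X}\le N_{Z|\emptyset}$, and use $\hS^{(j)}(B)\ge\log S^{(j)}_{\rho}>\log\bspace$ for online subproblems --- is the same relaxation argument written pointwise, and it is correct.

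Where you depart from the paper is the attempted matching lower bound, and that half is both unnecessary and, as sketched, not salvageable in the form you state it. The theorem's ``obtains $T_{\rho}=2^{\OBJ(S)}$'' is an algorithmic guarantee (online time $\polyO(2^{\OBJ(S)})$, everything up to poly-logarithmic factors); the paper proves only the upper bound and makes no instance-wise tightness claim. Your dyadic-level construction runs into exactly the obstacles you flag: the subproblems and their extra constraints $\SC^{(j)}$ come from split steps applied to the \emph{actual} relations, so the degree thresholds $N^{(j)}_{X|\emptyset},N^{(j)}_{Y|X}$ are data-determined rather than freely choosable levels; the quantities $\log T^{(j)}_{\rho}$ are polymatroid upper bounds that the given instance need not attain; and the relaxation from $\HSC^{(j)}\times\HSC^{(j)}$ to $\HSC$ is strict in general, so $\max_j \log T^{(j)}_{\rho} = \OBJ(S)$ can fail on particular databases. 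The fix is not to repair that direction but to drop it: the statement being proved does not assert it, and the paper's proof (your first paragraph) already suffices.
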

 \begin{proof}
  Merging \eqref{bound:S} and \eqref{bound:T}, we get that $\log T_{\rho}^{(j)}$ for the $j$-th subproblem can be expressed as
  \begin{equation}\label{OPT-subprob}
      \begin{aligned}
             \log T^{(j)}_{\rho} = \max_{ \substack{ \hS \in \HDC \cap \HSC^{(j)} \\ \hT \in \HDC \cap \HSC^{(j)} \cap \HWC  \\ \hS, \hT \in \Gamma_n} } \quad & \min_{B \in \sfBT} \hT(B) \\
     \textit{s.t. }  \quad  \quad & \hS(B) > \log \bspace,  & B \in \sfBS.
      \end{aligned}
  \end{equation}
  Note that the maximin optimization \eqref{OPT-subprob} is subproblem-dependent, since it is constrained on $\HSC^{(j)}$. To avoid this dependency, we recall that the na\"ive algorithm dictates the splitting property. Thus, we define the (subproblem-independent) set $\SC$ as follows:
   \begin{align*}
     \HSC \defeq \left\{(\hS, \hT): 2^{[n]} \times 2^{[n]} \rightarrow \bR^2_+| \bigwedge_{(X, Y|X, N_{Z|\emptyset})\in \SC} ( \hS(X) + \hT(Y|X) \leq \log N_{Z|\emptyset} ) \wedge ( \hS(Y|X) + \hT(X) \leq \log N_{Z|\emptyset} ) \right\},
 \end{align*}
 where $\hS(Y|X) \defeq \hS(Y) - \hS(X), \hT(Y|X) \defeq \hT(Y) - \hT(X)$. $\HSC$ is a universal collection of set functions pairs 
 satisfying the splitting property and thus, it correlates $\hS$ and $\hT$. Since every subproblem satisfies the splitting property, it holds that $\HSC^{(j)} \times \HSC^{(j)} \subseteq \HSC$. By relaxing $\HSC^{(j)} \times \HSC^{(j)}$ to $\HSC$, we get the desired upper bound \eqref{OPT-maximin} for $T_{\rho}$.
 \end{proof}

   By assigning $\hT$ to be always $0$, it is easy to see that the feasible region of \eqref{OPT-maximin} is empty if and only if
    $$
    \LogSizeBound_{\hS \in \Gamma_n \cap \HDC}(\rho_S) \leq \log \bspace,
    $$ 
    in which case we can simply store the $S$-views within space $\polyO(\bspace)$. Otherwise, the feasibility of \eqref{OPT-maximin} is guaranteed. However, the na\"ive algorithm has the following drawbacks in terms of practicality: $(1)$ the exhaustive splitting steps can incur a large poly-logarithmic factor; $(2)$ for every subproblem, we need to run $\PANDA$ from scratch with a new instance, $(3)$ the space-time tradeoff obtained is hard to interpret. In the following sections, we introduce the 2-phase $\PANDA$ algorithm, called $\twoPP$, that also attains the intrinsic tradeoff as specified in~\eqref{OPT-maximin}, while efficiently addressing these two drawbacks and it obtains much practical/interpretable intrinsic tradeoff(s).

\section{ The $\twoPP$ algorithm}
\label{sec:flow}
  
 The 2-phase $\PANDA$ ($\twoPP$) algorithm, similar to $\PANDA$, is built on a class of inequalities called the joint Shannon-flow inequalities. In this section, we first present some background on Shannon-flow inequalities, and then our extension to joint inequalities. Finally, we present the $\twoPP$ algorithm.
 
\subsection{Shannon-flow Inequalities} 

The following inequality
 \begin{align}\label{eq:shannon-flow-inequality}
      \sum_{X \subset Y \subseteq [n]} \delta_{Y|X} \cdot h(Y|X) \defeq \sum_{X \subset Y \subseteq [n]} \delta_{Y|X} \cdot (h(Y) - h(X)) \geq \sum_{\emptyset \neq Z \subseteq [n]} \lambda_{Z|\emptyset} \cdot h(Z|\emptyset),
 \end{align}
 is called a {\em Shannon-flow inequality} if it holds for any polymatroid function $h \in \Gamma_n$ and all $\delta_{Y|X}$ and $\lambda_{Z|\emptyset}$ are some non-negative rational coefficients, i.e., $\delta_{Y|X}, \lambda_{Z|\emptyset} \in \bQ_+$. To concisely represent Shannon-flow inequalities, we define the conditional polymatroid as in~\cite{DBLP:conf/pods/Khamis0S17}.
 
 \introparagraph{Conditional Polymatroids} \label{def:condPolymatroid}
  Let $\mC \subseteq 2^{[n]} \times 2^{[n]}$ denote the set of all pairs $(X, Y)$ such that $\emptyset \subseteq X \subset Y \subseteq [n]$. A vector $\bh \in \bR^{\mC}_+$ has coordinates indexed by pairs $(X, Y) \in \mC$ and we denote the corresponding coordinate value of $\bh$ by $h(Y|X)$. A vector $\bh$ is called a \textit{conditional polymatroid} if and only if there is a polymatroid $h$ such that $h(Y|X) = h(Y) - h(X)$; and, we say that the polymatroid $h$ defines the conditional polymatroid $\bh$. In particular, $\bh = (h(Y|X))_{(X, Y) \in \mC}$.

  If for \eqref{eq:shannon-flow-inequality}, we define $\lambda_{Z|X} = 0$ for any $\emptyset \neq X \subset Z \subseteq [n]$, then each of $\{\delta_{Y|X}\}, \{\lambda_{Y|X}\}$ can be interpreted as vectors over $(X, Y)$ pairs, where $\emptyset \subseteq X \subset Y \subseteq [n]$. We denote them as $\bdelta, \blambda$, respectively. Thus, the Shannon-flow inequality \eqref{eq:shannon-flow-inequality} can be re-written into an inequality on conditional polymatroids, i.e. the $\bQ_+^{\mC}$ space, as $\langle \bdelta, \bh \rangle \geq \langle \blambda, \bh \rangle$, where $\langle \cdot, \cdot \rangle$ denotes dot product.

 \introparagraph{Proof Sequences} One of the major contributions from \cite{DBLP:conf/pods/Khamis0S17} says that any Shannon-flow inequality $\langle \bdelta, \bh \rangle \geq \langle \blambda, \bh \rangle$ can be proved by just applying the following $4$ rules:
\begin{align*}
    \textit{(R1)} & \textit{ submodularity rule} & h(I \cup J | J) - h(I |I \cap J) & \leq 0, \quad I \perp J  \\
    \textit{(R2)} & \textit{ monotonicity rule}  & -h(Y|\emptyset) + h(X|\emptyset) & \leq 0, \quad X \subset Y \\
    \textit{(R3)} & \textit{ composition rule} & h(Y|\emptyset) - h(Y|X) - h(X|\emptyset) & \leq 0, \quad X \subset Y  \\
    \textit{(R4)} & \textit{ decomposition rule}  & - h(Y|\emptyset) + h(Y|X) + h(X|\emptyset) & \leq 0, \quad X \subset Y 
\end{align*}
 where $I \perp J$ means $I \nsubseteq J$ and $J \nsubseteq I$. $(R1)$ and $(R2)$ come exactly from the submodularity and monotonicity properties of polymatroids. $(R3)$ and $(R4)$ simply follow from the definition of $h(Y|X) = h(Y) - h(X)$. All the rules can also be vectorized over all $(X, Y)$ pairs, where $\emptyset \subseteq X \subset Y \subseteq [n]$. For every $I \perp J$, we define a vector $\bs_{I, J}$, and for every $X \subset Y$, we define three vectors $\bm_{X, Y}, \bc_{X, Y}, \bd_{Y, X}$ such that the linear rules above can be written using dot-products:
 \begin{align*}
    \textit{(R1)} & \textit{ submodularity rule} & \langle \bs_{I, J}, \bh \rangle & \leq 0, \quad I \perp J \\
    \textit{(R2)} & \textit{ monotonicity rule}  & \langle \bm_{X, Y}, \bh \rangle & \leq 0, \quad X \subset Y \\
    \textit{(R3)} & \textit{ composition rule} & \langle \bc_{X, Y}, \bh \rangle & \leq 0, \quad X \subset Y  \\
    \textit{(R4)} & \textit{ decomposition rule}  & \langle \bd_{Y, X}, \bh \rangle & \leq 0, \quad X \subset Y 
\end{align*}
 A {\em proof sequence} of a Shannon-flow inequality $\langle \bdelta, \bh \rangle \geq \langle \blambda, \bh \rangle$ is a sequence $\proofseq \defeq (w_1 \proofstep_1, w_2 \proofstep_2, \ldots, w_\ell \proofstep_\ell)$ of length $\ell$ satisfying all of the following:
 \begin{enumerate}
     \item [(1)] $\proofstep_i \in \{\bs_{I, J}, \bm_{X, Y}, \bc_{X, Y}, \bd_{Y, X}\}$ for every $i \in [\ell]$, called a {\em proof step};
     \item [(2)] $w_i \in \bQ_+$ for every $i \in [\ell]$, called the {\em weight} of the proof step $\proofstep_i$;
     \item [(3)] the vectors $\bdelta_0 \defeq \bdelta, \bdelta_1, \ldots, \bdelta_\ell$ defined by $\bdelta_i = \bdelta_{i-1} + w_i \cdot \proofstep_i$ are non-negative rational vectors;
     \item [(4)] $\bdelta_\ell \geq \blambda$ (element-wise comparison).
 \end{enumerate}
 A proof sequence $(w_1 \proofstep_1, w_2 \proofstep_2, \ldots, w_\ell \proofstep_\ell)$ implies the following inequalities for all polymatroids $h \in \Gamma_n$,
 $$ \langle \bdelta, \bh \rangle = \langle \bdelta_0, \bh \rangle \geq \cdots \geq \langle \bdelta_\ell, \bh \rangle  \geq \langle \blambda, \bh \rangle,
 $$
 which provides a step-by-step proof for the Shannon-flow inequality $\langle \bdelta, \bh \rangle \geq \langle \blambda, \bh \rangle$. The following theorem (also Theorem 2 in~\cite{DBLP:conf/pods/Wang022}) is a direct result from Theorem B.12 and Proposition B.13 in~\cite{DBLP:conf/pods/Khamis0S17}.
 \begin{theorem}[\cite{DBLP:conf/pods/Khamis0S17}] \label{thm:prooseqfinite}
  For any Shannon-flow inequality $\langle \bdelta, \bh \rangle \geq \langle \blambda, \bh \rangle$ such that $\|\blambda\|_1 = 1$, there is a proof sequence of length $O(poly(2^n))$.
 \end{theorem}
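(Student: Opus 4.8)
The plan is to prove Theorem~\ref{thm:prooseqfinite} by combining an LP-duality argument for the \emph{existence} of a proof sequence with a basic-feasible-solution argument bounding its \emph{length}; this is the route of Appendix~B of~\cite{DBLP:conf/pods/Khamis0S17} (Theorem~B.12 together with Proposition~B.13), restated as Theorem~2 of~\cite{DBLP:conf/pods/Wang022}, and what follows is a sketch of that development.

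First I would pin down the ambient polyhedral cone. Working over the pairs in $\mC$, the set $C$ of conditional polymatroids is the cone cut out by the submodularity inequalities $\langle \bs_{I, J}, \bh\rangle \le 0$, the monotonicity inequalities $\langle \bm_{X, Y}, \bh\rangle \le 0$, the composition and decomposition inequalities $\langle \bc_{X, Y}, \bh\rangle \le 0$ and $\langle \bd_{Y, X}, \bh\rangle \le 0$ (the last two forcing $h(Y|\emptyset) = h(Y|X) + h(X|\emptyset)$), and the nonnegativity constraints $h(Y|X)\ge 0$. The hypothesis says $\langle \bdelta - \blambda, \bh\rangle \ge 0$ for every $\bh \in C$, so Farkas' lemma produces nonnegative rationals $w_{\proofstep}$, one per rule vector $\proofstep \in \{\bs_{I, J}, \bm_{X, Y}, \bc_{X, Y}, \bd_{Y, X}\}$, and a nonnegative vector $\bz$ with $\bdelta - \blambda = \sum_{\proofstep} w_{\proofstep}(-\proofstep) + \bz$; equivalently, $\bdelta + \sum_{\proofstep} w_{\proofstep}\,\proofstep = \blambda + \bz \ge \blambda$, which is requirement~(4) of a proof sequence. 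Rationality of the $w_{\proofstep}$ follows since the feasibility system has rational data.

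The crux --- and the step I expect to be the main obstacle --- is to turn this unordered certificate into an \emph{ordered} sequence $(w_1\proofstep_1, \dots, w_\ell\proofstep_\ell)$ whose partial sums $\bdelta_i = \bdelta_{i-1} + w_i\proofstep_i$ remain coordinatewise nonnegative (requirement~(3)). A careless ordering can deadlock, since a rule may need to draw mass from a coordinate already driven to $0$. I would argue that whenever some certificate weight is still unspent, at least one scaled rule can be applied now --- possibly at a reduced weight, deferring its remainder --- without breaking nonnegativity, and that applying it strictly decreases a potential such as the $\ell_1$-distance from $\bdelta_i$ to $\blambda$; the normalization $\|\blambda\|_1 = 1$ keeps all intermediate magnitudes bounded and makes the process halt. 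Showing that no such deadlock can occur is the delicate part and hinges on the submodular and monotone structure of $C$.

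Finally, for the length bound I would take not an arbitrary certificate but a basic feasible solution of the feasibility system: since that system has $|\mC|$ scalar equations, at most $|\mC|$ of the $w_{\proofstep}$ (and of $\bz$) are positive, so only $|\mC| = poly(2^n)$ distinct scaled rules occur. Arranging the reordering so that each distinct scaled rule is used essentially once --- by processing rules in an order compatible with the potential --- yields a proof sequence of length $\ell = O(poly(2^n))$, as claimed.
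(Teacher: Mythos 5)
The paper offers no proof of this statement at all: it is imported from \cite{DBLP:conf/pods/Khamis0S17} (Theorem~B.12 together with Proposition~B.13), so your sketch has to stand on its own merits. Its first step is essentially sound: because you include both the composition and the decomposition rules, the cone they cut out together with monotonicity, submodularity and nonnegativity forces $h(Y|\emptyset)=h(Y|X)+h(X|\emptyset)$ and is exactly the set of conditional polymatroids, so Farkas/LP duality applied to the hypothesis $\langle \bdelta-\blambda,\bh\rangle\ge 0$ does yield nonnegative rational weights $w_{\proofstep}$ with $\bdelta+\sum_{\proofstep}w_{\proofstep}\,\proofstep\ge\blambda$, and a basic feasible solution bounds the number of nonzero weights by $|\mC|=O(poly(2^n))$.

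The genuine gap is exactly the step you flag and then leave open: turning this \emph{unordered} conic certificate into an \emph{ordered} sequence whose partial sums $\bdelta_i$ stay coordinatewise nonnegative (requirement~(3)). Your proposed remedy --- greedily apply some still-unspent rule at reduced weight and argue termination via an $\ell_1$-type potential, with $\|\blambda\|_1=1$ invoked only vaguely --- is neither carried out nor obviously true: a submodularity step moves mass without changing $\|\bdelta_i\|_1$, so your potential need not strictly decrease, and nothing in the sketch excludes a state in which every remaining rule must draw from a coordinate already driven to zero. This is also not how the cited source argues: there the proof sequence is constructed directly by induction from a dual witness (the flow variables $(\bsigma,\bmu)$ satisfying inflow conditions), with the inequality and witness rewritten after each extracted step, and the length bound of Proposition~B.13 comes from the accounting inside that inductive construction, in which rules may be split and reused --- not from the unsupported claim that each certificate rule is applied ``essentially once.'' Since both requirement~(3) and the $O(poly(2^n))$ length bound in your sketch rest on the unproven no-deadlock claim, the central step of the theorem is missing.
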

  \autoref{thm:prooseqfinite} implies that there is a proof sequence for a Shannon-flow inequality that is exponentially long in the number of variables $n$, but it is of constant length under fixed query sizes.

 \subsection{Joint Shannon-flow Inequalities}
 
 To motivate our study of joint Shannon-flow inequalities, we first give a characterization of the optimal objective value of the maximin optimization problem \eqref{OPT-maximin}, $\OBJ(S)$, through the following class of LPs
  \begin{equation}\label{OPT-maxminLPIneq}
     \bL(\blambda_{\sfBT}, \btheta_{\sfBS}, S) \defeq \max_{ \substack{ \hS \in \HDC,  \hT \in \HDC \cap \HWC \\ (\hS, \hT) \in (\Gamma_n \times \Gamma_n) \cap \HSC} } \quad  \sum_{B \in \sfBT} \lambda_B \cdot \hT(B) + \sum_{B \in \sfBS} \theta_B \cdot \hS(B) - (\log \bspace) \cdot \|\btheta_{\sfBS}\|_1, 
   \end{equation}
   parameterized by two vectors, $\blambda_{\sfBT} \defeq (\lambda_B)_{B \in \sfBT} \in {\bQ_+^{\sfBT}}$ with $\|\blambda_{\sfBT}\|_1 = 1$ and $\btheta_{\sfBS} \defeq (\theta_B)_{B \in \sfBS} \in {\bQ_+^{\sfBS}}$. Equivalently, any 
  \begin{align}\label{weightsSF}
      (\blambda_{\sfBT}, \btheta_{\sfBS}) \in \left\{(\ba_T,\ba_S ) \mid \ba_T \in \bQ_+^{\sfBT}, \ba_S \in \bQ_+^{\sfBS}, \|\ba_T\|_1 = 1 \right\} 
  \end{align}
  gives rise to an LP of the form \eqref{OPT-maxminLPIneq} with optimal objective value $\bL(\blambda_{\sfBT}, \btheta_{\sfBS}, S)$. 
  
\begin{lemma}\label{lem:maxminLP}
  Let $\bspace$ be a fixed quantity. For any $(\blambda_{\sfBT}, \btheta_{\sfBS})$ as specified in \eqref{weightsSF}, the optimal objective value of \eqref{OPT-maximin}, $\OBJ(S)$ satisfies,
  $$ \OBJ(S) \leq \bL(\blambda_{\sfBT}, \btheta_{\sfBS}, S)
  $$
  Moreover, assuming that $\OBJ(S)$ is positive and bounded, there is an optimal $(\blambda^*_{\sfBT}, \btheta^*_{\sfBS})$ satisfying \eqref{weightsSF} such that 
  $$
  \OBJ(S) = \bL(\blambda^*_{\sfBT}, \btheta^*_{\sfBS}, S).
  $$
 \end{lemma}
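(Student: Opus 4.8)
The plan is to view \eqref{OPT-maximin} and \eqref{OPT-maxminLPIneq} as a primal/dual pair and argue in three steps: weak duality by substitution, a reduction of the strict constraint $\hS(B)>\log\bspace$ to the closed one $\hS(B)\ge\log\bspace$, and strong duality via linear programming. Write $\mathcal{K}\defeq\{(\hS,\hT):\hS\in\Gamma_n\cap\HDC,\ \hT\in\Gamma_n\cap\HDC\cap\HWC,\ (\hS,\hT)\in\HSC\}$ for the feasible region common to both programs; assume $\mathcal{K}\neq\emptyset$ (otherwise both sides are $-\infty$ and there is nothing to prove). Since $\Gamma_n$, $\HDC$, $\HWC$, $\HSC$ are cut out by finitely many linear (in)equalities and all entropy terms are bounded by the cardinality constraints in $\DC$ (monotonicity and submodularity give $h(Y)\le\sum_i h(\{i\})$, and each $h(\{i\})\le\log|R_F|$ for any atom containing $i$), $\mathcal{K}$ is a compact polytope $\{x=(\hS,\hT):Ax\le b\}$. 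For the first inequality, fix $(\blambda_{\sfBT},\btheta_{\sfBS})$ as in \eqref{weightsSF} and $\epsilon>0$, and pick $(\hS^{\circ},\hT^{\circ})$ feasible for \eqref{OPT-maximin} with $\min_{B\in\sfBT}\hT^{\circ}(B)\ge\OBJ(S)-\epsilon$; then $(\hS^{\circ},\hT^{\circ})\in\mathcal{K}$ and $\hS^{\circ}(B)>\log\bspace$ for $B\in\sfBS$, so using $\blambda_{\sfBT}\ge 0$, $\|\blambda_{\sfBT}\|_1=1$, $\btheta_{\sfBS}\ge 0$,
\[ \textstyle\sum_{B\in\sfBT}\lambda_B\hT^{\circ}(B)+\sum_{B\in\sfBS}\theta_B\hS^{\circ}(B)-(\log\bspace)\|\btheta_{\sfBS}\|_1 \;\ge\; \min_{B\in\sfBT}\hT^{\circ}(B) \;\ge\; \OBJ(S)-\epsilon. \]
As $(\hS^{\circ},\hT^{\circ})$ lies in the maximization region defining $\bL$, this gives $\bL(\blambda_{\sfBT},\btheta_{\sfBS},S)\ge\OBJ(S)-\epsilon$, and $\epsilon\downarrow 0$ yields $\OBJ(S)\le\bL(\blambda_{\sfBT},\btheta_{\sfBS},S)$; no positivity is used here.

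Next I would establish $\OBJ(S)=\OBJ'(S)$, where $\OBJ'(S)$ is \eqref{OPT-maximin} with the strict constraints weakened to $\hS(B)\ge\log\bspace$. Trivially $\OBJ(S)\le\OBJ'(S)$, and $\OBJ'(S)$ is attained at some $(\hS^{\star},\hT^{\star})$ because its feasible region is a compact polytope and $(\hS,\hT)\mapsto\min_{B\in\sfBT}\hT(B)$ is continuous. For the reverse inequality, positivity of $\OBJ(S)$ makes the strict program feasible: there is a Slater point $(\hS^{\diamond},\hT^{\diamond})\in\mathcal{K}$ with $\hS^{\diamond}(B)>\log\bspace$ for all $B\in\sfBS$. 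For $t\in(0,1]$, the convex combination $(1-t)(\hS^{\star},\hT^{\star})+t(\hS^{\diamond},\hT^{\diamond})$ lies in $\mathcal{K}$, is strictly feasible, and has objective at least $(1-t)\OBJ'(S)+t\min_{B\in\sfBT}\hT^{\diamond}(B)$; letting $t\downarrow 0$ gives $\OBJ(S)\ge\OBJ'(S)$. Hence $\OBJ(S)=\OBJ'(S)$, attained.

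For strong duality, note that $\OBJ'(S)$ is the value of the linear program $\max\{t:(\hS,\hT)\in\mathcal{K},\ t\le\hT(B)\ (B\in\sfBT),\ \hS(B)\ge\log\bspace\ (B\in\sfBS)\}$, which is feasible (positivity) and bounded ($\mathcal{K}$ compact). Dualize partially, with multipliers $\blambda_{\sfBT}\ge 0$ for the constraints $t\le\hT(B)$ and $\btheta_{\sfBS}\ge 0$ for the constraints $\hS(B)\ge\log\bspace$, keeping $(\hS,\hT)\in\mathcal{K}$ explicit:
\[ \OBJ'(S)=\min_{\blambda_{\sfBT}\ge 0,\ \btheta_{\sfBS}\ge 0}\ \max_{t,\ (\hS,\hT)\in\mathcal{K}}\ \Big[\, t\,(1-\|\blambda_{\sfBT}\|_1)+\textstyle\sum_{B\in\sfBT}\lambda_B\hT(B)+\sum_{B\in\sfBS}\theta_B\hS(B)-(\log\bspace)\|\btheta_{\sfBS}\|_1 \,\Big]. \]
The inner $\max_t$ is $+\infty$ unless $\|\blambda_{\sfBT}\|_1=1$, so the outer minimizer obeys $\|\blambda_{\sfBT}\|_1=1$ and the bracket collapses to $\max_{(\hS,\hT)\in\mathcal{K}}\big[\sum_B\lambda_B\hT(B)+\sum_B\theta_B\hS(B)\big]-(\log\bspace)\|\btheta_{\sfBS}\|_1=\bL(\blambda_{\sfBT},\btheta_{\sfBS},S)$. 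The displayed identity is itself LP strong duality, applied to the full program and, with $\blambda_{\sfBT},\btheta_{\sfBS}$ fixed, to the residual feasible bounded program over $\mathcal{K}$; it also yields attainment of the outer minimum at some $(\blambda^{*}_{\sfBT},\btheta^{*}_{\sfBS})$ satisfying \eqref{weightsSF}. Combining the three steps, $\OBJ(S)=\OBJ'(S)=\min_{(\blambda_{\sfBT},\btheta_{\sfBS})}\bL(\blambda_{\sfBT},\btheta_{\sfBS},S)=\bL(\blambda^{*}_{\sfBT},\btheta^{*}_{\sfBS},S)$.

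The main obstacle is the asymmetry between the two programs: the multiplier domain $\{\btheta_{\sfBS}\ge 0\}$ in \eqref{OPT-maxminLPIneq} is noncompact, so a bare Sion/von Neumann interchange of $\max_{(\hS,\hT)\in\mathcal{K}}$ with $\min_{(\blambda_{\sfBT},\btheta_{\sfBS})}$ is not licensed, and the strict inequality in \eqref{OPT-maximin} is not an admissible LP constraint. Both are neutralized above: the reduction to $\OBJ'(S)$ through a Slater point handles the strictness (this is exactly where positivity enters), and LP strong duality -- which needs only primal feasibility and boundedness, not a compact dual feasible region -- handles the interchange. An alternative that stays with minimax theorems is to first bound the relevant multipliers, $\|\btheta_{\sfBS}\|_1\le M$, using boundedness of $\OBJ(S)$ together with the estimate that $\min_{B\in\sfBS}\hS(B)\ge\log\bspace-\varepsilon$ forces $\min_{B\in\sfBT}\hT(B)\le\OBJ'(S)+O(\varepsilon)$ (again by convex-combining with the Slater point), and then apply Sion's theorem on the now-compact domains; I expect the linear-programming route to be the shorter write-up.
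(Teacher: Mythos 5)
Your argument follows the same underlying route as the paper, namely LP duality applied to an epigraph reformulation of \eqref{OPT-maximin}, though the mechanics differ: the paper proves a general statement (Lemma~\ref{lem:maxminLPGeneral}) by fully dualizing the epigraph LP, taking an \emph{extreme-point} optimal dual solution, and using complementary slackness together with $w^*>0$ to force $\|\bz^*\|_1=1$ and to verify $L(\bz^*,\bu^*)=w^*$ from both sides; you instead dualize only the target and space constraints, keep the polymatroid polytope explicit, and force $\|\blambda_{\sfBT}\|_1=1$ through the free epigraph variable $t$. Your weak-duality step is correct, and your Slater-point reduction of the strict constraints $\hS(B)>\log \bspace$ to the closed ones is actually more careful than the paper, whose general lemma is stated with non-strict constraints and is invoked for \eqref{OPT-maximin} without comment on this point.

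The one genuine gap is rationality. Condition \eqref{weightsSF} requires $\blambda^*_{\sfBT}\in\bQ_+^{\sfBT}$ and $\btheta^*_{\sfBS}\in\bQ_+^{\sfBS}$, and this is not decorative: these coefficients are later fed into the construction of a joint Shannon-flow inequality and its proof sequence, where rational coefficients (with controlled denominators) are needed. Your appeal to LP strong duality yields \emph{some} real optimal multipliers attaining the outer minimum, but since the right-hand sides ($\log N_{Y|X}$, $\log\bspace$) are irrational in general, it does not by itself show the minimizer can be taken rational. The fix is exactly the paper's device: the dual of the epigraph LP is a polyhedron whose constraint matrix is rational (the logarithms appear only in the objective), hence it has rational extreme points; take an extreme-point optimal dual solution and restrict it to the coordinates of the dualized constraints to obtain rational $(\blambda^*_{\sfBT},\btheta^*_{\sfBS})$, then conclude as you do (or via complementary slackness as in the paper). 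A second, smaller remark: your compactness claim for the feasible polytope tacitly assumes every variable in $[n]$ occurs in some atom so that the cardinality constraints bound $\hS$ and $\hT$; this is harmless, and in any case the hypothesis that $\OBJ(S)$ is bounded can play that role.
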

 \noindent Instead of proving \autoref{lem:maxminLP} directly, we prove a slightly more general lemma (Lemma~\ref{lem:maxminLPGeneral}) that may be of independent interest.
  \begin{lemma}\label{lem:maxminLPGeneral}
   Let $\bA \in \bQ^{\ell \times m}, \vectorb \in \bR^{\ell},  \bD \in \bQ_+^{m \times \highlight{q}}$, $\bC \in \bQ_+^{m \times p}$ be a matrix with columns $\bc_1, \ldots, \bc_p$ and polyhedron $P = \{\bx \in \bR^{m} \mid \bA \bx \leq \vectorb, \bx \geq \bzero\}$. Let $w^*$ be the optimal objective value of the following optimization problem (assume $\bspace$ as a fixed quantity)
   \begin{equation}
       \begin{aligned}
          \max_{\bx \in P} \quad & \min_{k \in [p]} \bc_k^{\top} \bx \\
          \quad  \text{s.t. } \quad & \bD^{\top} \bx \geq \bone_{\highlight{q}} \log \bspace.
        \end{aligned}
        \label{OPT-maximinGeneral}
   \end{equation}
 If $w^*$ is positive and bounded, then for any vectors $\bz, \bu \in \bQ_+^p$ with $\|\bz\|_1 = 1$, the following linear program:
    \begin{equation}
     \begin{aligned}
      L(\bz, \bu) \defeq \max_{\bx \in P} \quad & (\bC \bz)^{\top} \bx + (\bD^{\top} \bx - \bone_{\highlight{q}} \log \bspace)^{\top} \bu
      \label{OPT-alternativeLP}
     \end{aligned}
    \end{equation}
 satisfies $w^* \leq L(\bz, \bu)$. In particular, there is a pair of vectors $\bz^*  \in \bQ_+^p, \bu^*  \in \bQ_+^{\highlight{q}}$ with $\|\bz^*\|_1 = 1$ such that $w^* = L(\bz^*, \bu^*)$.
  \end{lemma}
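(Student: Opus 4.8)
The plan is to reduce the statement to linear-programming duality; the proof is essentially a careful LP-duality bookkeeping rather than a hard estimate. I would handle the two assertions of the lemma separately. For the universal bound $w^*\le L(\bz,\bu)$ (which in fact holds for any $\bz$ in the probability simplex and any $\bu\ge\bzero$, rationality being irrelevant here), I would argue directly: for any $\bx$ feasible for \eqref{OPT-maximinGeneral}, i.e. $\bx\in P$ with $\bD^\top\bx\ge\bone_q\log\bspace$, convexity of $\bz$ gives $(\bC\bz)^\top\bx=\sum_k z_k\,\bc_k^\top\bx\ge\min_k\bc_k^\top\bx$, while $\bu\ge\bzero$ and $\bD^\top\bx-\bone_q\log\bspace\ge\bzero$ give $(\bD^\top\bx-\bone_q\log\bspace)^\top\bu\ge0$; hence the objective of \eqref{OPT-alternativeLP} at $\bx$ is at least $\min_k\bc_k^\top\bx$. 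Maximizing the two sides over their respective (nested) feasible sets yields $L(\bz,\bu)\ge w^*$, to be read in $\bR\cup\{+\infty\}$ since $L(\bz,\bu)$ can be $+\infty$ when $P$ is unbounded in a profitable direction.

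For the tight pair, the key move is to express $w^*$ itself as the value of an explicit LP and dualize. Introducing a scalar $t$, \eqref{OPT-maximinGeneral} becomes $\max\{t : \bc_k^\top\bx\ge t\ (k\in[p]),\ \bA\bx\le\vectorb,\ \bD^\top\bx\ge\bone_q\log\bspace,\ \bx\ge\bzero,\ t\text{ free}\}$, whose optimum is $w^*$ by the standard epigraph trick. Since $w^*$ is assumed positive and bounded, this LP is feasible with finite optimum, so LP strong duality applies. Splitting the free variable $t$ and tracking signs, the dual is $\min\{\vectorb^\top\by-(\log\bspace)\|\bu\|_1 : \bone_p^\top\bz=1,\ \bA^\top\by\ge\bC\bz+\bD\bu,\ \bz,\by,\bu\ge\bzero\}$, where $\bz$ dualizes the $t$-constraints (forcing the equality $\|\bz\|_1=1$ rather than an inequality), $\by$ dualizes $\bA\bx\le\vectorb$, and $\bu$ dualizes $\bD^\top\bx\ge\bone_q\log\bspace$. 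The dual feasible region is a pointed polyhedron all of whose defining data ($\bA$, $\bC$, $\bD$, and the constant $1$) are rational — only the objective contains the possibly-irrational $\vectorb$ and $\log\bspace$ — so its nonempty optimal face contains a rational vertex $(\bz^*,\by^*,\bu^*)$; in particular $\bz^*\in\bQ_+^p$ with $\|\bz^*\|_1=1$, $\bu^*\in\bQ_+^q$, and $w^*=\vectorb^\top\by^*-(\log\bspace)\|\bu^*\|_1$.

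It then remains to match $L(\bz^*,\bu^*)$ to this value. Rewriting \eqref{OPT-alternativeLP} as $L(\bz,\bu)=\bigl(\max_{\bx\in P}(\bC\bz+\bD\bu)^\top\bx\bigr)-(\log\bspace)\|\bu\|_1$, the inner maximization is the LP $\max\{(\bC\bz+\bD\bu)^\top\bx : \bA\bx\le\vectorb,\ \bx\ge\bzero\}$, whose dual is $\min\{\vectorb^\top\by : \bA^\top\by\ge\bC\bz+\bD\bu,\ \by\ge\bzero\}$. Since $\by^*$ satisfies exactly this dual's constraint at $(\bz^*,\bu^*)$, weak LP duality gives $\max_{\bx\in P}(\bC\bz^*+\bD\bu^*)^\top\bx\le\vectorb^\top\by^*$, hence $L(\bz^*,\bu^*)\le\vectorb^\top\by^*-(\log\bspace)\|\bu^*\|_1=w^*$; together with the universal bound $w^*\le L(\bz^*,\bu^*)$ this forces $w^*=L(\bz^*,\bu^*)$ (and incidentally shows $L(\bz^*,\bu^*)$ is finite). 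The main obstacle is not a hard estimate but getting this reformulation-and-dualization exactly right: the free variable $t$, the direction of the $\bD$-constraint, and above all the fact that neither $P$ nor the multiplier domain $\bR_+^q$ is compact — which rules out a direct Sion-type minimax argument and forces the detour through LP strong duality, together with the observation that rational optimal multipliers still exist despite irrational objective coefficients.
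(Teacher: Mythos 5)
Your proof is correct, and it follows the same overall route as the paper: reformulate the maximin problem as an LP via the epigraph trick, dualize, take a rational extreme point of the rational dual polyhedron to get $(\by^*,\bu^*,\bz^*)$, and match $L(\bz^*,\bu^*)$ with $w^*$ through duality. The differences are in execution. For the universal bound $w^*\le L(\bz,\bu)$ you argue pointwise (convexity of $\bz$ plus nonnegativity of the slack term $(\bD^\top\bx-\bone_{q}\log \bspace)^\top\bu$), where the paper runs a Lagrangian-relaxation/minimax-inequality chain; these are equivalent, yours being slightly more elementary. For the tight pair, the paper keeps $w\ge 0$ in the primal, so its dual only has $\bone_p^\top\bz\ge 1$, and it then invokes complementary slackness together with $w^*>0$ both to force $\|\bz^*\|_1=1$ and to certify that the primal optimizer $\bx^*$ attains value $w^*$ in \eqref{OPT-alternativeLP}; you instead leave $t$ free, which makes $\bone_p^\top\bz=1$ an equality in the dual outright, and you obtain the direction $L(\bz^*,\bu^*)\ge w^*$ by recycling the universal inequality, so complementary slackness is never needed and positivity of $w^*$ enters only to guarantee feasibility and finiteness. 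Your closing bound $L(\bz^*,\bu^*)\le\vectorb^\top\by^*-(\log\bspace)\|\bu^*\|_1=w^*$ is the paper's explicit inequality chain repackaged as weak duality of the inner LP over $P$ plus strong duality of the outer pair. What the paper's version buys is an explicit witness (the primal optimum $\bx^*$ achieving the maximum in \eqref{OPT-alternativeLP}); what yours buys is a shorter argument that isolates the one genuinely delicate point — existence of a rational optimal dual solution — which you also justify a bit more carefully (pointedness of the dual polyhedron and the harmlessness of the irrational objective coefficients $\vectorb$, $\log\bspace$).
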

 
  \begin{proof} 
  First, we introduce $\bu \in \bQ_+^{\highlight{q}}$ as the Lagrange multiplier for \eqref{OPT-maximinGeneral} and obtain
  \begin{align*}
      w^* & \leq \max_{\bx \in P} \min_{\bu \in \bQ_+^{\highlight{q}}} \quad  \min_{k \in [p]} \left( \bc_k^{\top} \bx + (\bD^{\top} \bx - \bone_{\highlight{q}} \log \bspace)^{\top} \bu \right) \\
      & \leq  \max_{\bx \in P} \min_{\bu \in \bQ_+^{\highlight{q}}, \bz \in \bQ_+^p, \|\bz\|_1 = 1} \quad (\bC \bz)^{\top} \bx + (\bD^{\top} \bx - \bone_{\highlight{q}} \log \bspace)^{\top} \bu \\
      & \leq \min_{\bu \in \bQ_+^{\highlight{q}}, \bz \in \bQ_+^p, \|\bz\|_1 = 1} \max_{\bx \in P} \quad (\bC \bz)^{\top} \bx + (\bD^{\top} \bx - \bone_{\highlight{q}} \log \bspace)^{\top} \bu \\
      & =  \min_{\bu \in \bQ_+^{\highlight{q}}, \bz \in \bQ_+^p, \|\bz\|_1 = 1} L(\bz, \bu)
  \end{align*}
  where the third equality is because of the Minimax Inequality. Therefore, we have shown that for any vectors $\bu \in \bQ_+^{\highlight{q}}, \bz \in \bQ_+^p$ with $\|\bz\|_1 = 1$, $L(\bz, \bu) \geq w^*$. Next, we show that there are vectors $\bu^* \in \bQ_+^{\highlight{q}}, \bz^* \in \bQ_+^p$ with $\|\bz^*\|_1 = 1$ such that $L(\bz^*, \bu^*) = w^*$. We first re-write \eqref{OPT-maximin} as the following equivalent linear program:
   \begin{equation}
       \begin{aligned}
          \max_{\bx, w} \quad & w\\
          \quad  \text{s.t. } \quad & \bA \bx \leq \vectorb \\
          & \bD^{\top} \bx \geq \bone_{\highlight{q}} \log \bspace \\
          & \bC^{\top} \bx \geq \bone_p w \\
          & \bx \geq \bzero, w \geq 0
       \end{aligned}
       \label{OPT-LP}
   \end{equation}
 Let $(w^*, \bx^*)$ be an optimal solution for \eqref{OPT-LP}. Then, the dual of \eqref{OPT-LP} can be written as the following linear program:
 \begin{equation}
    \begin{aligned}
          \min_{\by, \bu, \bz} \quad & \vectorb^{\top} \by - (\bone_{\highlight{q}} \log \bspace)^{\top} \bu \\
          \quad  \text{s.t. } \quad & \bA^{\top} \by - \bD \bu - \bC \bz \geq \bzero \\
          & \bone_p^{\top} \bz \geq 1 \\
          & \by, \bu, \bz \geq \bzero
     \end{aligned}   
     \label{OPT-LPDual}
 \end{equation}
 Let $(\by^*, \bu^*, \bz^*)$ be an extreme point of the (rational) dual polyhedron that attains the optimal objective value for \eqref{OPT-LPDual}, so $\bz^* \in \bQ^p_+, \bu^* \in \bQ^{\highlight{q}}_+$. The complementary slackness conditions of the \eqref{OPT-LP} and \eqref{OPT-LPDual} primal-dual pair and the assumption $w^* > 0$ imply that $\bone_p^{\top} \bz^* = \|\bz^*\|_1 = 1$, $(\bone_p w^* - \bC^{\top} \bx^*)^{\top} \bz^* = 0$ and $(\bD^{\top} \bx^* - \bone_{\highlight{q}} \log \bspace)^{\top} \bu^* = 0$. 
 We then show that $L(\bz^*, \bu^*) = w^*$. First, we note that $\bx^*$ is feasible for \eqref{OPT-alternativeLP} with objective value $(\bC \bz^*)^{\top} \bx^* + (\bD^{\top} \bx^* - \bone_{\highlight{q}} \log \bspace)^{\top} \bu^* = (\bC \bz^*)^{\top} \bx^* = (\bC^{\top} \bx^*)^{\top} \bz^* = (\bone_p w^*)^{\top} {\bz}^* = w^*$. Furthermore, for any feasible $\bx$ to \eqref{OPT-alternativeLP}, we have that
 \begin{align*}
     (\bC \bz^*)^{\top} \bx + (\bD^{\top} \bx - \bone_{\highlight{q}} \log \bspace)^{\top} \bu^* & = (\bC \bz^*)^{\top} \bx + (\bD \bu^*)^{\top} \bx -  (\bone_{\highlight{q}} \log \bspace)^{\top} \bu^* && \textit{re-arrange} \\
     & \leq (\bA^{\top} \by^*)^{\top} \bx -  (\bone_{\highlight{q}} \log \bspace)^{\top} \bu^* && \textit{dual feasibility and non-negativity} \\
     & = (\bA \bx)^{\top} \by^* -  (\bone_{\highlight{q}} \log \bspace)^{\top} \bu^* \\
      & \leq \vectorb^{\top} \by^* -  (\bone_{\highlight{q}} \log \bspace)^{\top} \bu^* && \textit{primal feasibility} \\
      & = w^* && \textit{strong duality}
 \end{align*}
 This implies that $L(\bz^*, \bu^*) \leq w^*$. Together, we get $L(\bz^*, \bu^*) = w^*$.
 \end{proof}
 \highlight{Now, \autoref{lem:maxminLP} is a direct corollary of Lemma~\ref{lem:maxminLPGeneral} by setting $\blambda_{\sfBT}$ as $\bz^*$, $\btheta_{\sfBS}$ as $\bu^*$.} 
 
 \eat{\hangdong{As a side note, this proof governs that $\blambda_{\sfBT}$ and $\btheta_{\sfBS}$ have bounded denominators. $\by$ here is exactly all dual variables $(\bmu, \bsigma, \bdelta, \bgamma)$ for the upcoming dual formulation, which also has bounded denominators using Cramer's Lemma. Thus, we can apply Theorem B.12 in PANDA to guarantee that the proof sequences we are to obtain are of finite length. Precisely, amongst all vars in $\by$, $(\bmu_S, \bsigma_S, \bdelta_S, \bgamma_S)$ and $\btheta_{\sfBS}$ will be used to construct $\proofseq(S)$; $(\bmu_T, \bsigma_T, \bdelta_T, \bgamma_T)$ and $\blambda_{\sfBT}$ will be used to construct $\proofseq(T)$. 
 }}

 In the remainder of the section, we implicitly assume that $\bspace$ is a fixed quantity. Moreover, we fix a pair  $(\blambda_{\sfBT}, \btheta_{\sfBS})$ satisfying \eqref{weightsSF}. We can now re-write \eqref{OPT-maxminLPIneq}, listing out all its constraints and ignore the constant factor $(\log \bspace) \cdot \|\btheta_{\sfBS}\|_1$ :
 \begin{equation}
     \begin{aligned}
          \ell(\blambda_{\sfBT}, \btheta_{\sfBS}) \defeq \max_{\hS, \hT} \quad & \sum_{B \in \sfBT} \lambda_B \cdot \hT(B) + \sum_{B \in \sfBS} \theta_B \cdot \hS(B) \\
         \text{s.t.} \quad & \hS(Y) - \hS(X) \leq n_{Y|X}, && (X, Y, N_{Y|X}) \in \DC \\
          \quad  \quad & \hT(Y) - \hT(X) \leq n_{Y|X}, && (X, Y, N_{Y|X}) \in \DC \cup \AC \\
         \quad  \quad &  \hS(I \cup J|J) - \hS(I | I \cap J) \leq 0, && I \perp J \\
         \quad  \quad &  \hT(I \cup J|J) - \hT(I | I \cap J) \leq 0, && I \perp J \\
         \quad  \quad &  \hS(X) - \hS(Y) \leq 0, && \emptyset \neq X \subset Y \subseteq [n] \\
         \quad  \quad &  \hT(X) - \hT(Y) \leq 0, && \emptyset \neq X \subset Y \subseteq [n] \\
         \quad  \quad & \hS(X) + \hT(Y|X) \leq n_{Z|\emptyset}, && (X, Y|X, N_{Z|\emptyset}) \in \SC \\
         \quad  \quad & \hS(Y|X) + \hT(X) \leq n_{Z|\emptyset}, && (X, Y|X, N_{Z|\emptyset}) \in \SC \\
          \quad  \quad & \hS(Z) \geq 0, \quad \hT(Z) \geq 0, && \emptyset \neq Z \subseteq [n] 
     \end{aligned}
     \label{OPT-DetailedLP}
 \end{equation}
 Recall that implicitly we have $\hS(\emptyset) = \hT(\emptyset) =0$ and that $n_{Y|X} = \log N_{Y|X}$. 
 Then we write down the dual LP for \eqref{OPT-DetailedLP}. We associate a dual variable $(\delta_S)_{Y|X}$ to $\hS(Y) - \hS(X) \leq n_{Y|X}$ for each $(X, Y, N_{Y|X}) \in \DC$ and a dual variable $(\delta_T)_{Y|X}$ to $\hT(Y) - \hT(X) \leq n_{Y|X}$ for each $(X, Y, N_{Y|X}) \in \DC \cup \AC$. For each $I \perp J$, where $I, J \subseteq [n]$, we associate a dual variable $(\sigma_S)_{I, J}$ to the submodularity constraint of $\hS$ and $(\sigma_T)_{I, J}$ to the submodularity constraint of $\hT$. For each $\emptyset \neq X \subset Y \subseteq [n]$, we associate a dual variable $(\mu_S)_{X, Y}$ to the monotonicity constraint of $\hS$ and a dual variable $(\mu_T)_{X, Y}$ to the monotonicity constraint of $\hT$. Lastly, for each $(X, Y|X, N_{Z|\emptyset}) \in \SC$, we associate a dual variable $\gamma_{X, Y|X}$ to $\hS(X) + \hT(Y) - \hT(X) \leq N_{Z|\emptyset}$ and a dual variable $\gamma_{Y|X, X}$ to $\hS(Y) - \hS(X)+ \hT(X) \leq N_{Z|\emptyset}$. Moreover, we extend vectors $(\lambda_B)_{B \in \sfBT}$ and $(\theta_B)_{B \in \sfBS}$ to every $Z \in 2^{[n]}$ in the obvious way:
 $$
 \lambda_Z \defeq \begin{cases}\lambda_{B} & \text { when } Z = B \in \sfBT \\ 0 & \text { otherwise } \end{cases} \qquad 
 \theta_Z \defeq \begin{cases} \theta_{B} & \text { when } Z = B \in \sfBS \\ 0 & \text { otherwise }\end{cases}
 $$
 Abusing notations, we write $(X, Y) \in \DC$ whenever $(X, Y, N_{Y|X}) \in \DC$ and $(X, Y|X) \in \SC$ whenever $(X, Y, N_{Z|\emptyset}) \in \SC$. Note that by maintaining the best constraint assumption, we can always recover the only $N_{Y|X}$ or $N_{Z|\emptyset}$ from a given $(Y, X)$-pair. The dual of \eqref{OPT-DetailedLP} can now be written as
 \begin{equation}
     \begin{aligned}
           \min \quad & \sum_{(X, Y) \in \DC} n_{Y|X} \cdot (\delta_S)_{Y|X} + \sum_{(X, Y) \in \DC \cup \AC} n_{Y|X} \cdot (\delta_T)_{Y|X} 
           \\ & \qquad + \sum_{(X, Y|X) \in \SC}  n_{Z|\emptyset} \cdot (\gamma_{X, Y|X} + \gamma_{Y|X, X})\\ 
         \text{s.t.} \quad & \quad \inflow_S(Z) \geq \theta_Z & \emptyset \neq Z \subseteq [n]  \\
         & \quad \inflow_T(Z) \geq \lambda_Z & \emptyset \neq Z \subseteq [n] \\
         & \quad (\delta_S)_{Y|X}, (\mu_S)_{X, Y}, (\sigma_S)_{I, J} \geq 0 & \\
         & \quad (\delta_T)_{Y|X}, (\mu_T)_{X, Y}, (\sigma_T)_{I, J} \geq 0  & \\
         & \quad \gamma_{X, Y|X}, \gamma_{Y|X, X} \geq 0 & 
     \end{aligned}
     \label{OPT-DetailedDualLP}
 \end{equation}
 where for each $\emptyset \neq Z \subseteq [n]$,
 \begin{align*}
        \textcolor{black}{\inflow_S(Z)} \defeq \quad & \left(\sum_{X:(X, Z) \in \DC} \textcolor{black}{(\delta_S)_{Z \mid X}}-\sum_{Y:(Z, Y) \in \DC} \textcolor{black}{(\delta_S)_{Y \mid Z}}\right) + 
        \left( -\sum_{X: X \subset Z} \textcolor{black}{(\mu_S)_{X, Z}}+\sum_{Y: Z \subset Y} \textcolor{black}{(\mu_S)_{Z, Y}} \right) \\
        & + \left( \sum_{\substack{I \perp J \\ I \cap J=Z}} \textcolor{black}{(\sigma_S)_{I, J}} + \sum_{\substack{I \perp J \\ I \cup J=Z}} \textcolor{black}{(\sigma_S)_{I, J}} - \sum_{J: J \perp Z} \textcolor{black}{(\sigma_S)_{Z, J}} \right)  \\
        & + \left( \sum_{\substack{Z:(Z, Y|Z) \in \SC \\ Z \subset Y}}  
        \gamma_{Z, Y|Z} - \sum_{\substack{Z:(Z, Y|Z) \in \SC \\ Z \subset Y}}   \gamma_{Y|Z, Z} + \sum_{\substack{Z:(X, Z|X) \in \SC \\ X \subset Z}}   \gamma_{Z|X, X}\right) 
 \end{align*}
 \begin{align*}
        \textcolor{black}{\inflow_T(Z)} \defeq \quad & \left(\sum_{X:(X, Z) \in \DC \cup \AC} \textcolor{black}{(\delta_T)_{Z \mid X}}
        -\sum_{Y:(Z, Y) \in \DC \cup \AC } \textcolor{black}{(\delta_T)_{Y \mid Z}}\right) +
        \left( -\sum_{X: X \subset Z} \textcolor{black}{(\mu_T)_{X, Z}}+\sum_{Y: Z \subset Y} \textcolor{black}{(\mu_T)_{Z, Y}} \right) \\
        & + \left( \sum_{\substack{I \perp J \\ I \cap J=Z}} \textcolor{black}{(\sigma_T)_{I, J}} + \sum_{\substack{I \perp J \\ I \cup J=Z}} \textcolor{black}{(\sigma_T)_{I, J}} - \sum_{J: J \perp Z} \textcolor{black}{(\sigma_T)_{Z, J}} \right)  \\
        & + \left( \sum_{\substack{Z:(Z, Y|Z) \in \SC \\ Z \subset Y}}  
        \gamma_{Y|Z, Z} - \sum_{\substack{Z:(Z, Y|Z) \in \SC \\ Z \subset Y}}   \gamma_{Z, Y|Z} + \sum_{\substack{Z:(X, Z|X) \in \SC \\ X \subset Z}}   \gamma_{X, Z|X}\right) 
\end{align*}
 
 Next, we introduce the \textit{joint Shannon-flow inequalities}.
 \begin{definition}[Joint Shannon-flow Inequality]
 The inequality
   \begin{equation}
     \begin{aligned}\label{jointSF-full}
           \sum_{(X, Y) \in \DC} \hS(Y|X) \cdot (\delta_S)_{Y|X} & +  \sum_{(X, Y) \in \DC \cup \AC} \hT(Y|X) \cdot (\delta_T)_{Y|X} + \sum_{(X, Y|X) \in \SC}  (\hS(X) + \hT(Y|X)) \cdot \gamma_{X, Y|X} \\
      & + \sum_{(X, Y|X) \in \SC}  (\hS(Y|X) + \hT(X)) \cdot \gamma_{Y|X, X} \geq \sum_{B \in \sfBS} \theta_B \cdot \hS(B) + \sum_{B \in \sfBT} \lambda_B \cdot \hT(B),
     \end{aligned}
 \end{equation}
 is called a \textit{joint Shannon-flow inequality} if it holds for all $(\hS, \hT) \in \Gamma_n \times \Gamma_n$ and all coefficients are non-negative rational numbers.
 \end{definition}
 
  By assigning either polymatroid to be always $0$, the above inequality implies the following two Shannon-flow inequalities, called  the {\textit{participating Shannon-flow inequalities}}.
 \begin{align}
     \sum_{(X, Y) \in \DC} \hS(Y|X) \cdot (\delta_S)_{Y|X} + \sum_{(X, Y|X) \in \SC}  \hS(X) \cdot \gamma_{X, Y|X} + \sum_{(X, Y|X) \in \SC}  \hS(Y|X) \cdot \gamma_{Y|X, X} & \geq \sum_{B \in \sfBS} \theta_B \cdot \hS(B) \label{SF-S} \\
     \sum_{(X, Y) \in \DC \cup \AC} \hT(Y|X) \cdot (\delta_T)_{Y|X} + \sum_{(X, Y|X) \in \SC}  \hT(Y|X) \cdot \gamma_{X, Y|X} + \sum_{(X, Y|X) \in \SC}  \hT(X) \cdot \gamma_{Y|X, X} & \geq \sum_{B \in \sfBT} \lambda_B \cdot \hT(B) \label{SF-T}
 \end{align}
 
 
 It will be convenient to write a joint Shannon-flow inequality as inequalities over conditional polymatroid. The polymatroid $\hS$ defines a conditional polymatroid $\bhS$ and the polymatroid $\hT$ defines a conditional polymatroid $\bhT$. More precisely, we define the vectors $\blambda, \btheta \in \bQ^\mC_+$ (extend to $(X, Y)$ pairs where $\emptyset \subseteq X \subset Y \subseteq [n]$) with coordinate values assigned as the following:
 $$
 \lambda(Y|X) \defeq \begin{cases}\lambda_{B} & \text { when } Y = B, X = \emptyset \\ 0 & \text { otherwise } \end{cases} \qquad 
 \theta(Y|X) \defeq \begin{cases} \theta_{B} & \text { when } Y = B, X = \emptyset \\ 0 & \text { otherwise }\end{cases}
 $$
 and similarly, we define the vectors $\bdelta_S, \bdelta_T \in \bQ^\mC_+$ with coordinate values:
 $$
 \delta_S(Y|X) \defeq \begin{cases} (\delta_S)_{Y|X} & \text { when } (X, Y) \in \DC \\ 0 & \text { otherwise } \end{cases} \qquad 
 \delta_T(Y|X) \defeq \begin{cases} (\delta_T)_{Y|X} & \text { when } (X, Y) \in \DC \cup \AC \\ 0 & \text { otherwise }\end{cases}
 $$
 Lastly, the coefficients $\left\{\gamma_{X, Y|X} \mid  (X, Y|X) \in \SC \right\} \cup \left\{\gamma_{Y|X, X} \mid  (X, Y|X) \in \SC \right\} $ contributes to the coefficients of both inequalities \eqref{SF-S} and \eqref{SF-T}. We define a pair of vectors $\bgamma_S, \bgamma_T \in \bQ^\mC_+$ as the following:
 \begin{align*}
      \gamma_S(Y|X) & \defeq \begin{cases} \gamma_{U, V|U} & \text { when } Y = U, X = \emptyset, (U, V|U) \in \SC  \\ \gamma_{V|U, U} & \text { when } Y = V, X = U, (U, V|U) \in \SC
      \\ 0 & \text { otherwise } \end{cases} 
      \\
      \\
      \gamma_T(Y|X) & \defeq \begin{cases} \gamma_{U, V|U} & \text { when } Y = V, X = U, (U, V|U) \in \SC \\ 
      \gamma_{V|U, U} & \text { when } Y = U, X = \emptyset, (U, V|U) \in \SC
      \\ 0 & \text { otherwise }\end{cases}
 \end{align*}
 for tracking the contributions to $\bhS, \bhT$, respectively. Let $\bg_S \defeq \bdelta_S + \bgamma_S$ and $\bg_T \defeq \bdelta_T + \bgamma_T$, then the two inequalities \eqref{SF-S} and \eqref{SF-T} can be re-written using dot-products:
 \begin{align}
     \langle \bg_S, \bhS \rangle \defeq \langle \bdelta_S, \bhS \rangle +  \langle \bgamma_S, \bhS \rangle  & \geq  \langle \btheta, \bhS \rangle \label{SF-S-Vector}\\
     \langle \bg_T, \bhT \rangle \defeq \langle \bdelta_T, \bhT \rangle +  \langle \bgamma_T, \bhT \rangle & \geq  \langle \blambda, \bhT \rangle \label{SF-T-Vector}
 \end{align}
 Moreover, the joint Shannon-flow inequality \eqref{jointSF-full} can be written as
 \begin{equation}\label{jointSF-Vector}
      \begin{aligned}
      \langle \bg_S, \bhS \rangle + \langle \bg_T, \bhT \rangle 
      & \geq  \langle \btheta, \bhS \rangle + \langle \blambda, \bhT \rangle
      \end{aligned}
  \end{equation}


 \begin{theorem}\label{thm:witness}
The inequality \eqref{jointSF-Vector} is a joint Shannon-flow inequality if and only if there are non-negative vectors of rationals $\bsigma_S, \bmu_S, \bsigma_T, \bmu_T$ such that all constraints of the dual \eqref{OPT-DetailedDualLP} are satisfied. In particular, we call $(\bsigma_S, \bmu_S, \bsigma_T, \bmu_T)$ a {\textit{witness}} for the joint Shannon-flow inequality.
 \end{theorem}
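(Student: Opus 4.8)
The plan is to reduce Theorem~\ref{thm:witness} to the single–polymatroid Shannon-flow characterization underlying $\PANDA$~\cite{DBLP:conf/pods/Khamis0S17}, and then glue the two resulting certificates together through the dual LP~\eqref{OPT-DetailedDualLP}.

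\emph{Decoupling.} The first step is to observe that, since $\Gamma_n$ is a convex cone containing the zero function, the joint inequality~\eqref{jointSF-Vector} holds for every $(\hS,\hT)\in\Gamma_n\times\Gamma_n$ if and only if it holds when $\hT\equiv 0$ and when $\hS\equiv 0$. Substituting $\hT\equiv 0$ into the expanded form~\eqref{jointSF-full} annihilates every $\hT$-term (including the coupling terms $\gamma_{X,Y|X}\hT(Y|X)$ and $\gamma_{Y|X,X}\hT(X)$) and leaves exactly the participating inequality~\eqref{SF-S}; symmetrically $\hS\equiv 0$ leaves~\eqref{SF-T}. Conversely, adding~\eqref{SF-S} to~\eqref{SF-T} reproduces~\eqref{jointSF-full} term by term, the coupling coefficients distributing over the two sides precisely as written. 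Hence~\eqref{jointSF-Vector} is a joint Shannon-flow inequality iff both~\eqref{SF-S-Vector} and~\eqref{SF-T-Vector} are ordinary Shannon-flow inequalities over a single polymatroid.

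\emph{Single-polymatroid certificates.} The second step applies the Shannon-flow characterization to each of~\eqref{SF-S-Vector} and~\eqref{SF-T-Vector} separately. An inequality $\langle\bg,\bh\rangle\ge\langle\blambda,\bh\rangle$ holds for all $h\in\Gamma_n$ iff $\bg-\blambda$ lies in the polar cone of $\Gamma_n$; since $\Gamma_n$ is the polyhedral cone cut out by the submodularity, monotonicity, composition, decomposition and non-negativity inequalities, this polar cone is finitely generated by the normal vectors $\bs_{I,J},\bm_{X,Y},\bc_{X,Y},\bd_{Y,X}$ together with the coordinate vectors $e_Z$ of the bounds $h(Z)\ge 0$. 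Extracting the (necessarily non-negative) multipliers of the submodularity and monotonicity generators produces a pair $(\bsigma,\bmu)\ge\bzero$, and the requirement that the residual multipliers of the $e_Z$ be non-negative is precisely the flow-balance condition ``$\inflow(Z)\ge\lambda_Z$ for every $\emptyset\neq Z\subseteq[n]$'' — the equality rules $(R3)$ and $(R4)$ carry free-sign multipliers and are absorbed into the translation between conditional and unconditional forms that defines $\inflow$. Rationality of the witness is inherited from the extreme points of the (rational) dual polyhedron. Applied to~\eqref{SF-S-Vector} with the given coefficients $\bdelta_S$ and the $\gamma$-variables held fixed, this yields $(\bsigma_S,\bmu_S)\ge\bzero$ with $\inflow_S(Z)\ge\theta_Z$ for all $Z$; applied to~\eqref{SF-T-Vector} it yields $(\bsigma_T,\bmu_T)\ge\bzero$ with $\inflow_T(Z)\ge\lambda_Z$ for all $Z$.

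\emph{Assembly and converse.} The $\gamma$-variables are shared between the two certificates but are part of the data of the joint inequality, so holding them fixed throughout causes no conflict; the conjunction of the two families of flow-balance inequalities together with non-negativity of all of $\bdelta_S,\bdelta_T$, the $\gamma$'s, $\bsigma_S,\bmu_S,\bsigma_T,\bmu_T$ is verbatim the feasible region of~\eqref{OPT-DetailedDualLP}, so $(\bsigma_S,\bmu_S,\bsigma_T,\bmu_T)$ is a witness. The converse direction is the routine summing-up check: from a feasible point of~\eqref{OPT-DetailedDualLP}, multiply each submodularity and monotonicity inequality of $\hS$ (resp.\ $\hT$) by its non-negative multiplier, add the degree- and split-constraint terms with coefficients $\bdelta_S,\bdelta_T$ and the $\gamma$'s, and collect the coefficient of each $\hS(Z)$ and $\hT(Z)$; the inequalities $\inflow_S(Z)\ge\theta_Z$, $\inflow_T(Z)\ge\lambda_Z$ together with $\hS(Z),\hT(Z)\ge 0$ then deliver~\eqref{jointSF-full}. (Equivalently, this whole argument is strong LP duality for the primal–dual pair~\eqref{OPT-DetailedLP}/\eqref{OPT-DetailedDualLP} restricted to feasibility.)

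The hard part is the second step: making precise that the polar cone of $\Gamma_n$ is generated exactly by the submodularity, monotonicity and non-negativity normals, and that the node-balance conditions obtained this way coincide with the $\inflow_S,\inflow_T$ expressions appearing in~\eqref{OPT-DetailedDualLP}. This is where all the sign bookkeeping lives, and where one must be careful that the conditional-polymatroid reformulation renders $(R3)$ and $(R4)$ harmless; everything else is either the trivial cone argument of the first step or a mechanical coefficient comparison.
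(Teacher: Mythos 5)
Your proposal is correct and follows essentially the same route as the paper: decouple the joint inequality into its two participating Shannon-flow inequalities (zero out one polymatroid for necessity, sum them for sufficiency), characterize each by a witness satisfying $\inflow_S(Z)\ge\theta_Z$ resp.\ $\inflow_T(Z)\ge\lambda_Z$ for all $\emptyset\neq Z\subseteq[n]$, and note that the conjunction of these two constraint families with non-negativity is verbatim the feasible region of the dual \eqref{OPT-DetailedDualLP}. The only difference is that the step you flag as ``the hard part''---the single-polymatroid characterization via the polar cone of $\Gamma_n$ and the conditional-polymatroid bookkeeping---is not re-derived in the paper but invoked directly as Proposition 5.6 of \cite{DBLP:conf/pods/Khamis0S17}, so you can simply cite that result instead of redoing the sign analysis.
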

 
 \begin{proof}
  Proposition 5.6 in \cite{DBLP:conf/pods/Khamis0S17} states that: \eqref{SF-S-Vector} is a Shannon-flow inequality if and only if there is a $(\bsigma_S, \bmu_S) \geq \bzero$ (called a witness in \cite{DBLP:conf/pods/Khamis0S17}) such that $(\bg_S, \bsigma_S, \bmu_S)$ satisfies the set of constraints:
  $$
 \left\{\inflow_S(Z) \geq \theta_Z \mid \emptyset \neq Z \subseteq [n] \right\};
  $$
  similarly, \eqref{SF-T-Vector} is a Shannon-flow inequality if and only if there is a (witness) $(\bsigma_T, \bmu_T) \geq \bzero$ such that $(\bg_T, \bsigma_T, \bmu_T)$ satisfies the set of constraints:
 $$
 \left\{\inflow_T(Z) \geq \lambda_Z \mid \emptyset \neq Z \subseteq [n] \right\}.
 $$
 Recall the formulation of \eqref{OPT-DetailedDualLP}, these two sets of constraints form exactly all the constraints in \eqref{OPT-DetailedDualLP}. Thus, \eqref{jointSF-Vector} is a joint Shannon-flow inequality if and only if there is a $(\bsigma_S, \bmu_S, \bsigma_T, \bmu_T) \geq \bzero$ such that $(\bg_S, \bsigma_S, \bmu_S, \bg_T, \bsigma_T, \bmu_T)$ satisfies all the constraints of the dual \eqref{OPT-DetailedDualLP}.
 \end{proof}

Theorem~\ref{thm:witness} implies that a feasible solution of the dual \eqref{OPT-DetailedDualLP}, and in particular the component
$$\left\{(\delta_S)_{Y|X} \mid (X, Y) \in \DC \right\} \cup \left\{(\delta_T)_{Y|X} \mid (X, Y) \in \DC \right\} \cup \left\{\gamma_{X, Y|X} \mid  (X, Y|X) \in \SC \right\} \cup \left\{\gamma_{Y|X, X} \mid  (X, Y|X) \in \SC \right\}$$
in conjunction with $(\blambda_{\sfBT}, \btheta_{\sfBS})$, defines a joint Shannon-flow inequality~\eqref{jointSF-full}. The component $(\bsigma_S, \bmu_S, \bsigma_T, \bmu_T)$ of the dual, by \autoref{thm:witness}, is a witness for the joint Shannon-flow inequality. Note that the joint Shannon-flow inequality implies that
 $$
 \langle \btheta, \bhS \rangle + \langle \blambda, \bhT \rangle \leq \sum_{(X, Y) \in \DC} n_{Y|X} \cdot (\delta_S)_{Y|X} + \sum_{(X, Y) \in \DC \cup \AC} n_{Y|X} \cdot (\delta_T)_{Y|X} + \sum_{(X, Y|X) \in \SC}  n_{Z|\emptyset} \cdot (\gamma_{X, Y|X} + \gamma_{Y|X, X})
 $$
 where the right-hand side is exactly $\ell(\blambda_{\sfBT}, \btheta_{\sfBS})$ by taking the optimal solution of the dual \eqref{OPT-DetailedDualLP} (by strong duality).
 We have established that: for an arbitrary $(\blambda_{\sfBT}, \btheta_{\sfBS})$ satisfying \eqref{weightsSF}, we can construct a joint Shannon-flow inequality,
  $$
  \langle \bg_S, \bhS \rangle + \langle \bg_T, \bhT \rangle \geq  \langle \btheta, \bhS \rangle + \langle \blambda, \bhT \rangle,
  $$
 having a witness $(\bsigma_S, \bmu_S, \bsigma_T, \bmu_T)$, such that its implied upper bound coincides with $\ell(\blambda_{\sfBT}, \btheta_{\sfBS})$. 
 
 \subsection{A Brief Review/Augmentation of the $\PANDA$ Algorithm}
 This section provides a brief review of the $\PANDA$ algorithm. Let $\rho$ be a disjunctive rule~\eqref{def:disjunctiveRule} under degree constraints $\DC$. At a high level, $\PANDA$ does the following:
  \begin{enumerate}
      \item [Step 1.] find a vector of non-negative rationals $\blambda_{\sfBT} = (\lambda_B)_{B \in {\sfBT}}$ with $\|\blambda_{\sfBT}\|_1 = 1$, extend it to a vector (over conditional polymatroid) $\blambda \in \bQ^\mC_+$ and  $\LogSizeBound_{\Gamma_n \cap \HDC}(\rho) = \max_{h \in \Gamma_n \cap \HDC} \langle \blambda, \bh \rangle$, where the right-hand side is a linear program;
      \item [Step 2.] find an optimal dual solution $(\bdelta, \bsigma, \bmu)$ to the linear program, $\max_{h \in \Gamma_n \cap \HDC} \langle \blambda, \bh \rangle$, so that 
      $$
      \OBJ \defeq \sum_{(X, Y) \in \DC} \log N_{Y \mid X} \cdot \delta_{Y \mid X} = \LogSizeBound_{\Gamma_n \cap \HDC}(\rho)
      $$
      and $\langle \bdelta, \bh \rangle \geq \langle \blambda, \bh \rangle$ forms a Shannon-flow inequality;
      \item [Step 3.] construct a proof sequence $\proofseq$ of length $O(poly(2^n))$ for the Shannon-flow inequality $\langle \bdelta, \bh \rangle \geq \langle \blambda, \bh \rangle$ (see \autoref{thm:prooseqfinite});
      \item [Step 4.] run a $\PANDA$ instance, denoted as $\PANDA(\mD, \DC, (\blambda, \bdelta), \proofseq)$, which interprets each proof step of $\proofseq$ as a relational operation on the input relation, where each operation is guaranteed to take time $\polyO (2^{\OBJ})$. Overall, the $\PANDA$ instance runs in time $\polyO (2^{\OBJ})$ and computes a model of size $\polyO (2^{\OBJ})$.
  \end{enumerate}
  
  In particular, the $\PANDA$ instance maintains the following $4$ invariants on its inputs, i.e. $(\mD, \DC, (\blambda, \bdelta), \proofseq)$:

  \introparagraph{$\PANDA$ invariants}  


  \begin{enumerate}
      \item [(1)] \textit{Degree-support invariant:} For every $\delta_{Y|X} > 0$, there exist $Z \subseteq X, W \subseteq Y$ such that $W - Z = Y - X$ and $(Z, W, N_{W|Z}) \in \DC$. The degree constraint is said to support the positive $\delta_{Y|X}$. If there are more than one $(Z, W, N_{W|Z}) \in \DC$ supporting $\delta_{Y|X}$, we choose the one with minimum $N_{W|Z}$ and call it the supporting constraint of $\delta_{Y|X}$.
      \item [(2)] $0 < \|\blambda\|_1 \leq 1$
      \item [(3)] The Shannon ﬂow inequality along with the supporting degree constraints satisfy $\sum_{(X, Y)} n(\delta_{Y \mid X}) \leq\|\blambda\|_1 \cdot \OBJ$ where 
      $$
      n(\delta_{Y \mid X}) \defeq \begin{cases}\delta_{Y \mid X} \cdot n_{W \mid Z} & \text { if } \delta_{Y \mid X}>0 \text { and } \\  & (Z, W, N_{W \mid Z}) \text { supports it } \\ 0 & \text { if } \delta_{Y \mid X}=0 .\end{cases}
      $$
      and we call the quantity $\sum_{(X, Y)} n(\delta_{Y \mid X})$ the \textit{potential}.
      \item [(4)] For every $\delta_{Y|\emptyset} > 0$, the supporting degree constraint $(\emptyset, Y, N_{Y| \emptyset})$ satisfies $n_{Y|\emptyset} \leq \OBJ$.
  \end{enumerate}
 
  \introparagraph{A slight augmentation on $\PANDA$}
  
  For our purposes, we augment $\PANDA$ slightly to take non-optimal proof sequences. More precisely, suppose instead of Step 1 and 2, we are given vectors $(\blambda_{\sfBT}, \bdelta_{\DC})$ and witness $(\bsigma, \bmu)$ such that by extending both vectors as $\blambda, \bdelta \in \bQ^\mC_+$, 
  $\langle \bdelta, \bh \rangle \geq \langle \blambda, \bh \rangle$ forms a (not necessarily optimal) Shannon-flow inequality with witness $(\bsigma, \bmu)$. Note that the implied upper bound $\OBJ$ satisfies 
   $$
     \LogSizeBound_{\Gamma_n \cap \HDC}(\rho) \leq \OBJ \defeq \sum_{(X, Y) \in \DC} \log N_{Y \mid X} \cdot \delta_{Y \mid X},
$$
  but not necessarily coincides with $\LogSizeBound_{\Gamma_n \cap \HDC}(\rho)$. 
   In such cases, following Step 3 and 4, we show that the $\PANDA$ instance (taking the non-optimal proof sequence for the given Shannon-flow inequality as input), runs in time as predicated by the non-optimal Shannon-flow inequality, i.e. $\polyO (2^{\OBJ})$.
  
   The augmentation is as follows. We observe that in the proof of correctness of $\PANDA$ \cite{DBLP:conf/pods/Khamis0S17}, only invariant $(4)$ (at the beginning of the $\PANDA$ instance) relies on the optimality of proof sequences. However, as argued in Proposition 6.2, if initially for some $\delta_{Y|\emptyset} > 0$, $n_{Y|\emptyset} > \OBJ$, then we could replace the original Shannon-flow inequality with a new Shannon-flow inequality $\langle \bdelta', \bh \rangle \geq \langle \blambda', \bh \rangle$ along with witness $(\bsigma', \bmu')$ such that invariants (1)-(3) hold, and the length of the proof sequence decreases by at least $1$. We can repeat this replacement step iteratively until invariant $(4)$ is satisfied. If invariant $(4)$ is never satisfied, then we will end up with a proof sequence of length $0$, in which case \highlight{the Shannon-flow inequality becomes a trivial one, $\langle \bdelta_0, \bh \rangle \geq \langle \blambda_0, \bh \rangle$, for some $\blambda_0, \bdelta_0$ with $0 < \|\blambda_0\|_1 \leq 1$ (by invariant (2)) and $\bdelta_0 \geq \blambda_0$ (element-wise comparison).
   } This implies that any input relation $R_F$ where $(\lambda_0)_{F|\emptyset} > 0$ can be a model and there is an $n_{F|\emptyset} \leq \OBJ$ that can be appointed as \textit{the} output model (so the model size is $\polyO(2^{\OBJ})$), because otherwise, 
   $$\sum_{F: (\lambda_0)_{F|\emptyset} > 0} \frac{(\lambda_0)_{F|\emptyset}}{\|\blambda_0\|_1} \cdot n_{F|\emptyset} > \sum_{F: (\lambda_0)_{F|\emptyset} > 0} \frac{(\lambda_0)_{F|\emptyset}}{\|\blambda_0\|_1} \cdot \OBJ = \OBJ,
   $$
   and this contradicts invariant (3). For the $\twoPP$ algorithm, we implicitly assume that $\PANDA$ is equipped with this minor augmentation.

 \subsection{The Algorithm}
 
While $\twoPP$ follows a similar structure as the na\"ive algorithm (and uses the 2-phase algorithmic framework), it is guided by a joint Shannon-flow inequality to execute only the necessary split steps and $\PANDA$ instances, which provides more practicality and interpretability. In particular, we will show the following theorem for $\twoPP$.
 
 \begin{theorem}\label{thm:main}
  Let $\rho$ be a 2-phase disjunctive rule of the form \eqref{def:partitionDisjunctiveRule} satisfying degree constraints $\DC$ (guarded by the input relations) and $\AC$ (guarded by the access request). Let $(\blambda_{\sfBT}, \btheta_{\sfBS}) \in \left\{(\ba_T,\ba_S) \mid \ba_T \in \bQ_+^{\sfBT}, \ba_S \in \bQ_+^{\sfBS}, \|\ba_T\|_1 = 1 \right\}$. 
  The $\twoPP$ algorithm obtains a model of $\rho$ in two phases and attains the following (smooth) intrinsic trade-off:
  \begin{align}\label{tradeOff:disjunctive}
      \bspace_{\rho}^{\|\btheta_{\sfBS}\|_1} \cdot \btime_{\rho} \cong 2^{\ell(\blambda_{\sfBT}, \btheta_{\sfBS})}
  \end{align}
  where $\cong$ hides a poly-logarithmic factor at the right-hand side.
 \end{theorem}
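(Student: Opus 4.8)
The strategy is to run $\twoPP$ inside the 2-phase algorithmic framework of \autoref{alg:paradim}, but with the split steps and the $\PANDA$ calls fixed once and for all by a single joint Shannon-flow inequality derived from $(\blambda_{\sfBT},\btheta_{\sfBS})$, and then to balance the preprocessing space against the budget $\bspace$ and the online time against $2^{\ell(\blambda_{\sfBT},\btheta_{\sfBS})}/\bspace^{\|\btheta_{\sfBS}\|_1}$; taking logarithms, the goal is $\bspace_\rho\le\polyO(\bspace)$ together with $\log\btime_\rho\le\ell(\blambda_{\sfBT},\btheta_{\sfBS})-\|\btheta_{\sfBS}\|_1\log\bspace$ up to $O(\log\log|\mD|)$. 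To obtain the inequality, I would form the LP \eqref{OPT-DetailedLP} (whose optimum is $\ell(\blambda_{\sfBT},\btheta_{\sfBS})$) and take an extreme-point optimal solution of its dual \eqref{OPT-DetailedDualLP}; since that polyhedron is rational and sits in the nonnegative orthant, the solution is rational with denominators bounded in query complexity. Its $(\bdelta_S,\bdelta_T,\bgamma)$-coordinates, combined with $(\blambda_{\sfBT},\btheta_{\sfBS})$ through the substitutions preceding \autoref{thm:witness}, define a joint Shannon-flow inequality \eqref{jointSF-full} whose implied right-hand-side bound equals $\ell(\blambda_{\sfBT},\btheta_{\sfBS})$ by strong duality and whose $(\bsigma_S,\bmu_S,\bsigma_T,\bmu_T)$-coordinates form a witness. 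Setting either polymatroid to $\bzero$ splits \eqref{jointSF-full} into the participating Shannon-flow inequalities \eqref{SF-S} (over $\hS$, for preprocessing) and \eqref{SF-T} (over $\hT$, for the online phase), each still valid and each with a witness; after normalizing the head weights of \eqref{SF-S} to sum to $1$ (the case $\btheta_{\sfBS}=\bzero$ being trivial, with no preprocessing), \autoref{thm:prooseqfinite} yields finite proof sequences $\proofseq(S)$ and $\proofseq(T)$ that will be reused across all subproblems.

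$\twoPPp$ then performs only the split steps indexed by the split constraints $(X,Y\mid X)\in\SC$ with a positive $\gamma_{X,Y\mid X}$ or $\gamma_{Y\mid X,X}$; this finite sequence spawns $O(\poly(\log|\mD|))$ subproblems, each enjoying the splitting property $\log N^{(j)}_{X\mid\emptyset}+\log N^{(j)}_{Y\mid X}\le\log N_{Z\mid\emptyset}$ for all such $(X,Y\mid X)$, and whose fresh constraints $(\emptyset,X,N^{(j)}_{X\mid\emptyset})$ and $(X,Y,N^{(j)}_{Y\mid X})$ support exactly the $\bgamma_S$-coordinates of \eqref{SF-S} and the $\bgamma_T$-coordinates of \eqref{SF-T}; hence $\proofseq(S)$ and $\proofseq(T)$ are valid (though possibly non-optimal) proof sequences on each instance $(\mD^{(j)},\DC^{(j)})$, and the augmented $\PANDA$ restores its invariants by a cheap per-subproblem simplification. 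In subproblem $j$, the algorithm evaluates the preprocessing potential $\OBJ^{(j)}_S$ of \eqref{SF-S} relative to $\DC^{(j)}$; if $2^{\OBJ^{(j)}_S/\|\btheta_{\sfBS}\|_1}\le\polyO(\bspace)$, it runs $\PANDA$ on $\proofseq(S)$ to materialize the $S$-targets $(S^{(j)}_B)_{B\in\sfBS}$ in space $\polyO(2^{\OBJ^{(j)}_S/\|\btheta_{\sfBS}\|_1})\le\polyO(\bspace)$; otherwise it defers the subproblem, and in the online phase $\twoPPo$ runs $\PANDA$ on $\proofseq(T)$ over $(\mD^{(j)}\cup\{Q_A\},\DC^{(j)}\cup\AC)$ to get the $T$-targets $(T^{(j)}_B)_{B\in\sfBT}$ in time and space $\polyO(2^{\OBJ^{(j)}_T})$, where $\OBJ^{(j)}_T$ is the potential of \eqref{SF-T}. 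The model of $\rho$ is the union of the $S$-targets over the preprocessed subproblems and the $T$-targets over the deferred ones; correctness is exactly that of the 2-phase framework, because the split steps partition the satisfiers of the body of $\rho$ and $\PANDA$ returns a valid model of $\rho_S$ (resp.\ $\rho_T$) on each piece.

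The crux is the per-subproblem bound $\OBJ^{(j)}_S+\OBJ^{(j)}_T\le\ell(\blambda_{\sfBT},\btheta_{\sfBS})$. Adding the two $\PANDA$-potentials in subproblem $j$: the $\bdelta_S$-terms contribute $\sum_{(X,Y)\in\DC}n_{Y\mid X}(\delta_S)_{Y\mid X}$, the $\bdelta_T$-terms contribute $\sum_{(X,Y)\in\DC\cup\AC}n_{Y\mid X}(\delta_T)_{Y\mid X}$, and each split constraint contributes $(\gamma_{X,Y\mid X}+\gamma_{Y\mid X,X})(\log N^{(j)}_{X\mid\emptyset}+\log N^{(j)}_{Y\mid X})\le(\gamma_{X,Y\mid X}+\gamma_{Y\mid X,X})\log N_{Z\mid\emptyset}$ by the splitting property, so the total is at most the dual objective of \eqref{OPT-DetailedDualLP} at our solution, namely $\ell(\blambda_{\sfBT},\btheta_{\sfBS})$. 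Consequently, for every deferred subproblem $2^{\OBJ^{(j)}_S/\|\btheta_{\sfBS}\|_1}>\polyO(\bspace)$ forces $\OBJ^{(j)}_T\le\ell(\blambda_{\sfBT},\btheta_{\sfBS})-\|\btheta_{\sfBS}\|_1\log\bspace-O(\log\log|\mD|)$, hence $\btime_\rho=\max_j 2^{\OBJ^{(j)}_T}\le\polyO\!\bigl(2^{\ell(\blambda_{\sfBT},\btheta_{\sfBS})}/\bspace^{\|\btheta_{\sfBS}\|_1}\bigr)$; together with $\bspace_\rho\le\polyO(\bspace)$ from the preprocessed subproblems (and the degenerate case of no deferred subproblem, where $\OBJ^{(j)}_S\le\ell(\blambda_{\sfBT},\btheta_{\sfBS})$ gives $\bspace_\rho\le\polyO(2^{\ell(\blambda_{\sfBT},\btheta_{\sfBS})/\|\btheta_{\sfBS}\|_1})$ and $\btime_\rho=\polyO(1)$), this yields $\bspace_\rho^{\|\btheta_{\sfBS}\|_1}\cdot\btime_\rho\cong 2^{\ell(\blambda_{\sfBT},\btheta_{\sfBS})}$, i.e.\ \eqref{tradeOff:disjunctive}. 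I expect the main obstacle to be the bookkeeping that makes the middle paragraph rigorous: checking that the fresh constraints of each $\DC^{(j)}$ genuinely support the $\bgamma_S$- and $\bgamma_T$-coordinates of the two proof sequences, so that $\PANDA$'s degree-support and potential invariants hold uniformly over all $O(\poly(\log|\mD|))$ subproblems without re-deriving proof sequences, and that normalizing \eqref{SF-S} is compatible with $\PANDA$'s invariant $0<\|\blambda\|_1\le1$.
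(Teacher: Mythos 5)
Your proposal is correct and follows essentially the same route as the paper's proof: derive the joint Shannon-flow inequality (with witness) from an extreme-point optimal dual solution so its implied bound equals $\ell(\blambda_{\sfBT},\btheta_{\sfBS})$, split it into the two participating inequalities with reusable proof sequences, execute only the split steps indexed by positive $\gamma$-coordinates, and use the splitting property to bound the sum of the per-subproblem preprocessing and online potentials by $\ell(\blambda_{\sfBT},\btheta_{\sfBS})$, so that deferred subproblems run online within $\ell(\blambda_{\sfBT},\btheta_{\sfBS})-\|\btheta_{\sfBS}\|_1\log\bspace$. The only differences are cosmetic (you sum unnormalized potentials where the paper normalizes the $\hS$-inequality by $\|\btheta_{\sfBS}\|_1$, and you explicitly treat the degenerate cases), which do not change the argument.
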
 
 
Note that the intrinsic trade-off can be equivalently written as
   \begin{align}\label{tradeOff:disjunctive}
      \bspace_{\rho}^{\|\btheta_{\sfBS}\|_1} \cdot \btime_{\rho} \cong  \prod_{(X, Y) \in \DC} N_{Y|X}^{(\delta_S)_{Y|X}} \cdot \prod_{(X, Y) \in \DC \cup \AC} N_{Y|X}^{(\delta_S)_{Y|X}} \cdot \prod_{(X, Y|X) \in \SC} N_{Z|\emptyset}^{\gamma_{X, Y|X} + \gamma_{Y|X, X}} 
  \end{align}
  In particular, given a fixed $\bspace$ for $S_{\rho}$, we can construct from \autoref{lem:maxminLP} (using complementary slackness) $(\blambda_{\sfBT}^*, \btheta_{\sfBS}^*)$ such that $\OBJ(S) = \bL(\blambda_{\sfBT}^*, \btheta_{\sfBS}^*, S)$. Since $\ell(\blambda_{\sfBT}^*, \btheta_{\sfBS}^*) - (\log S) \cdot \|\btheta_{\sfBS}^*\|_1 = \bL(\blambda_{\sfBT}^*, \btheta_{\sfBS}^*, S)$, \autoref{thm:main} recovers the best possible instrinsic trade-off as specified in \eqref{OPT-maximin}. In the rest of this section, we present the $\twoPP$ algorithm, and defer the full proof of~\autoref{thm:main} to the next section.
  

\introparagraph{Preparation phase}
 Similar to $\PANDA$, $\twoPP$ has a preparation phase to construct the necessary inputs for a $\twoPP$ instance. First, we construct from the given $(\blambda_{\sfBT}, \btheta_{\sfBS})$ a joint Shannon-flow inequality, $\langle \bg_S, \bhS \rangle + \langle \bg_T, \bhT \rangle \geq  \langle \btheta, \bhS \rangle + \langle \blambda, \bhT \rangle$ and a witness for it, $(\bsigma_S, \bmu_S, \bsigma_T, \bmu_T)$. Recall that the joint Shannon-flow inequality implies an upper bound coincides with $\ell(\blambda_{\sfBT}, \btheta_{\sfBS})$.
 
 Second, we construct a proof sequence for the joint Shannon-flow inequality. The idea is to build two parallel proof sequences for its two participating Shannon-flow inequalities and stitch them together. Recall that the two participating Shannon-flow inequalities are 
 \begin{align*}
     \langle \bg_S, \bhS \rangle & \geq \langle \btheta, \bhS \rangle \\
 \langle \bg_T, \bhT \rangle & \geq \langle \blambda, \bhT \rangle
 \end{align*}

From the proof of \autoref{thm:witness}, $(\bsigma_S, \bmu_S)$ is a witness for $\langle \bg_S, \bhS \rangle \geq \langle \btheta, \bhS \rangle$ and $(\bsigma_T, \bmu_T)$ is a witness for $\langle \bg_T, \bhT \rangle \geq \langle \blambda, \bhT \rangle$. We  normalize the Shannon-flow inequality $\langle \bg_S, \bhS \rangle \geq \langle \btheta, \bhS \rangle$ into
 $\langle \widetilde{\bg}_S, \bhS \rangle \geq \langle \widetilde{\btheta}, \bhS \rangle$, where $\widetilde{\bg}_S \defeq \bg_S/\|\btheta\|_1$ and $\widetilde{\btheta} \defeq \btheta / \|\btheta\|_1$, with a new witness $(\widetilde{\bsigma}_S, \widetilde{\bmu}_S) \defeq (\bsigma_S/\|\btheta\|_1, \bmu_S/\|\btheta\|_1)$. From \autoref{thm:prooseqfinite}, we can construct a proof sequence $\proofseq(S)$ for the Shannon-flow inequality $\langle \widetilde{\bg}_S, \bhS \rangle \geq \langle \widetilde{\btheta}, \bhS \rangle$ and a proof sequence $\proofseq(T)$ for $\langle \bg_T, \bhT \rangle \geq \langle \blambda, \bhT \rangle$, where both proof sequences have length $O(poly(2^n))$, $n$ being the number of variables. We say that $\proofseq(S)$ and $\proofseq(T)$ are {\textit{the participating proof sequences}} for the joint Shannon-flow inequality. As a brief summary, in the preparation phase, $\twoPP$:
 \begin{enumerate}
     \item [(1)] (see \autoref{thm:witness}) constructs a joint Shannon-flow inequality $\langle \bg_S, \bhS \rangle + \langle \bg_T, \bhT \rangle \geq  \langle \btheta, \bhS \rangle + \langle \blambda, \bhT \rangle$ with a witness $(\bsigma_S, \bmu_S, \bsigma_T, \bmu_T)$; and
    \item [(2)] (see \autoref{thm:prooseqfinite}) constructs a $\proofseq(S)$ for the participating Shannon-flow inequality $\langle \widetilde{\bg}_S, \bhS \rangle \geq \langle \widetilde{\btheta}, \bhS \rangle$ and a $\proofseq(T)$ for the participating Shannon-flow inequality $\langle \bg_T, \bhT \rangle \geq \langle \blambda, \bhT \rangle$ 
 \end{enumerate}
   
Now, we walk through both phases of $\twoPP$. In particular, we denote $\twoPPp$ as the preprocessing phase of $\twoPP$ and  $\twoPPo$ as the online phase of $\twoPP$. The sketches of $\twoPPp$ and $\twoPPo$ are in the box of \autoref{2PP-1} and \autoref{2PP-2}.
 
 \begin{algorithm}[t]
 \SetKwInOut{Input}{Input}
 \SetKwInOut{Output}{Output}
    \Input{ a database instance $\mD$ and degree constraints $\DC$ guarded by $\mD$}
    \Input{ the degree constraints $\AC$ guarded by any possible access request $Q_A$}
    \Input{ the participating Shannon-flow inequality $\langle \widetilde{\bg}_S, \bhS \rangle \geq \langle \widetilde{\btheta}, \bhS \rangle$ and its proof sequence $\proofseq(S)$}
    Let $\SC$ be the set of split constraints spanned from $\DC$ \\
    $\SC^+ \leftarrow \left\{(Y, X) \mid (X, Y|X, N_{Z|\emptyset}) \in \SC, \gamma_{X, Y|X} > 0 \vee \gamma_{Y|X, X} > 0 \right\}$ \\
    Apply a sequence of split steps, one for every $(Y, X) \in \SC^+$ and spawn $k$ subproblems with inputs $(\mD^{(j)}, \DC^{j})$, $j \in [k]$ \\ \tcp*[f]{$k = O(poly(\log |\mD|))$}
    \\ 
    $\mJ \leftarrow \emptyset$ \\
    \ForAll{$j \in [k]$} {
    Create a $\PANDA$ instance $\PANDA(\mD^{(j)}, \DC^{(j)}, (\widetilde{\btheta}, \widetilde{\bg}_S), \proofseq(S))$ \\
        \If{the potential satisfies \eqref{alg:abortCond}}{
        $(S^{(j)}_B)_{B \in \sfBS} \leftarrow \PANDA(\mD^{(j)}, \DC^{(j)}, (\widetilde{\btheta}, \widetilde{\bg}_S), \proofseq(S))$ \\
        } \Else{
            \textbf{abort} $\PANDA(\mD^{(j)}, \DC^{(j)}, (\widetilde{\btheta}, \widetilde{\bg}_S), \proofseq(S))$ \\
            \textbf{insert} $(\mD^{(j)}, \DC^{(j)})$ to $\mJ$ \\
        }   
    }
    \Return $\left(\mJ, (\bigcup_j S^{(j)}_B)_{B \in \sfBS}\right)$ \\
    
    \caption{$\twoPPp (\mD, \DC, \AC, (\widetilde{\btheta}, \widetilde{\bg}_S), \proofseq(S))$}\label{2PP-1}
\end{algorithm}

\begin{algorithm}[t]
 \SetKwInOut{Input}{Input}
 \SetKwInOut{Output}{Output}
    \Input{ an index $\mJ$ containing $O(poly(\log |\mD|))$ entries where each entry contains input relations $\mD^{(j)}$ and degree constraints $\DC^{(j)}$ guarded by $\mD^{(j)}$}
    \Input{ an access request $Q_A$ and the degree constraints $\AC$ guarded by $Q_A$}
    \Input{ the participating Shannon-flow inequality $\langle \bg_T, \bhT \rangle \geq \langle \blambda, \bhT \rangle$ and its proof sequence $\proofseq(T)$}

    \ForAll{$(\mD^{(j)}, \DC^{(j)}) \in \mJ$} {
    Create a $\PANDA$ instance $\PANDA(\mD^{(j)}\cup \{Q_A\}, \DC^{(j)} \cup \AC, (\blambda, \bg_T), \proofseq(T))$ \\
        $(T^{(j)}_B)_{B \in \sfBT} \leftarrow \PANDA(\mD^{(j)}\cup \{Q_A\}, \DC^{(j)} \cup \AC, (\blambda, \bg_T), \proofseq(T))$ \\
    }
    \Return $(\bigcup_j T^{(j)}_B)_{B \in \sfBT}$ \\
    
    \caption{$\twoPPo (\mJ, \{Q_A\}, \AC, (\blambda, \bg_{T}), \proofseq(T))$}\label{2PP-2}
\end{algorithm}
 
 \introparagraph{The preprocessing phase} We call this phase $\twoPPp$ and it is sketched in the box of \autoref{2PP-1}.
We first scan over the joint Shannon-flow inequality, $\langle \bg_S, \bhS \rangle + \langle \bg_T, \bhT \rangle \geq  \langle \btheta, \bhS \rangle + \langle \blambda, \bhT \rangle$ and $(\bsigma_S, \bmu_S, \bsigma_T, \bmu_T)$ and apply a sequence of split steps that consists of one $(Y, X)$-pair for every $(Y, X)$ satisfying $(X, Y|X, N_{Z|\emptyset}) \in \SC$ and either $\gamma_{X, Y|X} > 0$ or $\gamma_{Y|X, X} > 0$. The sequence of split steps spawns $\poly(\log |\mD|)$ subproblems. Let $\rho(\mD^{(j)} \cup \{Q_A\}, \DC^{(j)} \cup \AC)$ be the $j$-th subproblem after the sequence of split steps, having degree constraints $\DC^{(j)}$ guarded by $\mD^{(j)}$. Next, we create for it a $\PANDA$ instance $\PANDA(\mD^{(j)}, \DC^{(j)}, (\widetilde{\btheta}, \widetilde{\bg}_S), \proofseq(S))$ and look at its initial \textit{potential}
 \begin{align}\label{potential-s}
     \sum_{(X, Y)} n^{(j)}_{W|Z} \cdot (\widetilde{g}_S)_{Y|X} 
     \defeq \sum_{(X, Y) \in \DC} n^{(j)}_{W|Z} \cdot \frac{(\delta_S)_{Y|X}}{\|\btheta\|_1}   + \sum_{(X, Y|X) \in \SC}  n^{(j)}_{X|\emptyset} \cdot \frac{\gamma_{X, Y|X}}{\|\btheta\|_1} + \sum_{(X, Y|X) \in \SC}  n^{(j)}_{W|Z}  \cdot \frac{\gamma_{Y|X, X}}{\|\btheta\|_1}
 \end{align}
 where $n^{(j)}_{W|Z} \defeq \log N^{(j)}_{W|Z}$ and $(Z, W, N^{(j)}_{W|Z}) \in \DC^{(j)}$ is the constraint that supports a positive $(\widetilde{g}_S)_{Y|X}$. If the potential satisfies
 \begin{align}\label{alg:abortCond}
     \sum_{(X, Y)} n^{(j)}_{W|Z} \cdot (\widetilde{g}_S)_{Y|X} \leq \log \bspace,
 \end{align}
 then $\twoPP$ allows this $\PANDA$ instance to run and stores its output $(S^{(j)}_B)_{B \in \sfBS}$. Otherwise, $\twoPP$ aborts this $\PANDA$ instance and keeps track of the input $(\mD^{(j)}, \DC^{(j)})$ for the $j$-th subproblem using an index $\mJ$. The data structure(s) stored in the preprocessing phase are the index $\mJ$ that tracks inputs (and its degree constraints) from aborted instances, and a set of tables $S_B = \bigcup_j S^{(j)}_B, B \in \sfBS$ from all succeeded $\PANDA$ instances.
 
 \introparagraph{The online phase} We call this phase $\twoPPo$ and it is sketched in the box of \autoref{2PP-2}.
The algorithm scans over the index $\mJ$ built in the preprocessing phase, and for each $(\mD^{(j)}, \DC^{(j)}) \in \mJ$, it creates and runs a $\PANDA$ instance $\PANDA(\mD^{(j)}\cup \{Q_A\}, \DC^{(j)} \cup \AC, (\blambda, \bg_T), \proofseq(T))$. At the termination of each $\PANDA$ instance, $\twoPP$ collects outputs $(T^{(j)}_B)_{B \in \sfBT}$. The overall output in the online phase is the set of tables $T_B = \bigcup_j T^{(j)}_B, B \in \sfBT$ from all $\PANDA$ instances created and executed online.

\subsection{Analysis of the $\twoPP$ algorithm}
\label{sec:proof}

 In the rest of the section, we formally prove Theorem~\ref{thm:main} for the $\twoPP$ algorithm. 

\begin{proof}[Proof of~\autoref{thm:main}]
 
  Recall that the split steps at the initial stage of $\twoPP$ spawn $O(poly(\log |\mD|))$ subproblems. We prove by analyzing the space and time usage for the $j$-th subproblem. 
 
  First, in the preprocessing phase, if the potential of the $\PANDA$ instance, $\PANDA(\mD^{(j)}, \DC^{(j)}, (\widetilde{\btheta}, \widetilde{\bg}_S), \proofseq(S))$, is no larger than $\log \bspace$, then by invariant $(4)$ of $\PANDA$, the output tables $(S^{(j)}_B)_{B \in \sfBT}$ (and every intermediate view produced by this $\PANDA$ instance) have size $\polyO(\bspace)$. Otherwise, the $j$-th subproblem consumes space $O(|\mD|)$ for storing its input $(\mD^{(j)}, \DC^{(j)})$ in the index. Thus, the overall space consumption for the data structure is $O(poly(\log |\mD|)) \cdot \polyO(|\mD| + \bspace) = \polyO(\bspace)$.
  
  Next, we are left to show that for any subproblem delegated to the online phase, the $\PANDA$ instance created by
  $\twoPP$, $\PANDA(\mD^{(j)}\cup \{Q_A\}, \DC^{(j)} \cup \AC, (\blambda, \bg_T), \proofseq(T))$, terminates in time $\polyO(\btime_{\rho})$, where
 $$ 
      \log \btime_{\rho} = \ell(\blambda_{\sfBT}, \btheta_{\sfBS}) - \log \bspace \cdot {\|\btheta_{\sfBS}\|_1}.
 $$
 Recall that $n_{Y|X} \defeq \log N_{Y|X}$ and by strong duality,
 $$\ell(\blambda_{\sfBT}, \btheta_{\sfBS})  = \sum_{(X, Y) \in \DC} n_{Y|X} \cdot {(\delta_S)_{Y|X}} + \sum_{(X, Y) \in \DC \cup \AC} n_{Y|X} \cdot {(\delta_S)_{Y|X}} + \sum_{(X, Y|X) \in \SC} n_{Z|\emptyset} \cdot ({\gamma_{X, Y|X} + \gamma_{Y|X, X}}).$$
 To show this, we look at both $\PANDA$ instances of the $j$-th subproblem, $\rho(\mD^{(j)} \cup \{Q_A\}, \DC^{(j)} \cup \AC)$, i.e.
 \begin{align*}
     \textit{(preprocessing instance)} \qquad & \PANDA(\mD^{(j)}, \DC^{(j)}, (\widetilde{\btheta}, \widetilde{\bg}_S), \proofseq(S)) \\
     \textit{(online instance)} \qquad & \PANDA(\mD^{(j)}\cup \{Q_A\}, \DC^{(j)} \cup \AC, (\blambda, \bg_T), \proofseq(T))
 \end{align*}
 The \textit{preprocessing instance} has the potential specified in \eqref{potential-s}, while the \textit{online instance} has the following \textit{potential}:
 $$ \sum_{(X, Y)} n^{(j)}_{W|Z} \cdot (g_T)_{Y|X} \defeq \sum_{(X, Y) \in \DC \cup \AC} n^{(j)}_{W|Z} \cdot (\delta_T)_{Y|X}  + \sum_{(X, Y|X) \in \SC}  n^{(j)}_{W|Z} \cdot \gamma_{X, Y|X} + \sum_{(X, Y|X) \in \SC}  n^{(j)}_{X|\emptyset}  \cdot \gamma_{Y|X, X}
 $$
 where $n^{(j)}_{W|Z} \defeq \log N^{(j)}_{W|Z}$ and $(Z, W, N^{(j)}_{W|Z}) \in \DC^{(j)}$ is the constraint that supports a positive $(g_T)_{Y|X}$. Now we have, 
 \begin{align*}
      \sum_{(X, Y)} n^{(j)}_{W|Z} \cdot (g_T)_{Y|X} + \|\btheta\|_1 \cdot \sum_{(X, Y)} n^{(j)}_{W|Z} \cdot & (\widetilde{g}_S)_{Y|X}  =
      \sum_{(X, Y) \in \DC} n^{(j)}_{W|Z} \cdot (\delta_S)_{Y|X} + \sum_{(X, Y) \in \DC \cup \AC}   n^{(j)}_{W|Z} \cdot (\delta_T)_{Y|X} \\ 
       & + \sum_{(X, Y|X) \in \SC}  (n^{(j)}_{X|\emptyset} + n^{(j)}_{W|Z}) \cdot \gamma_{X, Y|X} + \sum_{(X, Y|X) \in \SC}  (n^{(j)}_{W|Z} + n^{(j)}_{X|\emptyset})  \cdot \gamma_{Y|X, X}
 \end{align*}
 Recall that $\twoPP$ executes a split step for every $(Y, X)$-pair satisfying $(X, Y|X, N_{Z|\emptyset}) \in \SC$ and $\gamma_{X, Y|X} > 0 \vee \gamma_{Y|X, X} > 0$. So for every such $(X, Y)$-pair, there are some $(\emptyset, X,  N_{X | \emptyset}^{(j)}), (X, Y, N_{Y|X}^{(j)}) \in \DC^{(j)}$ such that $ N_{X | \emptyset}^{(j)} \cdot N_{Y | X}^{(j)} \leq N_{Z | \emptyset}$. It implies that $n^{(j)}_{X|\emptyset} + n^{(j)}_{W|Z} \leq \log N_{Z|\emptyset}$. Moreover, for every $(X, Y) \in \DC$ where $(\delta_S)_{Y|X} > 0$ or $(\delta_T)_{Y|X} > 0$, it holds that $n^{(j)}_{W|Z} \leq \log N_{Y|X}$. Therefore, we get
 $$ \sum_{(X, Y)} n^{(j)}_{W|Z} \cdot (g_T)_{Y|X} + \|\btheta\|_1 \cdot \sum_{(X, Y)} n^{(j)}_{W|Z} \cdot (\widetilde{g}_S)_{Y|X}  \leq \ell(\blambda_{\sfBT}, \btheta_{\sfBS})
 $$
 This implies that for the $j$-th subproblem (thus for any subproblem), either in the preprocessing phase, 
 $$ \sum_{(X, Y)} n^{(j)}_{W|Z} \cdot (g_T)_{Y|X} \leq \log \bspace, 
 $$
 or in the online phase
 \begin{align*}
     \sum_{(X, Y)} n^{(j)}_{W|Z} \cdot (g_T)_{Y|X} & \leq \ell(\blambda_{\sfBT}, \btheta_{\sfBS}) - \|\btheta\|_1 \cdot \sum_{(X, Y)} n^{(j)}_{W|Z} \cdot (\widetilde{g}_S)_{Y|X} \\
     & \leq \ell(\blambda_{\sfBT}, \btheta_{\sfBS}) - \|\btheta_{\sfBS}\|_1 \cdot \log \bspace \\
     & = \log \btime_{\rho}.
 \end{align*}
 So, all online $\PANDA$ instances terminate in time as predicated by \eqref{tradeOff:disjunctive}.
 
 \end{proof}
 
 \eat{
 \section{Example of The $\twoPP$ Algorithm}
 \begin{example} Consider the $2$-Set Disjointness example (boolean): 
 The joint Shannon-flow inequality
  \begin{align*}
    \textcolor{black}{\hS(1)} + \textcolor{black}{\hT( 2 | 1)} + \textcolor{black}{\hS(3)} + \textcolor{black}{\hT(2| 3)}+ 2\cdot \textcolor{black}{\hT(1 3)} \geq \textcolor{black}{\hS(1 3)} + 2 \cdot \textcolor{black}{\hT(1 2 3)} 
\end{align*}
 $\twoPP$ algorithm (let $|Q_{13}| = 1$, $S$ be an arbitrarily-fixed space budget):
 \begin{enumerate}
     \item [(1)] $\twoPP$ scans the joint Shannon-flow inequality and partitions $R_{12}(x_1, x_2)$ into $R'_{12}(x_1)$ and $R'_{12}(x_1, x_2)$, where $R'_1(x_1)$ contains all values of $x_1$ that are heavy-hitters, meaning that $|\sigma_{x_1 = t}(R_{12})| \geq N/\sqrt{S}$, and $R'_{12}(x_1, x_2)$ contains all pairs $(x_1, x_2)$ where $x_1$ is the light-hitters, i.e.  $R'_{12}(x_1, x_2)$ satisfies $\deg_{12}(x_2|x_1) \leq N/\sqrt{S}$. Similarly, $\twoPP$ partitions $R_{23}(x_2, x_3)$ into $R'_{23}(x_3)$ and $R'_{23}(x_2, x_3)$ where 
     $R'_3(x_3)$ contains all values of $x_1$ such that $|\sigma_{x_3 = t}(R_{23})| \geq N/ \sqrt{S}$, and $R'_{23}(x_2, x_3)$ contains all pairs $(x_2, x_3)$ where $\deg_{23}(x_2|x_3) \leq N/\sqrt{S}$. So there are $4$ subproblems, i.e. $R'_1(x_1) \bowtie R'_3(x_3)$, $R'_1(x_1) \bowtie R'_{23}(x_2, x_3)$, $R'_{12}(x_1, x_2) \bowtie R'_3(x_3)$ and $R'_{12}(x_1, x_2) \bowtie R'_{34}(x_3, x_4)$.
     \item [(2)] in the preprocessing phase, $\twoPPp$, for every subproblem, follows the first participating proof sequence
      \begin{align*}
     \textcolor{black}{\hS(1)} + \textcolor{black}{\hS(3)} & \geq \textcolor{black}{\hS(13|3)} + \textcolor{black}{\hS(3)} && (submodularity) \\
     & = \textcolor{black}{\hS(13)} && (composition)
    \end{align*}
    to attempt joining relations and storing output in space $N^{3/2}$. Because $R'_1(x_1)$ and $R'_3(x_3)$ have sizes at most $N / (N/\sqrt{S}) = \sqrt{S}$, the first subproblem $R'_1(x_1) \bowtie R'_3(x_3)$ can be computed and stored (in space $\sqrt{S} \cdot \sqrt{S} = S$)
    \item [(3)] in the online phase, $\twoPPo$ receives an access request $Q_{13}(x_1, x_3)$ that contains one tuple. Now, $\twoPPo$ follows the second participating proof sequence
    \begin{align*}
      \textcolor{black}{\hT(2 | 1)} + \textcolor{black}{\hT(2| 3)}+ 2 \textcolor{black}{\hT(1 3)} & \geq  2\textcolor{black}{\hT(2|13)} + 2 \textcolor{black}{\hT(1 3)}   && (submodularity)\\
      &  = 2 \textcolor{black}{\hT(123)}  && (composition) 
 \end{align*}
    In particular, for the remaining $3$ subproblems, $\twoPPo$ creates
    the following $3$ subproblems: $Q_{13}(x_1, x_3)  \bowtie R'_{23}(x_2, x_3)$, $Q_{13}(x_1, x_3) \bowtie R'_{12}(x_1, x_2)$ and $Q_{13}(x_1, x_3) \bowtie R'_{12}(x_1, x_2)$. Then, in the submodularity step, $\twoPPo$ identifies that for the subproblem $Q_{13}(x_1, x_3) \bowtie R'_{23}(x_2, x_3)$, $\deg_{23}(x_2|x_3) \leq N/\sqrt{S}$, so the join takes time $|Q_{13}| \cdot N/\sqrt{S} \leq N/\sqrt{S}$; and similarly for the last two subproblems, $\twoPPo$ identifies that $\deg_{12}(x_2|x_1) \leq N/\sqrt{S}$, so both joins take time $|Q_{13}| \cdot \deg_{12}(x_2|x_1) \leq  N/\sqrt{S}$. Therefore, the overall online computing time is $N/\sqrt{S}$.
 \end{enumerate}
 \end{example}
 }

\newpage
\section{Missing Details from Section~\ref{sec:results}}
\label{sec:missing}

\subsection{Tradeoffs via Fractional Edge Cover (Section~\ref{sec:edgecover})}
The following lemma is a generalization of Shearer's lemma (Lemma D.1 in \cite{NgoRR13}).

\begin{lemma}\label{lem:conditional:Shearer}
Let $\mH = ([n],\edges)$ be a hypergraph and $\hat{\bu}$ be a fractional edge cover of $[n] \setminus A \subseteq [n]$.  Then, 
$$ \sum_{F \in \edges} \hat{u}_F  \cdot  h({F} \mid {A \cap F})  +  h(A)  \geq h([n])$$
\end{lemma}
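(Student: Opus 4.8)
The statement is a conditional version of Shearer's lemma; the natural approach is to mimic the standard entropy proof of Shearer's inequality but carry the conditioning on $A$ throughout. Let $h$ be the entropy function of a joint distribution on variables $\bx_{[n]}$ (it suffices to prove the inequality for entropic $h$ and then invoke that polymatroids satisfy exactly the Shannon inequalities used below — submodularity and monotonicity — so the derivation goes through verbatim for any polymatroid). The plan is to expand $h([n] \mid A) = h([n]) - h(A)$ by the chain rule over an arbitrary linear order $x_{i_1}, \dots, x_{i_m}$ of the variables in $[n] \setminus A$, and then to bound each incremental term $h(x_{i_j} \mid A, x_{i_1}, \dots, x_{i_{j-1}})$ from above by a weighted sum of terms of the form $h(x_{i_j} \mid (A \cap F), \text{earlier variables of } F)$ ranging over hyperedges $F$ that contain $i_j$, using that $\hat{\bu}$ covers each such $i_j$ with total weight at least $1$ and that conditioning on fewer variables only increases entropy (monotonicity of conditional entropy).

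More concretely, first I would fix, for each hyperedge $F$, the linear order induced on $F$ by the global order chosen above, and write the standard telescoping identity $h(F \mid A \cap F) = \sum_{i \in F \setminus A} h(x_i \mid (A\cap F) \cup \{j \in F : j <_{\text{order}} i\})$. Next, for each $i \in [n] \setminus A$, since $\sum_{F : i \in F} \hat u_F \ge 1$, I would combine these:
\begin{align*}
h([n] \mid A) &= \sum_{i \in [n]\setminus A} h(x_i \mid A, \{j : j < i\}) \\
&\le \sum_{i \in [n]\setminus A} \Big(\sum_{F : i \in F} \hat u_F\Big)\, h(x_i \mid A, \{j : j < i\}) \\
&\le \sum_{F \in \edges} \hat u_F \sum_{i \in F \setminus A} h\big(x_i \mid (A\cap F) \cup \{j \in F : j < i\}\big) \\
&= \sum_{F \in \edges} \hat u_F\, h(F \mid A \cap F),
\end{align*}
where the first inequality uses $\hat u_F \ge 0$ and $\sum_{F:i\in F}\hat u_F \ge 1$ together with $h(x_i \mid \cdots)\ge 0$, and the second (the key step) uses monotonicity: conditioning $h(x_i \mid A, \{j<i\})$ on the superset $A \cup \{j<i\}$ can only decrease it relative to conditioning only on $(A\cap F)\cup\{j\in F: j<i\} \subseteq A\cup\{j<i\}$. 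Rearranging $h([n]\mid A) = h([n]) - h(A)$ gives exactly the claimed inequality.

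The step I expect to require the most care is verifying that the inequality $h(x_i \mid A, \{j<i\}) \le h(x_i \mid (A\cap F)\cup\{j\in F: j<i\})$ is a legitimate consequence of the polymatroid axioms and not merely an entropy fact — i.e., that it follows from submodularity/monotonicity applied to the right pair of sets, so the argument is valid for all polymatroids, not just entropic functions. This is straightforward: it is $h(Y \cup \{i\}) - h(Y) \le h(Y' \cup \{i\}) - h(Y')$ for $Y' \subseteq Y$, which is precisely submodularity (the "diminishing returns" form). A secondary bookkeeping point is ensuring the global order is consistent across all hyperedges so that "$j < i$ within $F$" always refers to the same relative order — this is handled by fixing one global linear order on $[n]$ at the outset and restricting it to each $F$. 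Finally, I would note that taking $A = \emptyset$ recovers Lemma D.1 of~\cite{NgoRR13}, and that combined with the size bound for disjunctive rules this lemma feeds into the edge-cover tradeoff of \autoref{thm:main1}.
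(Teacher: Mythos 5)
Your proof is correct and follows essentially the same route as the paper's: a chain-rule expansion of $h([n])$ with $A$ handled first (the paper achieves your explicit conditioning on $A$ by assuming w.l.o.g.\ that $A$ is an initial segment of the variable order), the coverage inequality $\sum_{F: i\in F}\hat u_F\ge 1$ together with nonnegativity, and submodularity in diminishing-returns form to shrink each conditioning set to the variables of $F$, followed by re-telescoping within each hyperedge to obtain $h(F\mid A\cap F)$. Your closing observation that every step is a polymatroid (not merely entropic) inequality matches how the lemma is used in the paper.
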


\begin{proof}
Assume w.l.o.g. that $A = \{1, \dots, \ell-1\}$. Then we can write:
\begin{align*}
h([n]) & = \sum_{j=\ell}^n h(j \mid i: i<j) + h(A) \\
& \leq \sum_{j=\ell}^n \sum_{F \in \edges: j \in F} \hat{u}_F \cdot h(j \mid i: i<j, i \in F) + h(A) \\
& = \sum_{F \in \edges} \hat{u}_F \sum_{j \in F \setminus A} h(j \mid i: i<j, i \in F)  + h(A) \\
& = \sum_{F \in \edges} \hat{u}_F \left(h(F) - \sum_{j \in F, j < \ell} h(j \mid i: i<j, i \in F) \right) + h(A) \\
& = \sum_{F \in \edges} \hat{u}_F \left(h(F) - \sum_{j \in F \cap A} h(j \mid i: i<j, i \in F \cap A) \right) + h(A) \\ 
& = \sum_{F \in \edges} \hat{u}_F \left(h(F) - h({A \cap F}) \right) + h(A) \\ 
& = \sum_{F \in \edges} \hat{u}_F \cdot h(F \mid {A \cap F})  + h(A) \\ 
\end{align*}
This completes the proof.
\end{proof}

\begin{proof}[Proof of Theorem~\ref{thm:main1}]
To obtain the desired tradeoff, we consider two PMTDs. The first PMTD $P_1$ has one node $t$ with $\chi_1(t) = [n]$ and $M_1 = \emptyset$, while the second PMTD $P_2$ has also one bag $t$ with $\chi_2(t) = [n]$ and $M_2 = \{t\}$. $P_1$ contains the $T$-view $T_{[n]}(\bx_{[n]})$, while $P_2$ contains the $S$-view $S_A(\bx_A)$. These two PMTDs correspond to the following materialization policy: either store directly the answer of an access request, or compute the access request from scratch.
Hence, we only need to consider one disjunctive rule (we use $\bx$ to denote the tuple $\bx_{12\dots n}$):
$$ T_{[n]}(\bx) \vee S_{A}(\bx_A) \leftarrow Q_A(\bx_A) \wedge \bigwedge_{F \in \edges} R_F(\bx_F).  $$

Define $\alpha = \alpha(\bu,A)$ and $\hat{\bu} = \bu /\alpha$. We can now write the following proof:
\begin{align*}
        \sum_{F \in \edges} u_F \cdot \log N_{F|\emptyset} + \slack \cdot {\log |Q_A|}
    & \geq  \sum_{F \in \edges} u_F  \cdot \{ {\hT(F \mid {A \cap F})} + {\hS({A \cap F})} \} + \slack \cdot {\hT(A)}  \\
    & =  \sum_{F \in \edges} u_F  \cdot {\hT(F \mid {A \cap F})}  + \slack \cdot {\hT(A)}  + \sum_{F \in \edges} u_F \cdot {\hS({A \cap F})} && \\
    & \geq \sum_{F \in \edges} u_F  \cdot {\hT(F \mid {A \cap F})}  + \slack \cdot {\hT(A)}  +  {\hS({A})} &&  \textit{(Shearer's Lemma)} \\
        & = \slack \sum_{F \in \edges} \hat{u}_F  \cdot {\hT({F} \mid {A \cap F})}  + \slack \cdot {\hT(A)}  +  {\hS({A})} &&  \textit{(\autoref{lem:conditional:Shearer})} \\
       & \geq \slack \cdot {\hT([n])}  +  {\hS({A})} \\
\end{align*}
The second inequality is a direct application of Shearer's Lemma \highlight{on the sub-hypergraph $(A, \{A \cap F \mid F \in \edges\})$ of $H$, since $\bu$ is a fractional edge cover of $A$.} The last inequality is a direct consequence of Lemma~\ref{lem:conditional:Shearer}.
By Theorem~\ref{thm:main}, we obtain the desired tradeoff.
\end{proof}

\subsection{Tradeoffs via Tree Decompositions (Section~\ref{sec:decompostion})}
\label{sec:tree-tradeoff}
Let $\varphi(\bx_A \mid \bx_A)$ be a CQAP. Following Section~\ref{sec:decompostion}, let $\mP$ be the set of all 2-phase disjunctive rules generated by the induced set of PMTDs. We start our analysis by showing that for any disjunctive rule $\rho_a$ in this set, there is another disjunctive rule $\rho_b$ that is no easier than $\rho_a$ in terms of its intrinsic tradeoff. 
Interestingly, despite choosing a (possibly) harder rule, we are still able to recover many state-of-the-art tradeoffs. We begin by stating two key observations.

\begin{observation} \label{obv:rule}
    For any 2-phase disjunctive rules $\rho_a$ and $\rho_b$, $\rho_a$ is said to be no harder than $\rho_b$ (or equivalently, $\rho_b$ is no easier than $\rho_a$) if 
    the $S$-targets of $\rho_b$ are a subset of the $S$-targets of $\rho_a$ and the $T$-targets of $\rho_b$ are a subset of the $T$-targets of $\rho_a$.
\end{observation}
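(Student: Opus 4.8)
The statement to prove is \autoref{obv:rule}, which simply asserts a sufficient condition under which one 2-phase disjunctive rule $\rho_a$ is ``no harder'' than another rule $\rho_b$: namely, that the $S$-targets of $\rho_b$ are contained in those of $\rho_a$, and the $T$-targets of $\rho_b$ are contained in those of $\rho_a$. Since this is really a definitional claim packaged as an observation, the plan is to show that the intrinsic space-time tradeoff guaranteed by our framework for $\rho_b$ is at least as good (i.e.\ requires no more space for the same time, and no more time for the same space) as the one guaranteed for $\rho_a$, under the stated containment of target sets. The natural tool is \autoref{thm:main} together with the LP characterization in \eqref{OPT-maximin} (equivalently \eqref{OPT-maxminLPIneq} and \autoref{lem:maxminLP}): the best tradeoff our framework attains for a 2-phase disjunctive rule $\rho$ under space budget $\bspace$ is $\btime_\rho = 2^{\OBJ_\rho(\bspace)}$, where $\OBJ_\rho(\bspace)$ is the value of the maximin program in \eqref{OPT-maximin} with the $\min$ taken over $B \in \sfBT^\rho$ and the constraints $\hS(B) > \log \bspace$ imposed for $B \in \sfBS^\rho$.

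The key step is a monotonicity argument on this program. First I would fix the same space budget $\bspace$ for both rules and compare $\OBJ_{\rho_a}(\bspace)$ with $\OBJ_{\rho_b}(\bspace)$. Since $\sfBS^{\rho_b} \subseteq \sfBS^{\rho_a}$ (the $S$-targets of $\rho_b$ are a subset of those of $\rho_a$), passing from $\rho_a$ to $\rho_b$ \emph{drops} some of the constraints ``$\hS(B) > \log \bspace$'', hence enlarges the feasible region of the maximin program over $(\hS,\hT)$. Simultaneously, since $\sfBT^{\rho_b} \subseteq \sfBT^{\rho_a}$, the inner minimization $\min_{B \in \sfBT} \hT(B)$ for $\rho_b$ ranges over a \emph{subset} of the targets it ranged over for $\rho_a$, so $\min_{B \in \sfBT^{\rho_b}} \hT(B) \geq \min_{B \in \sfBT^{\rho_a}} \hT(B)$ pointwise in $\hT$. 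Both changes push the objective value upward, but that upward push is exactly the wrong direction for what we want to claim --- wait; the subtlety here is that \autoref{obv:rule} declares $\rho_a$ to be \emph{no harder} than $\rho_b$, i.e.\ we want the tradeoff of $\rho_b$ to be at least as costly as that of $\rho_a$. So actually we want $\OBJ_{\rho_b}(\bspace) \geq \OBJ_{\rho_a}(\bspace)$, which is precisely what the two monotonicity observations give: enlarging the feasible $(\hS,\hT)$ region and increasing the inner objective both can only increase the maximin value. Hence $\btime_{\rho_b} \geq \btime_{\rho_a}$ for every fixed $\bspace$, which is the statement that $\rho_a$ is no harder than $\rho_b$.

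Concretely, the steps in order are: (i) invoke \autoref{thm:main}/\eqref{OPT-maximin} to express the framework's guaranteed tradeoff for each rule as a maximin LP with target sets $\sfBS^\rho,\sfBT^\rho$; (ii) observe that $\sfBS^{\rho_b}\subseteq\sfBS^{\rho_a}$ implies $\{(\hS,\hT): \hS(B)>\log\bspace,\ B\in\sfBS^{\rho_a}\} \subseteq \{(\hS,\hT): \hS(B)>\log\bspace,\ B\in\sfBS^{\rho_b}\}$, so the outer feasible set only grows when we move to $\rho_b$; (iii) observe that $\sfBT^{\rho_b}\subseteq\sfBT^{\rho_a}$ implies $\min_{B\in\sfBT^{\rho_b}}\hT(B)\ge \min_{B\in\sfBT^{\rho_a}}\hT(B)$ for every $\hT$; (iv) combine (ii) and (iii) to conclude $\OBJ_{\rho_b}(\bspace)\ge\OBJ_{\rho_a}(\bspace)$ for every budget $\bspace$, hence the $\twoPP$-achievable $(\bspace,\btime)$ pairs for $\rho_b$ are a subset of (dominated by) those for $\rho_a$. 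I expect steps (ii)--(iii) to be entirely routine; the only point requiring a little care --- and the one I would flag as the ``main obstacle'' --- is making sure the comparison is stated at the level of the \emph{optimal} tradeoff the framework produces (via \autoref{lem:maxminLP}, choosing the right $(\blambda_{\sfBT},\btheta_{\sfBS})$ for each rule), rather than at the level of a single joint Shannon-flow inequality, since a witnessing inequality for $\rho_a$ need not literally be a witnessing inequality for $\rho_b$ when the head is a strict subset. But because the framework's guarantee for each rule is its \emph{best} tradeoff over all valid $(\blambda_{\sfBT},\btheta_{\sfBS})$, and that best value is exactly $\OBJ_\rho(\bspace)$, the monotonicity of $\OBJ_\rho(\bspace)$ in the target sets suffices, and no explicit manipulation of proof sequences is needed.
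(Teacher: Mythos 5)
Your argument is correct, but it takes a genuinely different route from the paper. The paper treats \autoref{obv:rule} essentially as a definition justified by a one-line semantic fact: since $\rho_a$ and $\rho_b$ share the same body, any model of $\rho_b$ is already a model of $\rho_a$ --- every tuple satisfying the body is placed in some target of $\rho_b$, which is also a target of $\rho_a$, so the extra targets of $\rho_a$ can simply be left empty. This is more elementary and strictly more general than your argument: it shows that \emph{any} 2-phase algorithm (not just $\twoPP$ with the framework's guarantee) computing $\rho_b$ within a given space and time also computes $\rho_a$ within the same space and time, which is exactly what is used later when rules with strictly more targets are discarded and the model computed for the smaller-headed rule is reused. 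Your route instead compares the framework's guaranteed bounds via the maximin program \eqref{OPT-maximin}: the containments $\sfBS^{\rho_b}\subseteq\sfBS^{\rho_a}$ and $\sfBT^{\rho_b}\subseteq\sfBT^{\rho_a}$ shrink the feasible region and the inner minimum for $\rho_a$, giving $\OBJ_{\rho_a}(S)\le\OBJ_{\rho_b}(S)$ for every budget $S$, hence $T_{\rho_a}\le T_{\rho_b}$. That monotonicity is sound (and, after the mid-proof hesitation, you do land on the right direction), and it suffices for the downstream analysis where the online time is a maximum over rules; but it buys the conclusion only at the level of the $\twoPP$/LP guarantee, it silently assumes the two rules have the same body and constraint sets $\DC,\AC,\SC$ (true in the paper's setting, but worth stating), and it leaves the degenerate cases untreated (e.g., the feasible region of \eqref{OPT-maximin} being empty for $\rho_a$, which only makes $\rho_a$ easier since its $S$-targets can then be fully materialized within the budget). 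The paper's model-containment argument avoids all of this machinery, so you may want to lead with it and keep the LP monotonicity, if at all, as a corollary about the derived tradeoffs.
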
 

In other words,~\autoref{obv:rule} states that adding more targets to the head of a disjunctive rule can only make its model evaluation easier since we can always ignore the new targets. The next lemma makes use of the structure of the $T$-views in a PMTD.

\begin{lemma} \label{lem:rule}
    Let $\varphi(\bx_A \mid \bx_A)$ be a given CQAP. Let $(\tree, \chi, r)$ be a fixed free-connex tree decomposition. Let $\mP$ be the set of PMTDs induced from $(\tree, \chi, r)$. Let $\rho_a$ be a 2-phase disjunctive rule generated from $\mP$. Then, there is a 2-phase disjunctive rule $\rho_b$ that is no easier than $\rho_a$ such that for any two $T$-targets of $\rho_b$, their corresponding nodes in $\tree$ do not lie in some root-to-leaf path.
\end{lemma}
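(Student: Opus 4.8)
The plan is to prove Lemma~\ref{lem:rule} by a careful "pushing down" argument on root-to-leaf paths in the fixed decomposition $(\tree,\chi,r)$, combined with Observation~\ref{obv:rule}. Recall how the induced set of PMTDs is built: we pick an antichain $N$ of nodes (no node an ancestor of another) as the materialization set, then contract the subtree rooted at each $t \in N$ into $t$. A 2-phase disjunctive rule $\rho_a$ generated from this set is obtained by choosing, for each PMTD $P_i$ in the induced set, one of its nodes $t_i$ — yielding either a $T$-target $T_{\nu_i(t_i)}$ or an $S$-target $S_{\nu_i(t_i)}$ — and taking the disjunction over all these choices. Note that since all PMTDs here share the same underlying tree $(\tree,\chi)$ with nodes indexed by the original node set, the $T$-targets of $\rho_a$ correspond to a set of original nodes of $\tree$ (namely those $t_i$ that were \emph{not} merged away, i.e. not in the subtree of the materialization antichain of $P_i$).

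First I would set up the statement precisely: suppose $\rho_a$ has two $T$-targets whose corresponding tree nodes $s, t$ \emph{do} lie on a common root-to-leaf path, say $s$ is an ancestor of $t$ (or $s = t$, which is trivially handled). The goal is to exhibit $\rho_b$ — also generated from $\mP$ — that is no easier than $\rho_a$ (in the sense of Observation~\ref{obv:rule}: its $S$-targets form a subset of $\rho_a$'s $S$-targets and likewise for $T$-targets) and that has strictly fewer "bad pairs" of $T$-targets on a common path, so that iterating terminates. The key structural fact I would exploit is that in any PMTD in the induced set, the $T$-views above a materialized node are exactly $T_{\chi(t)}$ for the un-contracted bags; and crucially, when $s$ is a strict ancestor of $t$ and both are $T$-nodes of some PMTD $P_i$, then $t$ is \emph{not} in the materialization antichain's subtree for $P_i$. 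I want to modify the choice in one PMTD so that for the node $t$ (the lower one) we instead pick an $S$-target or pick the node that contracts $t$, thereby removing the lower $T$-target; the delicate point is that this replacement must still be a legal choice from \emph{some} PMTD in $\mP$, and must not introduce a new $T$-target outside what $\rho_a$ already has.

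Concretely, the main move I would make: given the ancestor/descendant pair $(s,t)$ of $T$-targets, look at the PMTD $P_j$ from which the $T$-target at $t$ was selected. Consider instead the PMTD $P_{j'}$ obtained by augmenting $P_j$'s materialization antichain with $t$ itself (contracting $t$'s subtree into $t$) — this is still in the induced set, provided adding $t$ keeps the set an antichain, which holds since we may first remove from the old antichain any node that lies in $t$'s subtree (those were below $t$, hence also $T$-nodes or materialized nodes; removing them only shrinks the $T$-targets further, which is fine by Observation~\ref{obv:rule}). From $P_{j'}$ we now select the node $t$, which yields the $S$-target $S_{\nu_{j'}(t)}$; because $\chi(t)$ is contained in $\chi(s)$'s "reach" along the path and the head equals $A$ here, one checks that $\nu_{j'}(t) \subseteq \nu(s)$ or at worst equals the projection onto $H = A$, so this $S$-target is already dominated/subsumed — and in any case it is an $S$-target, not a $T$-target, so it cannot be a new bad $T$-pair. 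Repeating over all PMTDs that contributed $T$-targets below some other $T$-target strictly decreases the count of $T$-targets lying on common root-to-leaf paths; since the whole construction is finite, this terminates at the desired $\rho_b$.

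The main obstacle, I expect, is the bookkeeping in the replacement step: verifying that the PMTD $P_{j'}$ I construct is genuinely one of the \emph{induced} PMTDs (i.e. its materialization set is a valid antichain obtainable by the contraction procedure) and that the new targets it contributes are a subset of $\rho_a$'s targets with the $T$/$S$ classification respected. In particular I must be careful that contracting $t$'s subtree does not accidentally turn some former $T$-node of $P_j$ that was \emph{not} below $t$ — e.g. a sibling subtree — into something new; it should not, because contraction is local to $t$'s subtree, but this needs to be stated cleanly. A secondary subtlety is ruling out the degenerate case where $s$ and $t$ are the same node chosen with multiplicity across different PMTDs (then there is no pair to fix, or it is fixed by just re-choosing consistently). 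Once the replacement lemma and its termination are established, Observation~\ref{obv:rule} immediately gives that $\rho_b$'s intrinsic tradeoff is no easier than $\rho_a$'s, completing the proof.
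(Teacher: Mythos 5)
Your construction has two concrete gaps. First, the relation you need at the end is the one from Observation~\ref{obv:rule}, which requires that the $S$-targets of $\rho_b$ be a \emph{subset} of the $S$-targets of $\rho_a$ and likewise for the $T$-targets. Your replacement step trades the $T$-target at the lower node $t$ for a new $S$-target $S_{\nu_{j'}(t)}$ that in general is not among $\rho_a$'s $S$-targets, so the subset condition fails and the observation simply does not apply; saying the new $S$-target is ``dominated/subsumed'' is not the criterion, and note that swapping a $T$-target for an $S$-target with a smaller schema can only make a model easier to build, which is the wrong direction for ``no easier than $\rho_a$.'' Second, the mechanics of rule generation work against you: a rule generated from $\mP$ picks one view from \emph{every} PMTD in the induced set, so the PMTD $P_{j'}$ in which $t$ is materialized is already in $\mP$ and already contributes a target to $\rho_a$, and $P_j$ must still contribute a target after your modification. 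Picking an $S$-view at $t$ ``from $P_j$'' is illegal ($t$ is not materialized in $P_j$), while changing only $P_{j'}$'s pick leaves the offending $T$-target at $t$ (contributed by $P_j$) in the head, so the bad pair is not removed; your proposal never says what $P_j$ contributes after the swap. (Whether the augmented antichain yields an induced PMTD is fine, but it does not rescue the argument.)

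The repair is simpler and stays inside each PMTD, and it is what the paper does: since materialization sets are downward-closed subtrees, every ancestor of a node carrying a $T$-view carries a $T$-view in the \emph{same} PMTD. So for every PMTD where $\rho_a$ picked a $T$-view, let $\rho_b$ instead pick the topmost node on the root-to-picked-node path whose bag already lies in $\sfBT(\rho_a)$ (such a node exists, e.g.\ the picked node itself), and keep all $S$-picks unchanged. This is a legal selection, introduces no new targets, and gives $\sfBT(\rho_b)\subseteq\sfBT(\rho_a)$ and $\sfBS(\rho_b)=\sfBS(\rho_a)$, so Observation~\ref{obv:rule} applies directly. The antichain property follows in one shot: if two $T$-targets of $\rho_b$ lay on a common root-to-leaf path, the lower one would not be the topmost element of $\sfBT(\rho_a)$ on its own root path, a contradiction. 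No iteration over ``bad pairs,'' no modification of materialization sets, and no new $S$-views are needed.
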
 
\begin{proof}
    Let $\sfBT(\rho)$ be a set such that for every $B \in \sfBT(\rho)$, there is a $T$-targets $T_B$ picked by the 2-phase disjunctive rule $\rho$ (similarly, we define $\sfBS(\rho)$ for $S$-views). Note that by definition of views, a $T$-view for a node implies that its ancestors are all associated with $T$-views.
    
    We construct the 2-phase disjunctive rule $\rho_b$ in the following way: for each PMTD, if $\rho_a$ picks a $S$-view, $\rho_b$ follows $\rho_a$'s pick; if $\rho_a$ picks a $T$-view, $\rho_b$ looks at the path from root to this $T$-view (picked by $\rho_a$) and pick the first node $t$ such that $\nu(t) \in \sfBT(\rho_a)$, i.e. the top-most $T$-view $\rho_a$ picked on this branch. It is easy to see that by this construction, it holds that no two bags corresponding to two different $T$-targets of $\rho_b$ lie on a root-to-leaf path in $\tree$. Furthermore, $\sfBT(\rho_b) \subseteq \sfBT(\rho_a)$ and $\sfBS(\rho_b) = \sfBS(\rho_a)$. Thus, rule $\rho_b$ is no easier than rule $\rho_a$.
\eat{For any rule $\rho_a$ that does not satisfy the property already, we find a subset of its $T$-targets as follows. Let $\sfBT(\rho_a) = \{T_1, \dots, T_j \}$. Then, for each root-to-leaf path in $\tree$, we pick the first node $\chi^{\inv}(t)$ for some $t \in \sfBT(\rho_a)$ that appears in the traversal of the path starting from the root and add $t$ to the set $\sfBT(\rho_b)$. We set $\sfBS(\rho_b) = \sfBS(\rho_a)$.}
\end{proof}


We are now ready to show the main property for any tree decomposition.



\begin{lemma} \label{lem:tree}
      Let $\varphi(\bx_A \mid \bx_A)$ be a given CQAP. Let $(\tree, \chi, r)$ be a fixed free-connex tree decomposition. Let $\mP$ be the set of PMTDs induced from $(\tree, \chi, r)$. Any $2$-phase disjunctive rule $\rho_a$ generated from 
     $\mP$ is no easier than a 2-phase disjunctive rule of the form
     \begin{align} \label{eq:rule}
        T(\bx_{\chi(t_{\ell})}) \vee  \bigvee_{j \in [\ell]} S(\bx_{A_j}) \leftarrow Q_A(\bx_{A}) \wedge \bigwedge_{F \in \edges} R_F(\bx_F),
    \end{align}
    where $t_1 = r, t_2, \dots, t_{\ell}$ are the nodes of the tree $\tree$ that form a path starting from the root node $r$; and
     \begin{align*}
      A_j & \defeq \begin{cases} 
      A  & \text { if } j = 1  \\
      \chi(t_j) \cap \chi(t_{j-1}) & \text { if } j = 2,  \dots, \ell \\
       \end{cases}
 \end{align*}
\end{lemma}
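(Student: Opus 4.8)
The plan is to refine $\rho_a$ in two stages. First, apply Lemma~\ref{lem:rule} to obtain a rule $\rho_b$ that is no easier than $\rho_a$, has $\sfBS(\rho_b)=\sfBS(\rho_a)$ and $\sfBT(\rho_b)\subseteq\sfBT(\rho_a)$, and whose $T$-targets correspond to an antichain $\mathcal A\subseteq V(\tree)$ of tree nodes. Second, exhibit a path $t_1=r,t_2,\dots,t_\ell$ in $\tree$ starting at the root whose endpoint $t_\ell$ lies in $\mathcal A$ and such that all of the ``edge separators'' $\chi(t_j)\cap\chi(t_{j-1})$ (together with $A$ itself) already occur among the $S$-targets of $\rho_b$. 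Since the rule \eqref{eq:rule} associated with this path then has its single $T$-target $T_{\chi(t_\ell)}$ and all of its $S$-targets among those of $\rho_b$, hence of $\rho_a$, Observation~\ref{obv:rule} yields that $\rho_a$ is no easier than \eqref{eq:rule}.

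To make this precise I would first record the shape of the views of an induced PMTD. A non-materialized node $t$ carries the $T$-view $T_{\chi(t)}$ (its bag is never collapsed). A materialized node $v$ is a leaf of the truncated tree whose parent $p=\mparent(v)$ is non-materialized, and its $S$-view schema is $\nu(v)=\chi'(v)\cap(A\cup\chi(p))$, where $\chi'(v)$ is the union of the bags in the $\tree$-subtree of $v$. Using the running-intersection property of $(\tree,\chi)$ and $A\subseteq\chi(r)$, one checks $\chi'(v)\cap\chi(p)=\chi(v)\cap\chi(p)$ and $\chi'(v)\cap A\subseteq\chi(v)\cap\chi(p)$, so that $\nu(v)=\chi(v)\cap\chi(p)=:\mathsf{sep}(v)$, independently of which PMTD $v$ came from; set also $\mathsf{sep}(r):=A$. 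Consequently, writing $D$ for the set of tree nodes that $\rho_b$ picks as a materialized node from some PMTD, we obtain $\sfBS(\rho_b)=\{S_{\mathsf{sep}(v)}:v\in D\}$. Two easy facts: $r\in D$, because the induced PMTD with materialization set $\{r\}$ has only one (materialized) node after truncation; and $\mathcal A\neq\emptyset$, because from the induced PMTD with empty materialization set $\rho_b$ must pick a non-materialized node. Finally, $\mathsf{sep}(t_j)=A_j$ for the $A_j$ of the lemma, since $t_{j-1}=\mparent(t_j)$ along a root-starting path.

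The crux is the claim $\mathcal A\cap D_r\neq\emptyset$, where $D_r$ is the connected component of $D$ containing $r$ (a subtree rooted at $r$). Suppose not. If $D=V(\tree)$ then $D_r=V(\tree)\supseteq\mathcal A$ and we are done; otherwise the frontier $\mathcal F:=\{v\notin D_r:\mparent(v)\in D_r\}$ is a nonempty antichain, hence the materialization set of some induced PMTD $P^*\in\mP$. Its truncated tree is exactly $D_r$ (non-materialized) together with $\mathcal F$ (materialized). The node $\rho_b$ picks from $P^*$ is either a node of $D_r$ --- which is then a $T$-target node lying in $\mathcal A\cap D_r$, a contradiction --- or a node of $\mathcal F$ --- which is then in $D$, but since its $\tree$-parent lies in $D_r$ the whole path from $r$ to it lies in $D$, forcing it into $D_r$ and contradicting membership in $\mathcal F$. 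Either way the assumption fails, so $\mathcal A\cap D_r\neq\emptyset$. Now choose $t_\ell\in\mathcal A\cap D_r$ and let $t_1=r,\dots,t_\ell$ be the $\tree$-path from $r$ to $t_\ell$; connectivity of $D_r$ puts every $t_j$ in $D$, so $S_{A_j}=S_{\mathsf{sep}(t_j)}\in\sfBS(\rho_b)=\sfBS(\rho_a)$ for all $j\in[\ell]$, while $T_{\chi(t_\ell)}\in\sfBT(\rho_b)\subseteq\sfBT(\rho_a)$; this is precisely what \eqref{eq:rule} for this path requires.

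The step I expect to be the main obstacle is the claim $\mathcal A\cap D_r\neq\emptyset$. The subtlety is that an $S$-target $S_{\mathsf{sep}(v)}$ can be produced by $\rho_b$ from a PMTD other than the singleton $\{v\}$, so one cannot reason node-by-node along a candidate path but must argue globally --- which is exactly what the frontier PMTD $P^*$ accomplishes. A secondary, more mechanical point is verifying that the view map $\nu$ of an induced PMTD really collapses to $\mathsf{sep}(\cdot)$ on materialized nodes and to $\chi(\cdot)$ on non-materialized ones; this is where free-connexity (equivalently $A\subseteq\chi(r)$) together with the running-intersection property enter.
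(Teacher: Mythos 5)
Your proof is correct, and while it starts from the same two ingredients as the paper (Lemma~\ref{lem:rule} to make the $T$-target nodes an antichain, and Observation~\ref{obv:rule} to conclude once the targets of \eqref{eq:rule} are shown to sit inside those of $\rho_a$ --- which is exactly the containment the paper's proof establishes, so the ``no easier'' wording quirk is inherited from the lemma, not introduced by you), the core combinatorial step is genuinely different. The paper argues adaptively: it feeds $\rho_b$ a sequence of PMTDs whose materialization sets grow by adding the parent of the previously picked $S$-node, and tracks the picks until the root's $S$-view is reached. Your argument is global and one-shot: you define $D$ as all tree nodes whose $S$-view $\rho_b$ ever picks, observe $r\in D$ (via the PMTD with materialization set $\{r\}$) and $\sfBT(\rho_b)\neq\emptyset$ (via the PMTD with empty materialization set), and then derive $\mathcal A\cap D_r\neq\emptyset$ from the single induced PMTD whose materialization set is the frontier of the root component $D_r$, by a case analysis on whichever view $\rho_b$ picks from it. This buys two things the paper's write-up glosses over: (i) every PMTD you invoke has an antichain materialization set and hence is literally a member of the induced family $\mP$, whereas the paper's intermediate sets $M_{q+1}=M_q\cup\{\mparent(\cdot)\}$ contain a node together with its ancestor, and its step ``the adversary cannot pick a $T$-view'' is informal --- your dichotomy (a $T$-pick lands in $\mathcal A\cap D_r$, an $S$-pick would force a frontier node into $D_r$) replaces it rigorously; (ii) you verify explicitly, using running intersection and $A\subseteq\chi(r)$, that the $S$-view of a merged materialized node collapses to $\chi(v)\cap\chi(\mparent(v))$ independently of the PMTD, which is what makes the identification with the $A_j$'s of \eqref{eq:rule} legitimate. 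The paper's adaptive construction is perhaps more concretely ``algorithmic,'' but your frontier argument is shorter and tighter, and in fact does not even need the antichain normalization from Lemma~\ref{lem:rule} --- it would apply verbatim to $\rho_a$ itself.
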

\begin{proof}
    We start by invoking~\autoref{lem:rule} with $\rho_a$ and $(\tree, \chi, r)$ as input to get the rule $\rho_b$ that satisfies that no $T$-target of $\rho_b$ is an ancestor of another. We fix the $T$-targets of $\rho_b$, i.e. fix $\sfBT(\rho_b) = \{B_1, \ldots, B_k\}$. Our goal is to show that $\rho_b$ must contain a subset of $S$-targets whose corresponding nodes in $\tree$ form a path starting from the root and ending at some node $t$ such that $\chi(t) \in \sfBT(\rho_b)$. In the following, we will use the function $\chi^{\inv}(B)$ to recover the node $t \in V(\tree)$ such that $\chi(t) = B$ and use $parent(t)$ to denote the parent node of a non-root node $t$.

    We prove that the property holds by allowing an adversary to pick targets in PMTDs, while we adaptively choose the PMTDs that the adversary must pick from. We can choose the ordering of the PMTDs because to construct a 2-phase disjunctive rule, one view must be picked from every PMTD in $\mP$ and we are only controlling the order in which they are examined. 

    Consider the PMTD $P_1$ where the set for $S$-targets is exactly $M_{1} = \{\chi^{\inv}(B_1), \dots, \chi^{\inv}(B_k)\}$. We offer the adversary to pick a target from $P_1$. We claim that the adversary must pick one $S$-view corresponding to a node in $M_{1}$. Indeed, the adversary cannot pick a $T$-view from $P_1$ since the $T$-targets of $\rho_b$ have already been fixed and cannot be changed. Suppose the adversary picks $S$-view associated with node $\chi^{\inv}(B_i)$. For this $S$-view (since its parent is associated with a $T$-view), 
    \begin{align}\label{Aj_prop}
        \nu(\chi^{\inv}(B_i)) = \chi(\chi^{\inv}(B_i)) \cap \chi(parent(\chi^{\inv}(B_i))) = B_i \cap \chi(parent(\chi^{\inv}(B_i)).
    \end{align}
    
    We will now choose the PMTD $P_2$ where $M_{2} = M_{1} \cup \parent(\chi^{\inv}(B_i))$ and give it to the adversary. Once again, the adversary cannot pick a $T$-view since that will change $\sfBT(\rho_b)$ and must choose an $S$-view associated with one of the nodes in $M_{2}$. Suppose the adversary picks $S$-view associated with a node $\chi^{\inv}(B_{i'}) \in M_{2}$. We generate the next PMTD $P_3$ where $M_{3} = M_{2} \cup \parent(\chi^{\inv}(B_{i'}))$. In general, the PMTD $P_{q+1}$ generated after $q$ rounds has $M_{q+1} = M_q \cup \parent(\chi^{\inv}(B_q))$, where $\chi^{\inv}(B_q) \in M_q$ is the node $\rho_b$ picked as a $S$-view in the last round (for PMTD $P_q$ with materialization set $M_q$). This process terminates when the $S$-view corresponding to the root is picked by the adversary. It is also guaranteed that this process will terminate after a finite number of steps as we always add nodes in the materialization set by moving up the tree and the tree is of finite size. At every step, the adversary picks an $S$-view for a bag and can only get closer to the root across all branches. Therefore, when the adversary reaches the root, there must be a subset of picked $S$-views corresponding to some path starting from the root $t_1 = r$ to some node $t_{\ell}$ in $\tree$ such that $\chi(t_{\ell}) \in \sfBT(\rho_b)$, the desired property. Also, note that these $S$-views follow \eqref{Aj_prop}, thus we complete the proof.
\end{proof}

\autoref{lem:tree} tells us that it is sufficient to analyze only the 2-phase disjunctive rules that are of the form as shown in~\eqref{eq:rule}. Now we proceed to find a proof sequence for~\eqref{eq:rule}. Let the bags for $t_1 = r, t_2, \ldots, t_{\ell}$ have corresponding fractional edge covers $\bu^{(1)}, \dots, \bu^{(\ell)}$, where $\bu^{(j)} = (u^{(j)}_F)_{F \in \edges}$, for $j \in [\ell]$. We define $\alpha_j$ to be the slack of each bag w.r.t. the variables in $A_j$ (i.e. $\alpha_j = \alpha(\bu^{(j)}, A_j)$), and introduce a factor $\beta_j = \beta^\star / \alpha_j$ where $\beta^\star = \sum_{j \in [\ell]} \alpha_j$. Next, we apply~\autoref{thm:main1} for each of the $\ell$ bags, multiply the proof sequence obtained for the $j$-{th} bag with $\beta_j$, and sum up terms as follows:

\begin{align*}
    \sum_{j \in [\ell]} \beta_j \sum_{F \in \edges} u^{(j)}_F \cdot \log |R_F| + \beta^\star \cdot \log{|Q_A|} & \geq \sum_{j \in [\ell]} \beta_j \sum_{F \in \edges} u^{(j)}_F  \cdot ( {\hT(F \mid {A_j \cap F})} + {\hS({A_j \cap F})} ) + \beta^\star \cdot {\hT({A_1})}\\
    & =  \sum_{j \in [\ell]} \beta_j \sum_{F \in \edges} u^{(j)}_F  \cdot {\hT(F \mid {A_j \cap F})} +\beta^\star \cdot {\hT({A_1})}  + \sum_{j \in [\ell]} \beta_j \sum_{F \in \edges} u_F \cdot {\hS({A_j \cap F})}  \\
    & \geq \sum_{j \in [\ell]} \beta_j \cdot \slack_j \sum_{F \in \edges} \hat{u}^{(j)}_F  \cdot {\hT({F} \mid {A_j \cap F})}  + \beta^\star \cdot {\hT({A_1})}  +  \sum_{j \in [\ell]} \beta_j \cdot {\hS({A_j})} \\
    & = \sum_{j \in [\ell]} \beta^\star \sum_{F \in \edges} \hat{u}^{(j)}_F  \cdot {\hT({F} \mid {A_j \cap F})}  + \beta^\star \cdot {\hT({A_1})}  +  \sum_{j \in [\ell]} \beta_j \cdot {\hS({A_j})} \\
    & = \sum_{j \in \{2, \dots, \ell\}} \beta^\star \sum_{F \in \edges} \hat{u}^{(j)}_F  \cdot {\hT({F} \mid {A_j \cap F})}  +  \beta^\star \left( \sum_{F \in \edges} \hat{u}^{(1)}_F  \cdot {\hT({F} \mid {A_1 \cap F})} + {\hT({A_1})}\right)  +  \sum_{j \in [\ell]} \beta_j \cdot {\hS({A_j})} \\
    & \geq \sum_{j \in \{2, \dots, \ell\}} \beta^\star \sum_{F \in \edges} \hat{u}^{(j)}_F  \cdot {\hT({F} \mid {A_j \cap F})}
    + \beta^\star {\hT({\chi(t_1)})}
     +  \sum_{j \in [\ell]} \beta_j \cdot {\hS({A_j})} \\
     & \geq \sum_{j \in \{2, \dots, \ell\}} \beta^\star \sum_{F \in \edges} \hat{u}^{(j)}_F  \cdot {\hT({F} \mid {A_j \cap F})}
    + \beta^\star {\hT({A_2})}
     +  \sum_{j \in [\ell]} \beta_j \cdot {\hS({A_j})} \\
    & \hdots \\
    & \geq \beta^\star {\hT({\chi(t_{\ell})})} +  \sum_{j \in [\ell]} \beta_j \cdot {\hS({A_j})},
\end{align*}
where the second inequality is a direct application of Shearer’s Lemma and the third inequality is a direct consequence of Lemma~\ref{lem:conditional:Shearer}. From this proof sequence, we obtain the following intrinsic tradeoff,
\begin{align*}
    |Q_A|^{\beta^\star} \cdot  |\mD|^{\sum_{j \in [\ell]} \beta_j \cdot u^*_j} \cong S^{\sum_{j \in [\ell]} \beta_j} \cdot T^{\beta^\star}
\end{align*}
where $u_j^* = \sum_{F \in \edges} u_F^{(j)}$. Equivalently, we get
\begin{align}\label{tradeoffDecomposition}
      |Q_A| \cdot  |\mD|^{\sum_{j \in [\ell]} u^*_j / \alpha_j} \cong S^{\sum_{j \in [\ell]} 1/\alpha_j} \cdot T.
\end{align}
In the above tradeoff, for a given $S$, (assume $|Q_A| = 1$), we get
\begin{align*}
      \log T & =  \sum_{j \in [\ell]} \frac{u^*_j}{\alpha_j}\cdot \log |\mD| - \sum_{j \in [\ell]} (1/\alpha_j) \log S  \\
      & = \sum_{j \in [\ell]} (1/\alpha_j) \cdot (u^*_j  \log |\mD| - \log S)
    \end{align*}
One observation is that if some bag $t_j$ on the path $t_1, \ldots, t_{\ell}$ has an AGM bound that is not greater than $S$, then the materialization of $t_j$'s corresponding $S$-view $S(\bx_{A_j})$ can be fully materialized as the model for \eqref{eq:rule}. Otherwise, for every $t_j, j \in [\ell]$, we have that $(u^*_j \log |\mD| - \log S)$ is non-negative for every $j \in [\ell]$, thus the above expression for $\log T$ monotonically increases as $\ell$ increases. Therefore, the most expensive tradeoff corresponds to the disjunctive rule of the form in~\eqref{eq:rule} that starts from the root and ends at a leaf. To obtain the final space-time tradeoffs, we simply take the worst tradeoffs across all the root-to-leaf paths in $\htree$. 



\eat{
\begin{lemma}
    For an arbitrarily-fixed space budget $S$, the answering time monotonically increases as a function of $\ell$.
\end{lemma}
\begin{proof}
    In order to prove the monotonicity property, it is sufficient to show that the expression
    \begin{align*}
        \sum_{j \in [\ell]} (1/\alpha_j) \cdot \left(\sum_{F \in \edges_j} u^j_F \cdot \log |\mD| - \log S \right)
    \end{align*}
    is monotonically increasing as a function of $\ell$. The first term is a monotonically increasing function since the slack $\alpha_j \geq 1$ by definition. The second term is also an increasing function since $\sum_{F \in \edges_j} u^j_F \cdot \log |\mD|$ is the AGM bound and the value of $S$ can be at most the AGM bound by assumption. Thus, the answering time can only increase.
\end{proof}
}



Before we conclude this section, we show that the tradeoff we obtained in \eqref{tradeoffDecomposition} across all root-to-leaf paths of a fixed tree decomposition $(\htree, \chi)$ recovers (and possibly improves over) Theorem 13 of~\cite{deep2018compressed}, without incurring extra hyper-parameters. Indeed, in \eqref{tradeoffDecomposition}, the authors set a hyper-parameter $\delta(t)$ for every $t \in V(\htree)$ and
let the online answering time to be $T = |\mD|^{\sum_{j \in [\ell]} \delta(t_j)}$, where $t_1, \ldots, t_{\ell}$ is a root-to-leaf path of $\htree$ that maximizes $\sum_{j \in [\ell]} \delta(t_j)$. Suppose we construct a 2-phase disjunctive rule of the form \eqref{tradeoffDecomposition} for this root-to-leaf path, $t_1, \ldots, t_{\ell}$, then for any fractional edge cover $u^*_j, j \in [\ell]$, it holds that (by rewriting  \eqref{tradeoffDecomposition})
\begin{align*}
    \log_{|\mD|} S & = \frac{1}{\sum_{j \in [\ell]} (1/\alpha_{j})} \cdot \left(\sum_{j \in [\ell]} \frac{u^*_j}{\alpha_j} - \log_{|\mD|} T \right) \\
    & = \frac{1}{\sum_{j \in [\ell]} (1/\alpha_j)} \cdot \left(\sum_{j \in [\ell]} \frac{u^*_j}{\alpha_j} - \sum_{j \in [\ell]} \delta(t_j) \right) \\
    & = \frac{1}{\sum_{j \in [\ell]} (1/\alpha_j)} \cdot \sum_{j \in [\ell]}  \frac{1}{\alpha_j} (u^*_j - \delta(t_j) \cdot \alpha_j ) \\
    & \leq \max_{j \in [\ell]}
      \left(\sum_{F \in \edges} u_F^{(j)} - \delta(t_j) \cdot \alpha_j \right).
\end{align*}
The last inequality holds because for all $w_i\geq 0$ such that $\sum_i w_i = 1$, $\sum_{i=1}^n \gamma_i w_i \leq \max_i \gamma_i$. Setting $w_i = \frac{1}{\alpha_i} \cdot \frac{1}{\sum_{j \in [\ell]} (1/\alpha_j)}$, we get the desired expression. Thus, it is easy to see that the tradeoff we obtained in \eqref{tradeoffDecomposition} across all root-to-leaf paths of a fixed tree decomposition indeed recovers results  in~\cite{deep2018compressed}.



\subsection{Additional Examples}

In this section, we present a number of concrete examples of space-time tradeoffs obtained through our framework. Across all examples, we assume the database size to be $|\mD|$. Moreover, for brevity, we use an ordered tuple of views ($T$-views and $S$-views) to represent a PMTD that takes a path-like structure (the first entry of the tuple denotes the root). For ease of interpreting tradeoffs from proof sequences, we always carry the implied upper bound of its corresponding joint Shannon-flow inequalities at the left-hand side of the proof sequences. For the implied upper bound, we denote $n_{F} \defeq \log |R_F|$ as the log-size of the relation $R_F$ and $w_{A} \defeq \log |Q_A|$ as the log-size of the access request $Q_A$.

\begin{example}[The triangle query]
We take a triangle query with an \textit{empty} access pattern, i.e. $A = \emptyset$:
$$   \varphi(x_1, x_3 \mid \emptyset) \leftarrow R(x_1, x_2) \wedge R(x_2, x_3)\wedge R(x_3, x_1). $$

We consider two PMTDs, both with the bag $\{x_1, x_2, x_3\}$. In the first PMTD, the bag is not materialized and hence a $T$-view $T_{123}$ associated with it. In the second PMTD, the bag is materialized and has an $S$-view $S_{13}$. Thus, the two PMTDs can be denoted as 
$$ (T_{123}), \quad (S_{13}).
$$
Hence, we obtain the following disjunctive rule (without the atom $Q_A$):
$$ T_{123} \vee S_{13} \leftarrow R(x_1, x_2)\wedge R(x_2, x_3)\wedge R(x_3, x_1).
$$
One (empty) proof sequence for it is simply $\log |\mD| \geq {\hS(1 3)}$, indicating that we can store all pairs of $(x_1, x_3)$ that participate in at least one triangle in linear space. 
\end{example}

\begin{example}[The square query]
For this example, we take the following CQAP:
$$   \varphi(x_1, x_3 \mid x_1, x_3 ) \leftarrow R(x_1, x_2) \wedge R(x_2, x_3)\wedge R(x_3, x_4) \wedge R(x_4, x_1). $$
This captures the following task: given two vertices of a graph, decide whether they occur in two opposite corners of a square. We consider two PMTDs. The first PMTD has a root bag $\{1, 3, 4\}$ associated with a $T$-view $T_{134}$, and a bag $\{1, 3, 2\}$ associated with a $T$-view $T_{132}$. The second PMTD has one bag $\{1, 2, 3, 4\}$ associated with an $S$-view $S_{13}$. The two PMTDs can be denoted as
$$ (T_{134}, T_{132}), \quad (S_{13})
$$
This in turn generates two disjunctive rules:
\begin{align*}
T_{134} \vee S_{13} & \leftarrow  Q_{13}(x_1, x_3) \wedge R(x_1, x_2) \wedge R(x_2, x_3) \wedge R(x_3, x_4) \wedge R(x_4, x_1) \\
T_{132} \vee S_{13} & \leftarrow Q_{13}(x_1, x_3) \wedge R(x_1, x_2) \wedge R(x_2, x_3) \wedge R(x_3, x_4) \wedge R(x_4, x_1).
\end{align*}
We can construct the following proof sequence for the first rule:
\begin{align*}
    n_{14} + n_{34} + 2 \cdot {w_{13}}
    & \geq {\hS(1)} + {\hT( 4 | 1)} + {\hS(3)} + {\hT(4| 3)}+ 2\cdot {\hT(1 3)}\\
    & \geq {\hS(1 3)} + {\hT(4 | 1)} + {\hT(4|3)} + 2\cdot {\hT(1 3)} \\ 
    & \geq {\hS(1 3)} + {\hT(4 | 1 3)} + {\hT(1 3)}  + {\hT(4| 1 3)} + {\hT(1 3)} \\ 
    & = {\hS(1 3)} + 2 \cdot {\hT(1  3 4)}
\end{align*}
For the second rule, we symmetrically construct a proof for  $ n_{12} + n_{32} + 2 \cdot {w_{13}} \geq  {\hS(1 3)} + 2 \cdot {\hT(1  3 2)}$. Hence, we obtain a tradeoff of $S \cdot T^2 \cong |\mD|^2 \cdot |Q_A|^2 $. This tradeoff (when $|Q_A| = 1$) recovers the improved one obtained in Example 15 of \cite{deep2021space}.
\end{example}

\subsection{Tradeoffs for $k$-Reachability (Section~\ref{sec:path})}
\eat{
\begin{proof}[Proof for Path Queries]
Assume for now that $k=2\ell$.
To obtain the desired tradeoff, it suffices to consider the following set of partially materialized tree decompositions:
$$ T_{12\dots (2\ell+1)} \quad S_{1 (2\ell+1)} \quad T_{12\dots j (2\ell+2-j) \dots (2 \ell+1)} \rightarrow S_{j (2\ell+2-j)} \quad \text{for $j=2, \dots, \ell$} $$
From this, we obtain the following targets for our disjunctive rules for $m=2, \dots, \ell+1$:
\begin{align*}
T_{12\dots m (2\ell+2-m) \dots 2 \ell+1} \vee \{ S_{i (2\ell+2-i)} \}_{i=1, \dots, m-1} & \leftarrow  \dots \\
\end{align*}
We can now construct the following proof for any $m= 2, \dots, \ell$:
\begin{align*}
   &      \sum_{i=1}^{m-1} 2 \cdot h(x_i, x_{i+1}) +\sum_{i=2\ell+2-m}^{2 \ell} 2 \cdot h(x_i, x_{i+1})  + 2 \cdot {h(x_1, x_{k+1})} && \\
    & \geq   \sum_{i=1}^{m-1} 2 \cdot \{ {\hT(x_{i+1} \mid x_i)} + {\hS(x_i)}  \} +\sum_{i=2\ell-m+2}^{2 \ell} 2 \cdot \{ {\hT(x_{i} \mid x_{i+1})} + {\hS(x_{i+1})}  \}  + 2 \cdot {\hT(x_1, x_{k+1})} && \textit{bucketize} \\
    & =  2 \sum_{i=1}^{m-1} \{{\hS(x_{i})} + {\hS(x_{2\ell+2-i})} \} + 2 \cdot \left(  \sum_{i=1}^{m-1} \{ {\hT(x_{i+1} \mid x_i)}  + \sum_{i=2\ell-m+2}^{2 \ell} \{ {\hT(x_{i} \mid x_{i+1})} + {\hT(x_1, x_{k+1})} \right) && \\
    & \geq  2 \sum_{i=1}^{m-1} \{{\hS(x_{i}, x_{2\ell+2-i})} \} + 2 \cdot \left(  \sum_{i=1}^{m-1} \{ {\hT(x_{i+1} \mid x_i)}  + \sum_{i=2\ell-m+2}^{2 \ell} \{ {\hT(x_{i} \mid x_{i+1})} + {\hT(x_1, x_{k+1})} \right) && \textit{join} \\
       & \geq  2 \sum_{i=1}^{m-1} \{{\hS(x_{i}, x_{2\ell+2-i})} \} + 2 \cdot {\hT(x_1, \dots, x_m, x_{2\ell+2-m}, \dots, x_{2\ell+1})}&& \textit{submodularity, join} 
       \end{align*}
This obtains a tradeoff of $|D|^{4(m-1)} \cdot W^2 \geq S^{2(m-1)} \cdot T^2$ for $m=2, \dots, \ell$. For $m= 2\ell+1$, we have a slightly different proof structure:
\begin{align*}
   &      \sum_{i=1}^{\ell-1} 2 \cdot h(x_i, x_{i+1}) + h(x_\ell, x_{\ell+1}) + h(x_{\ell+1}, x_{\ell+2})  + \sum_{i=\ell+2}^{2 \ell} 2 \cdot h(x_i, x_{i+1})  + 2 \cdot {h(x_1, x_{k+1})} && \\
    & \geq   \sum_{i=1}^{\ell-1} 2 \cdot \{ {\hT(x_{i+1} \mid x_i)} + {\hS(x_i)}  \} + {\hT(x_{\ell+1} \mid x_{\ell})} + {\hS(x_\ell)}  + {\hT(x_{\ell+1} \mid x_{\ell+1})} + {\hS(x_{\ell+2})} +  \sum_{i=\ell+2}^{2 \ell} 2 \cdot \{ {\hT(x_{i} \mid x_{i+1})} + {\hS(x_{i+1})}  \}  + 2 \cdot {\hT(x_1, x_{k+1})} &&  \\
    & =  2\sum_{i=1}^{\ell-1} \{{\hS(x_{i})} + {\hS(x_{2\ell+2-i})} \} + {\hS(x_{\ell}) + h(x_{\ell+1})}  + 2 \cdot \left(  \sum_{i=1}^{\ell-1} \{ {\hT(x_{i+1} \mid x_i)}  + \sum_{i=\ell+2}^{2 \ell} \{ {\hT(x_{i} \mid x_{i+1})} + {\hT(x_1, x_{k+1})} \right) + {\hT(x_{\ell+1} \mid x_{\ell})} + {\hT(x_{\ell+1} \mid x_{\ell+1})}&& \\
    & = 2 \sum_{i=1}^{\ell-1} \{{\hS(x_{i}, x_{2\ell+2-i})} \} + {\hS(x_{\ell}, x_{\ell+1})}  + 2 \cdot \left(  \sum_{i=1}^{\ell-1} \{ {\hT(x_{i+1} \mid x_i)}  + \sum_{i=\ell+2}^{2 \ell} \{ {\hT(x_{i} \mid x_{i+1})} + {\hT(x_1, x_{k+1})} \right) + {\hT(x_{\ell+1} \mid x_{\ell})} + {\hT(x_{\ell+1} \mid x_{\ell+1})}&& \\
       & \geq 2 \sum_{i=1}^{\ell-1} \{{\hS(x_{i}, x_{2\ell+2-i})} \} + {\hS(x_{\ell}, x_{\ell+1})}  + 2 \cdot {\hT(x_1, \dots, x_{2\ell+1})}&& 
       \end{align*}
This obtains a tradeoff of $|D|^{2(k-1)} \cdot W^2 \geq S^{k-1} \cdot T^2$ 
\end{proof}
}

\begin{example}[$2$-reachability]
Consider the 2-reachability CQ with the following access pattern (optimizing for $|Q_{13}| = 1$):
$$   \phi_2(x_1, x_3 \mid x_1, x_3 ) \leftarrow  R_{1}(x_1, x_2) \wedge R_{2}(x_2, x_3). $$
From the tree decomposition that has one bag $\{1, 2, 3\}$, we construct the set of all non-redundant and non-dominant PMTDs,  i.e. 
$$(T_{123}), \quad (S_{13}).
$$
They generate only one 2-phase disjunctive rule,
$$ T_{123} \vee S_{13} \leftarrow Q_{13}(x_1, x_3), R_{1}(x_1, x_2), R_{2}(x_2, x_3),
$$
for which we can construct the following proof sequence:
\begin{align*}
    n_{12} + n_{23} + 2 \cdot {w_{13}}
    & \geq {\hS(1)} + {\hT( 2 | 1)} + {\hS(3)} + {\hT(2| 3)}+ 2\cdot {\hA(1 3)} \\
    & \geq {\hS(1 3)} + {\hT(2 | 1)} + {\hT(2|3)} + 2\cdot {\hA(1 3)}\\ 
    & \geq {\hS(1 3)} + {\hT(2 | 1 3)} + {\hT(2| 1 3)} + 2\cdot {\hA(1 3)} \\ 
    & \geq {\hS(1 3)} + 2 \cdot {\hT(1 2 3)} &&  (S \cdot T^2 \cong |\mD|^2 \cdot |Q_{13}|^2)
\end{align*}
\introparagraph{Discussion} 
Suppose $w_{13} = \log |\mD|$, $(1, 13, N_{13|1}) \in \AC$ and $w_{3|1} \defeq \log N_{13|1}$, then $n_{12} +  {w_{3|1}} \geq   {\hT(12)} + {\hT(3|12)} = {\hT(123)}$ is a (possibly desirable) proof sequence, which implies that we can answer any access request in time $T = |\mD| \cdot N_{13|1}$ without any materializations. Note that if $S = |\mD|, N_{13|1} = |\mD|^{1/2 - \epsilon}$ for some $\epsilon > 0$, it is strictly better than the above proof sequence (which implies $T = |\mD|^{3/2}$).
\end{example}

\begin{example}[3-reachability] We study the $3$-reachability CQAP (optimizing for $|Q_A| = 1$), i.e. 
$$ 
\phi_3(x_1, x_{4} \mid x_1, x_{4}) \leftarrow R(x_1, x_2) \wedge R(x_2, x_3) \wedge R(x_3, x_{4}).
$$
Using the set of all non-redundant and non-dominant PMTDs (shown in \autoref{fig:example3}), we generate all 2-phase disjunctive rules (after discarding redundant rules/targets), and their corresponding proof sequences:
	\begin{enumerate}
	    \item [(1)] $\rho_1: T_{134} \vee T_{124} \vee S_{14} \leftarrow Q_{14}(x_1, x_4) \wedge R(x_1, x_2) \wedge R(x_2, x_3) \wedge R(x_3, x_{4})$
	            \begin{align*}
                    n_{12} + n_{34} + {2 w_{14}} & \geq {\hS(1)} + {\hS(4)} + {\hT(2|1)} + {\hT(3|4)} + 2 {\hA(14)} && \\
                    & \geq {\hS(14)} + {\hT(2|1) + \hT(3|4)} + 2 {\hA(14)} && \\
                    & \geq {\hS(14)} + {\hT(2|14) + \hT(3|14)} + 2 {\hA(14)} && \\
                    & \geq {\hS(14)} + {\hT(124)} +  {\hT(134)} && ( S \cdot T^2 \cong |\mD|^2 \cdot |Q_A|^2)
                \end{align*}
	    \item [(2)] $\rho_2: T_{123}  \vee S_{13} \vee T_{124} \vee S_{14} \leftarrow Q_{14}(x_1, x_4) \wedge R(x_1, x_2) \wedge R(x_2, x_3) \wedge R(x_3, x_{4}) $
	        \begin{align*}
                     2 \cdot n_{12} + n_{23} + n_{34} + 3 \cdot {w_{14}} & = 2( {\hS(1)} + {\hT(2|1)}) +  {\hS(3)} + {\hT(2|3)} + {\hS(4)} + {\hT(3|4)} + 3 \cdot {\hA(14)}\\
                     & \geq {\hS(14)} + {\hS(13)} + 2{\hT(2|14)}  + {\hT(3|41)} + {\hT(2|314)} + 3 \cdot {\hA(14)} \\
                     & = {\hS(14)} + {\hS(13)} + 2 {\hT(124)} +   {\hT(1234)}  \\
                     & \geq {\hS(14)} + {\hS(13)} + 3 \cdot {\hT(124)} && (S^2 \cdot T^3 \cong |\mD|^4 \cdot |Q_A|^3)
                \end{align*}
        \item [(3)] $\rho_3: T_{134} \vee T_{234} \vee S_{24} \vee S_{14} \leftarrow Q_{14}(x_1, x_4) \wedge R(x_1, x_2) \wedge R(x_2, x_3) \wedge R(x_3, x_{4})$ \\
            The proof sequence for $\rho_3$ is omitted here because it is symmetric to rule $\rho_2$.
	    \item [(4)] $\rho_4: T_{123} \vee S_{13} \vee T_{234} \vee S_{24} \vee S_{14} \leftarrow Q_{14}(x_1, x_4) \wedge R(x_1, x_2) \wedge R(x_2, x_3) \wedge R(x_3, x_{4})$  \\
	    For $\rho_4$, we show $2$ proof sequences that do not dominate each other as follows:
	    \begin{align*}
            n_{12} + n_{34} + {w_{14}} & \geq \hS(1) + \hS(4) + \hT(2|1) + \hT(3|4) + {\hA(14)}  \\
            & \geq {\hS(14)} + {\hT(2|14) + \hT(3|214)} +  {\hA(14)}  \\
             & \geq {\hS(14)} + {\hT(1234)} \\
              & \geq {\hS(14)} + {\hT(123)} && (S \cdot T \cong |\mD|^2 \cdot |Q_A|)
        \end{align*}
        \begin{align*}
            2 n_{23} + 2 n_{12} + 2 n_{34} + {w_{14}} 
            & \geq 2\cdot {\hS(23)} +  {\hS(12)} + {\hS(34)} + {\hS(1)} + {\hT(2|1)} + {\hS(4)} + {\hT(3|4)} + {\hA(14)} \\
            & = {\hS(2)} + {\hS(3|2)} + {\hS(3)} + {\hS(2|3)} + {\hS(12)} + {\hS(34)} + {\hS(1)} + {\hT(2|1)} + {\hS(4)} + {\hT(3|4)} + {\hA(14)} \\
            & \geq {\hS(123)} + {\hS(234)} + {\hS(2)} + {\hS(3)} + {\hS(1)} + {\hT(2|1)} + {\hS(4)} + {\hT(3|4)} + {\hA(14)} \\
            & \geq {\hS(123)} + {\hS(234)} + {\hS(24)} + {\hS(13)} + {\hT(2|1)} + {\hT(3|4)} + {\hA(14)}\\
            & \geq {2\cdot \hS(24)} + {2\cdot \hS(13)} + {\hT(2|14)} + {\hT(3|124)} + {\hA(14)} \\
            & \geq 2 \cdot {\hS(24)} + 2 \cdot {\hS(13)} + {\hT(1234)} \\
             & \geq 2\cdot {\hS(24)} + 2\cdot {\hS(13)} + {\hT(123)} 
             \qquad \qquad \qquad \qquad \qquad \qquad 
             (S^4 \cdot T \cong |\mD|^6 \cdot |Q_A|)
        \end{align*}
	\end{enumerate}
	Lastly, for every rule above, there is a proof sequence that corresponds to applying breath-first search (BFS) from scratch in time $O(|\mD|)$ at the online phase. Take $\rho_1$ as an example, we get
	\begin{align*}
	    n_{23} + w_{14} & \geq \hT(23) + \hA(14) \\
	                   & \geq \hT(1234) \\
	                   & \geq \hT(134) && (T = |\mD| \cdot |Q_A|)
	\end{align*}
    The corresponding plot of tradeoff curve is included in \autoref{fig:3path}.
\end{example}

\begin{example}[$4$-reachability] We study the following CQAP for $4$-reachability (optimizing for $|Q_A| = 1$):
$$ \phi_4(x_1, x_5 \mid x_1, x_5) \leftarrow R_{12}(x_1, x_2) \wedge R_{23}(x_2, x_3) \wedge R_{34}(x_3, x_5) \wedge R_{45}(x_4, x_5) 
$$

	We fix the following set of non-redundant and non-dominant PMTDs ($11$ in total), where we use an ordered tuple of views to represents a PMTD that has a path-like structure (the first entry of the tuple denotes the root). Note that including more PMTDs could potentially obtain even improved tradeoffs.
	$$ (T_{1235}, T_{345}), \quad (T_{1235}, S_{35}), \quad  (T_{1345},  T_{123}), \quad (T_{1345}, S_{13}), \quad (T_{1245},  T_{234}), \quad (T_{1245} , S_{24})
	$$
	$$
	(T_{125} , T_{2345}), \quad (T_{125} , S_{25}), \quad (T_{145} , T_{1234}), \quad (T_{145} , S_{14}), \quad (S_{15})
	$$
	Though there are $2^{10}$ 2-phase disjunctive rules generated from the above set of PMTDs, similar to $3$-reachability, we can discard rules with strictly more targets than other rules. To start off, note that any rule has to have $S_{15}$ as a target. For any disjunctive rules picking $T_{1245}$ as a target (or $T_{125}, T_{145}$), i.e. $\rho_1 : T_{2345} \vee S_{15}$, we always have the following proof sequence:
	\begin{align*}
	    n_{12} + n_{45} + {w_{15}} & \geq {\hS(1)} + {\hT(2|1)} + {\hS(5)} + {\hT(4|5)} \\
	    & \geq {\hS(15)} + {\hT(2|15)} + {\hT(4|125)}  + {\hT(15)} \\
	     & = {\hS(15)} + {\hT(1245)} && (S \cdot T \cong |\mD|^2 \cdot |Q_A|)
	\end{align*}
    Otherwise, for the last $7$ PMTDs, the disjunctive rule must pick $\{T_{234} , S_{24} , T_{2345} , S_{25} , T_{1234} , S_{14} , S_{15}\}$, or just $\{ T_{234} , S_{24} , S_{25} , S_{14} , S_{15}\}$, by removing the redundant term $T_{2345}$ due to the presence of $T_{234}$.
    Now the disjunctive rules picks targets out of the first $4$ PMTDs, which we can break into the following cases, discarding redundant targets:
    \begin{enumerate}
        \item [(1)] $\rho_2: T_{1235} \vee T_{1345} \vee (T_{234} \vee S_{24} \vee S_{25} \vee S_{14} \vee S_{15})$
        \begin{align*}
            n_{12} + n_{23} + n_{34} + n_{45} + {2 w_{15}} & \geq {\hS(1)} + {\hT(2|1)} + {\hS(2)} + {\hT(3|2)} + {\hS(4)} + {\hT(3|4)} + {\hS(5)} + {\hT(4|5)} + {2 \hA(15)}\\
            & \geq {\hS(15)} + {\hS(24)} + {\hT(2|15)} + {\hT(3|125)} + {\hT(3|145)} + {\hT(4|15)} + {2 \hA(15)} \\
         & = {\hS(15)} + {\hS(24)} + {\hT(1235)} + {\hT(1345)} && ( \bspace^2 \cdot \btime^2 \cong |\mD|^4 \cdot |Q_A|^2)
        \end{align*}
        \item [(2)] $\rho_3: T_{345} \vee S_{35} \vee T_{123} \vee S_{13} \vee (T_{234} \vee S_{24} \vee S_{25} \vee S_{14} \vee S_{15})$  \\
        The proof sequence for $\rho_3$ is omitted here because $\rho_3$ is no harder than $\rho_2$ since $\{3, 4, 5\} \subseteq \{1, 3, 4, 5\}$ and $\{1, 2, 3\} \subseteq \{1, 2, 3, 5\}$
        \item [(3)] $\rho_4: T_{345} \vee S_{35} \vee T_{1345} \vee (T_{234} \vee S_{24} \vee S_{25} \vee S_{14} \vee S_{15}) = T_{345} \vee S_{35} \vee (T_{234} \vee S_{24} \vee S_{25} \vee S_{14} \vee S_{15})$ \\
        For $\rho_4$, we show $2$ proof sequences that do not dominate each other as follows:
        \begin{align*}
            & 2 n_{23} + 2 n_{12} + 5 n_{34} + 3 n_{45} + 5 {w_{15}} \\
            & \geq 2({\hS(2)} + {\hT(3 | 2)}) + 2({\hS(1)} + {\hT(2 | 1)}) + 2({\hS(3)} + {\hT(4 | 3)}) + \\ 
             & \quad \quad 3({\hS(4)} + {\hT(3 | 4)}) + 3({\hS(5)} + {\hT(4|5)}) + 5 {\hA(15)} \\
            & = {2(\hT(2 | 1) + \hT(3 | 2) + \hT(4 | 3)) + 3(\hT(4 | 5) + \hT(3 | 4))} + 5 {\hA(15)} + \\
            & \quad \quad {2(\hS(3) + \hS(5)) + (\hS(2) + \hS(5)) + (\hS(2) + \hS(4)) + 2(\hS(1) + \hS(4))} \\
            & \geq {2(\hT(2|15) + \hT(3|125) + \hT(4|1235)) + 3(\hT(4|5) + \hT(3 |45))} + 2 {\hT(15)} + 3 {\hT(5)} +\\
            & \quad \quad {2(\hS(3) + \hS(5 | 3)) + (\hS(2) + \hS(5 | 2)) + (\hS(2) + \hS(4 | 2)) + 2(\hS(1) + \hS(4 | 1))} \\
            & = {2 \hT(12345) + 3 \hT(345)} + {2 \hS(3 5) + \hS(2 5) + \hS(2 4) + 2 \hS(1 4)} \\
            & \geq {5 \hT(345)} + {2 \hS(3 5) + \hS(2 5) + \hS(2 4) + 2 \hS(1 4)} && (S^6 \cdot T^5 \cong |\mD|^{12} \cdot |Q_A|^5) 
       \end{align*} 
       \begin{align*}
            & 3 n_{23} + 3 n_{34} + 3 n_{45} + n_{12} + 2 n_{34} + n_{23} + 3 {w_{15}}\\
            & \geq 3({\hS(3)} + {\hS(2|3)}) + 3 {\hS(34)} + 3({\hS(5)} + {\hT(4|5)}) + ({\hS(1)} + {\hT(2|1)}) + 2({\hS(4)} + {\hT(3|4)}) + ({\hS(2)} + {\hT(3|2)}) + 3 {\hT(15)} \\
            & \geq 3 {\hS(234)} + 3 {\hS(35)} + 3 {\hT(4|5)} + ({\hS(1)} + {\hT(2|1)}) + 2({\hS(4)} + {\hT(3|4)}) + ({\hS(2)} + {\hT(3|2)}) + 3 {\hT(15)} \\
             & \geq 3 {\hS(24)} + 3 {\hS(35)} + {\hS(14)} + {\hS(24)} + 3 {\hT(4|5)} +  {\hT(2|1)} + 2{\hT(3|4)} + {\hT(3|2)} + 3 {\hT(15)} \\
             & \geq 3 {\hS(24)} + 3 {\hS(35)} + {\hS(14)} + {\hS(24)} + 3 {\hT(4|5)} +  {\hT(2|1)} + 2{\hT(3|45)} + {\hT(3|2)} + 2{\hT(5)} + {\hT(15)} \\
             & \geq 3 {\hS(24)} + 3 {\hS(35)} + {\hS(14)} + {\hS(24)} + 2 {\hT(345)} +
             {\hT(4|15)} +  {\hT(2|145)}  + {\hT(3|1245)}  + {\hT(15)} \\
             & \geq 3 {\hS(24)} + 3 {\hS(35)} + {\hS(14)} + {\hS(24)} + 2 {\hT(345)} +
            {\hT(12345)} \\
            & \geq 3 {\hS(24)} + 3 {\hS(35)} + {\hS(14)} + {\hS(24)} + 3 {\hT(345)} 
            \qquad \qquad \qquad \qquad \qquad \qquad \qquad \qquad \qquad 
            (S^8 \cdot T^3 \cong |\mD|^{13} \cdot |Q_A|^3)
       \end{align*} 
       \item [(4)] $\rho_5: T_{1235} \vee T_{123} \vee S_{13} \vee (T_{234} \vee S_{24} \vee S_{25} \vee S_{14} \vee S_{15})$  \\
        The proof sequence for $\rho_5$ is omitted here because it is symmetric to rule $\rho_4$.
    \end{enumerate}
    Similar to $3$-reachability, there is a proof sequence for every rule above that corresponds to breath-first search in the online phase. The corresponding plot of tradeoff curve is included in \autoref{fig:4path}.
\end{example}

\eat{\hangdong{@Shaleen}
the disjunctive rules to range over: 
\begin{itemize}
    \item $T(Z A) \vee S(Z) \longrightarrow |D|^4 = T^3 \cdot S$
        \begin{align*}
                 4 \log |D| + 3 \log |W|  = 3 {\hT(A)} + {\hS(Z_1 Y_1 A | A)} + {\hS(Z_2 Y_1 A | A)} + {\hS(Z_3 Y_2 A | A)} + {\hS(Z_4 Y_2 A)}  \\
                 3 {\hT(A)} + {\hS(Z_1 Y_1 A | A)} + {\hS(Z_2 Y_1 A | A)} + {\hS(Z_4 Z_3 Y_2 A | Z_4 A)} + {\hS(Z_4 A)}  \\
                  3 {\hT(A)} + {\hS(Z_1 Y_1 A | A)} + {\hS(Z_2 Y_1 A | A)} + {\hS(Z_4 Z_3 Y_2 A)}  \\
                  3 {\hT(A)} + {\hS(Z_1 Y_1 A | A)} + {\hS(Z_4 Z_3 Z_2 Y_1 A | Z_4 Z_3 A)} + {\hS(Z_4 Z_3 A)}  \\
                  3 {\hT(A)} + {\hS(Z_1 Y_1 A | A)} + {\hS(Z_4 Z_3 Z_2 Y_1 A)}   \\
                  3 {\hT(A)} + {\hS(Z_4 Z_3 Z_2 Z_1 Y_1 A | Z_4 Z_3 Z_2 A)} + {\hS(Z_4 Z_3 Z_2 A)}   \\
                   3 {\hT(A)} + {\hS(Z_4 Z_3 Z_2 Z_1 Y_1 A)}   + 3 {\hT(Z)}\\
                   3 {\hT(A Z)} + {\hS(Z_4 Z_3 Z_2 Z_1)}   \\
            \end{align*}
    \item $T(Z_1 Z_2 Y_1 A) \vee S(Z_1 Z_2 A) \vee S(Z) \longrightarrow |D|^2 = T \cdot S$
            \begin{align*}
             2 \log |D| + \log |W| = {\hT(Y_1 A)}  + {\hS(Z_1 Y_1 A | Y_1 A)} + {\hS(Z_2 Y_1 A)} \\
             {\hT(Y_1 A)}  + {\hS(Z_2 Z_1 Y_1 A | Z_2 Y_1 A)} + {\hS(Z_2 Y_1 A)} \\
             {\hT(Y_1 A)}  + {\hS(Z_2 Z_1 Y_1 A)} + {\hT(Z_2 Z_1)}\\
             {\hT(Z_2 Z_1 Y_1 A)}  + {\hS(Z_2 Z_1 A)} 
        \end{align*}
    \item $T(Z_1 Z_2 Y_1 A) \vee T(Z_3 Z_4 Y_2 A) \vee S(Z) \vee S(Z_3 Z_4 A) \longrightarrow |D|^2 = T \cdot S$
        \begin{align*}
             2 \log |D| = {\hT(Y_2 A)}  + {\hS(Z_4 Y_2 A | Y_2 A)} + {\hS(Z_3 Y_2 A)} \\
             {\hT(Y_2 A)}  + {\hS(Z_4 Z_3 Y_2 A | Z_3 Y_2 A)} + {\hS(Z_3 Y_2 A)} \\
             {\hT(Y_2 A Z_2 Z_1)}  + {\hS(Z_4 Z_3 A)} 
        \end{align*}
    \item $T(Z_1 Z_2 Y_1 A) \vee T(Z_3 Z_4 Y_2 A) \vee S(Z) \vee S(Z_1 Z_2 A) \longrightarrow |D|^2 = T \cdot S$
\end{itemize}}

\section{Space-Time Tradeoffs for Boolean Hierarchical Queries} \label{sec:hierarchicalCQAP}

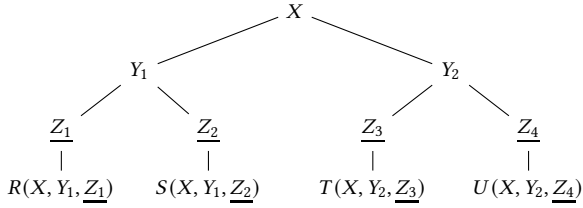
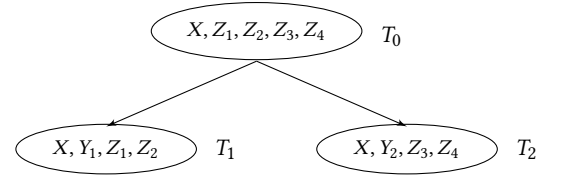
\begin{figure}
    \centering
    \begin{subfigure}[b]{0.5\linewidth}
    \begin{tikzpicture}[xscale=1.15, yscale=1]
      \node at (0.0, 0.0) (A) {{\small \color{black} $X$}};
      \node at (-1.8, -0.8) (B) {{\small\color{black} $Y_1$}} edge[-] (A);
      \node at (1.8, -0.8) (C) {{\small\color{black} $Y_2$}} edge[-] (A);
      \node at (-2.7, -1.6) (D) {{\small\color{black} $\underline{Z_1}$}} edge[-] (B);
      \node at (-1, -1.6) (E) {{\small\color{black} $\underline{Z_2}$}} edge[-] (B);
      \node at (0.9, -1.6) (F) {{\small\color{black} $\underline{Z_3}$}} edge[-] (C);
      \node at (2.7, -1.6) (G) {{\small\color{black} $\underline{Z_4}$}} edge[-] (C);
      \node at (-2.7, -2.4) (R) {{\small \color{black} $R(X,Y_1,\underline{Z_1})$}} edge[-] (D);
      \node at (-1, -2.4) (S) {{\small \color{black} $S(X,Y_1,\underline{Z_2})$}} edge[-] (E);
      \node at (0.9,  -2.4)(T) {{\small \color{black} $T(X,Y_2,\underline{Z_3})$}} edge[-] (F);
      \node at (2.7,  -2.4)(U) {{\small \color{black} $U(X,Y_2,\underline{Z_4})$}} edge[-] (G);
    \end{tikzpicture} 
    \caption{Hierarchical query represented by a complete binary tree. The leaves of the tree form $\bx_A$. Every root-to-leaf path forms a relation (labeled).} 
    \label{fig:hierarchy}
    \end{subfigure} \hspace{1cm}
    \begin{subfigure}[b]{0.3\linewidth}
	{\begin{tikzpicture}
				\tikzset{edge/.style = {->,> = latex'},
					vertex/.style={circle, thick, minimum size=5mm}}
				\def\x{0.25}
				
				\begin{scope}[fill opacity=1]
				
				\draw[] (7,-1.93) ellipse (1.4cm and 0.38cm) node {\small ${X, Z_1, Z_2, Z_3, Z_4}$};
				\node[vertex]  at (8.8,-2) {$T_{0}$};
				\draw[] (5,-3.5) ellipse (1.2cm and 0.33cm) node {\small \small ${X, Y_1,Z_1,Z_2}$};
				\node[vertex]  at (6.6,-3.5) {$T_{1}$};				
				\draw[] (9,-3.5) ellipse (1.2cm and 0.33cm) node {\small \small ${X, Y_2,Z_3,Z_4}$};						
				\node[vertex]  at (10.6,-3.5) {$T_{2}$};								
				\draw[edge] (7,-2.33) -- (5,-3.2);
				\draw[edge] (7,-2.33) -- (9,-3.2);			
				\end{scope}	
				\end{tikzpicture} 
				}
				\caption{Tree decomposition for the hierarchical query on the left.} \label{fig:decomp}
		\end{subfigure} 
        \caption{Example hierarchical CQAP and its tree decomposition.}
\end{figure}

In this section, we show some applications of our framework to CQ Boolean hierarchical queries, which are CQAPs defined over queries whos body is hierarchical. A query said to be {\em hierarchical} if for any two of its variables, either their sets of atoms are disjoint or one is contained in the other. An alternative interpretation is that each hierarchical query admits a \emph{canonical ordering}\footnote{The original definition in~\cite{kara19} also has a dependency function but we omit that since we do not use it for the decomposition construction.}, which is a rooted tree where the variables of each atom in the query lie along the same root-to-leaf path in the tree and each atom is a child of its lowest variable. For example, the query shown in~\autoref{fig:hierarchy} is hierarchical. Variable $Y_1$ is present in atoms $\{R,S\}$, which is a subset of the atoms in which variable $X$ is present, i.e. $\{R,S,T,U\}$. Note that every root-to-leaf path forms a relation.

Hierarchical queries are an interesting class of queries that captures tractability for a variety of problems such as query evaluation in probabilistic databases~\cite{dalvi2009probabilistic}, dynamic query evaluation~\cite{idris2017dynamic, berkholz2017answering}, etc. Seminal work by~\cite{kara19} provided tradeoffs between preprocessing time and delay guarantees for enumerating the result of any (not necessarily full) hierarchical query (i.e. $A = \emptyset$ for their setting). However,~\cite{kara19} does not deal with the setting where access patterns can be specified on certain variables of the hierarchical queries, which is the main focus of our work. Despite the difference in the settings, an adaptation of the main algorithm from~\cite{kara19} is able to provide a non-trivial baseline for tradeoffs between space usage and answering time that we highlight in this section. The adapted algorithm may be of independent interest as well. We begin by describing the main algorithm for the static enumeration of hierarchical queries.

Given a hierarchical query $\varphi(\by)$, the preprocessing phase in~\cite{kara19} takes the free variables\footnote{The terminology in~\cite{kara19} calls variables in the head as free and all other variables of the query as bound (not to be confused with $\bx_A$, which we refer to as bound in this paper).} $\by$ and constructs a list of skew-aware view trees and heavy/light indicator views. Starting from the root, for any bound variable that \emph{violates} the free-connex property, two evaluation strategies are used. The first strategy materializes a subset of the query result obtained for the light values over the set of variables $\mathsf{anc}(W) \cup \{W\}$\footnote{\textsf{anc}($W$) is defined as the variables on the path from $W$ to the root excluding $W$.} in the variable order. It also aggregates away the bound variables in the subtree rooted at $W$. Since the light values have a bounded degree, this materialization is inexpensive. The second strategy computes a compact representation of the rest of the query result obtained for those values over  $\mathsf{anc}(W) \cup \{W\}$ that are heavy (i.e., have high degree) in at least one relation. This second strategy treats $W$ as a free variable and proceeds recursively to resolve further bound variables located below $W$.

\begin{example}
    For the query in~\autoref{fig:hierarchy}, the following views are constructed: $V_X(Z_1, Z_2, Z_3, Z_4), V_{Y_1}(X,Z_1, Z_2),$ $ V_{Y_2}(X, Z_3, Z_4)$ for light degree threshold $N^\epsilon, 0 \leq \epsilon \leq 1$ of $\{X\}, \{X, Y_1\}, \{X,Y_2\}$ respectively (assume relation size $N$). The indicator views are $H_X(X)$ and $L_X(X)$ that store which $X$ values are heavy and light. Proceeding recursively, we treat $A$ as a free variable and construct the indicator views $H_{Y_1}(X,Y_1)$ and $L_{Y_1}(X,Y_1)$ (and similarly for the right subtree). The heavy indicator views have size at most $O(N^{1-\epsilon})$.
\end{example}

The enumeration phase then uses the views as an input to the so called \texttt{UNION} and \texttt{PRODUCT} algorithm that combines the output of $O(1)$ number of view trees and uses the observation that the heavy bound variables themselves form a hierarchical structure. 

\begin{example}
    Continuing the example, in the enumeration phase, the output from $V_X(Z_1, Z_2, Z_3, Z_4)$ is available with constant delay. For 
    each heavy $X$ in $H_X(X)$, we proceed recursively and either use the view $V_{Y_1}(X,Z_1, Z_2)$, whose output can be accessed with constant delay for all light $\{X,Y_1\}$ or use the observation that the only remaining case for the left subtree is when $\{X,Y_1\}$ is heavy, allowing us to enumerate the $(Z_1,Z_2)$ answers from the subqueries $Q_{Z_1}(X,Y_1,Z_1) = R(X, Y_1, Z_1)$ and $Q_{Z_2}(X,Y_1,Z_2) = S(X, Y_1, Z_2)$, both of which allow constant delay enumeration for a given fixing of $\{X,Y_1\}$. The overall delay is $O(N^{1 - \epsilon})$.
\end{example}

\begin{theorem}[\cite{kara19}]\label{thm:main_static}
Given a hierarchical query with static width $\mw$, a database  
of size $N$, and $\epsilon \in [0,1]$, the query result can be enumerated with $O(N^{1-\epsilon})$ delay after $O(N^{1 + (\mw -1)\epsilon})$ preprocessing time and space.
\end{theorem}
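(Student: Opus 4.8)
The statement is Theorem~\ref{thm:main_static}, which is quoted directly from~\cite{kara19}, so the task here is not to prove it from first principles but to present the algorithmic argument behind the cited result in a way that is self-contained enough for our setting. The plan is to describe the preprocessing data structure and the enumeration procedure, and then to account for both the $O(N^{1+(\mw-1)\epsilon})$ preprocessing time/space bound and the $O(N^{1-\epsilon})$ delay bound by an induction on the structure of the canonical ordering tree of the hierarchical query.

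First I would recall the canonical ordering of the hierarchical query $\varphi$: a rooted forest of variables in which the variables of every atom lie on a single root-to-leaf path, and every atom hangs off its lowest variable. I would then describe the recursive construction of the skew-aware view trees. Walking the tree top-down, whenever a bound variable $W$ is encountered that violates the free-connex property (i.e. has a bound ancestor or sits above a free variable in a way that forbids projecting it away cheaply), we split the relations guarding $W$ on the value of $\manc(W)\cup\{W\}$ into a \emph{light} part (degree at most $N^{\epsilon}$) and a \emph{heavy} part. For the light part we materialize the view obtained by joining the relevant subtree and aggregating away all bound variables strictly below $W$; because each light tuple over $\manc(W)$ extends to at most $N^{\epsilon}$ values of $W$, and there are at most $N$ tuples over $\manc(W)$, this view has size $O(N^{1+\epsilon})$ — and, crucially, iterating this $\mw-1$ times down a chain of such violating variables gives the claimed $O(N^{1+(\mw-1)\epsilon})$ bound, where $\mw$ is exactly the static width (the maximum number of such violations along a root-to-leaf path). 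For the heavy part we record the heavy/light indicator views $H_W,L_W$, which have size $O(N^{1-\epsilon})$ since an element can be heavy in at most $O(N^{1-\epsilon})$ ``slots'', and we recurse, now treating $W$ as free. The preprocessing time matches the space bound since each view is computed by a constant number of semijoins/joins over relations and previously built views, each of AGM-size within the stated bound.

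Next I would describe the enumeration (online) phase: the \texttt{UNION} and \texttt{PRODUCT} combinators. The key structural observation — which I would state as a small lemma — is that the set of bound variables that are ``heavy'' (i.e. for which we took the second branch) itself induces a hierarchical sub-structure, so the residual query over the heavy regime is again hierarchical and can be handled by the same machinery at one lower level of recursion. At the base, any view whose all remaining bound variables have been aggregated away enumerates with constant delay; \texttt{PRODUCT} stitches together independent subtrees with delay the maximum of the children's delays; \texttt{UNION} merges the light-branch view (constant delay) with the recursively-enumerated heavy branch, whose delay is bounded by induction. Since each recursive step either resolves a violating variable via a constant-delay materialized view or via at most $N^{1-\epsilon}$ heavy values multiplied by a constant-delay tail, the total delay telescopes to $O(N^{1-\epsilon})$; the constant in the exponent-of-depth is absorbed because the recursion depth is $O(1)$ in data complexity.

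\textbf{Main obstacle.} The delicate point is the accounting that makes the ``$\mw-1$'' exponent exact rather than merely ``$O(1)$ times $\epsilon$'': one must argue that materialized light views along a single root-to-leaf path multiply their degree blow-ups, so the worst-case product of light-degree thresholds is $N^{(\mw-1)\epsilon}$ for a path with $\mw-1$ violating variables, and simultaneously that the heavy indicator views never exceed $O(N^{1-\epsilon})$ even as the recursion nests. Equivalently, one has to show the recursion tree of (light-view, heavy-branch) splits has the property that along any branch the product of light thresholds is controlled by the static width, and across branches (the \texttt{PRODUCT} nodes) sizes do not multiply because those subtrees share only free variables. Handling this cleanly is where the definition of static width from~\cite{kara19} is doing the real work, and in a fully self-contained write-up this is the step that would require the most care; since the theorem is cited verbatim, I would instead point to~\cite{kara19} for the detailed bookkeeping and keep the exposition at the level of the above sketch, which suffices to explain why the stated baseline tradeoff holds and how it plugs into our framework for hierarchical CQAPs.
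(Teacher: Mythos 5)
Your proposal matches the paper's treatment: the paper does not prove Theorem~\ref{thm:main_static} either, but cites it from~\cite{kara19} and gives the same informal account you do — heavy/light splitting at bound variables violating free-connexity, materialized light views plus heavy/light indicator views in preprocessing, and \texttt{UNION}/\texttt{PRODUCT} enumeration over the recursively hierarchical heavy regime — deferring the exact $(\mw-1)\epsilon$ bookkeeping to~\cite{kara19}. So your sketch is correct and takes essentially the same approach at essentially the same level of detail.
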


The static width $\mw$ is a generalization of fractional hypertree width that takes the structure of the free variables in the query into account. For the query in~\autoref{fig:hierarchy}, $\mw = 4$.

\smallskip
\introparagraph{Adapted Algorithm} In order to adapt the algorithm for enumerating hierarchical query results to apply to our setting, we make two key observations. First, we apply the preprocessing phase of~\cite{kara19} with the free variables as $\bx_A$ and construct the data structures. Then, we can construct an indexed representation of the base relations, and the view trees where the indexing variables are the set $\bx_A$. In other words, for any view $V$ and each $v \in \Pi_{\vars{V} \cap A} (V)$, we store  $V \ltimes v$ in a hashtable $H_V$ with key as $v$. Second, given a fixing (say $z$) for $\bx_A$, we provide the indexed base relations, view indicators, and the views (i.e. $H_V[\Pi_{\vars{V} \cap A}(z)]$) to the enumeration algorithm that enumerates the query result with $O(N^{1-\epsilon})$ delay. Since each base relation $R' = R \ltimes z$ provided as input has only one possible value for $\vars{R} \cap A$ (which is $\Pi_{\vars{R} \cap A}(z)$), the only answer the enumeration algorithm can output is either $z$ or declare that there is no answer. This allows us to answer the Boolean hierarchical CQAP successfully. Let $\mw$ denote the static width of the hierarchical query obtained from~\cite{kara19} with free variables as $\bx_A$. We can obtain the following tradeoff.

\begin{theorem} \label{thm:static:adapted}
Given a Boolean hierarchical CQAP (i.e. $H = A$), a database  
of size $N$, $\mw$ as static width with free variables as $\bx_A$, and $\epsilon \in [0,1]$, the query can be answered in time $O(N^{1-\epsilon})$ using a preprocessed data structure that takes space $S = O(N^{1 + (\mw -1)\epsilon})$. 
\end{theorem}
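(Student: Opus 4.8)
The plan is to leverage \autoref{thm:main_static} essentially as a black box, and argue that the indexing transformation described in the paragraph ``Adapted Algorithm'' preserves correctness while converting the preprocessing time bound into a space bound and the delay bound into a (total) answering time bound. Concretely, I would first invoke the preprocessing phase of~\cite{kara19} on the hierarchical body of the CQAP with $\bx_A$ designated as the free variables. By \autoref{thm:main_static} (instantiated with free variables $\bx_A$ and parameter $\epsilon$), this produces in time (hence space) $O(N^{1+(\mw-1)\epsilon})$ a constant-size collection of skew-aware view trees $\{V\}$ together with heavy/light indicator views. Since we do not care about preprocessing \emph{time} but only about \emph{space} in our model, this step already gives an intrinsic space cost $\bspace = O(N^{1+(\mw-1)\epsilon})$.

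Second, I would build the hash indices: for every view $V$ in the collection and every value $v \in \Pi_{\vars{V}\cap A}(V)$, store $V \ltimes v$ in a hashtable $H_V$ keyed by $v$. This only reorganizes the already-materialized views, so the total space is unchanged up to constant factors, and each semijoin with a singleton access value can be retrieved in $O(1)$. I would also index the base relations the same way. This completes the preprocessing description and establishes the claimed space bound $S = O(N^{1+(\mw-1)\epsilon})$.

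Third, for the online phase: given an access request consisting of a single tuple $z$ over $\bx_A$ (the Boolean, $|Q_A|=1$ case), feed to the enumeration algorithm of~\cite{kara19} the sliced inputs $R \ltimes z$ for each base relation $R$, the indicator views restricted to the $A$-values consistent with $z$, and each view $H_V[\Pi_{\vars{V}\cap A}(z)]$. The key correctness observation is that each sliced base relation $R \ltimes z$ has exactly one value for $\vars{R}\cap A$ (namely $\Pi_{\vars{R}\cap A}(z)$), so the enumeration algorithm can only ever emit $z$ itself or report emptiness; hence it correctly decides the Boolean query. The key complexity observation is that the enumeration algorithm of~\cite{kara19} has delay $O(N^{1-\epsilon})$, and since it produces at most one output tuple, its \emph{total} running time is also $O(N^{1-\epsilon})$ (delay to first answer plus delay to declare termination, which is $O(1)$ outputs $\times\; O(N^{1-\epsilon})$ delay). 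This gives intrinsic time $\btime = O(N^{1-\epsilon})$, matching the statement. I would close by noting the resulting curve $S \cdot T^{(\mw-1)} = O(N^{\mw})$ as $\epsilon$ ranges over $[0,1]$, though the theorem statement itself only asserts the parametric form.

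The main obstacle is the correctness/soundness argument for the slicing step: one must verify that restricting all inputs (base relations, indicator views, and materialized views) to be consistent with the fixed tuple $z$ does not break any invariant the enumeration algorithm of~\cite{kara19} relies on --- in particular that the heavy/light classification remains valid on the sliced instance and that the \texttt{UNION}/\texttt{PRODUCT} combination still covers exactly the query result intersected with $\{z\}$. Care is also needed because some view trees in~\cite{kara19} are built by treating a bound variable $W$ as free and recursing; one has to check that the hash-indexing is applied consistently to these recursively-generated views and that $\vars{V}\cap A$ is the right key at every level. A secondary subtlety is ensuring the static width $\mw$ is the one computed with $\bx_A$ (not $\emptyset$) as the free set, which is exactly why the bound degrades from linear when $\bx_A$ does not satisfy the free-connex condition; I would make this dependence explicit in the statement and proof.
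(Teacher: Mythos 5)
Your proposal matches the paper's own argument essentially step for step: apply the preprocessing of~\cite{kara19} with $\bx_A$ as the free variables (bounding space by the preprocessing time $O(N^{1+(\mw-1)\epsilon})$), hash-index every view and base relation on its $A$-variables, and in the online phase slice all inputs by the access tuple $z$ so the enumeration algorithm can emit only $z$ or report emptiness within a single delay period $O(N^{1-\epsilon})$. The subtleties you flag (validity of heavy/light classification on the sliced instance, consistent indexing of recursively generated views, and using the static width computed w.r.t.\ $\bx_A$) are exactly the points the paper treats informally, so no genuinely different route or missing idea is involved.
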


\begin{example} \label{ex:static}
    Applying~\autoref{thm:static:adapted} to the CQAP (we use $\bZ$ as a shorthand for the set $\{Z_1, Z_2, Z_3, Z_4\}$)
    $$ \varphi(\bZ \mid \bZ) \leftarrow Q_A(\bZ)  \wedge R(X,Y_1, Z_1) \wedge S(X,Y_1, Z_2) \wedge T(X,Y_2, Z_3) \wedge U(X, Y_2, Z_4)
    $$
    that corresponds to~\autoref{fig:hierarchy} gives us the tradeoff $S \cdot T^3  \cong |\mD|^4$ for any instantiation of the query. Here, $\bZ$ is the access pattern.
\end{example}

Next, we show how the tradeoff in~\autoref{ex:static} can be recovered by our framework and improved to also takes the access request $Q_A$ into account. \autoref{fig:decomp} shows the (free-connex) tree decomposition we consider for the query. The set of PMTDs ($5$ of them) induced from this (free-connex) tree decomposition contains $5$ PMTDs, by choosing the materialization set $M$ to be one of the following: $\{T_0\}$, $\{T_1\}$, $\{T_2\}$, $\{T_1, T_2\}$ or $\emptyset$. Let 
$$ \textsf{body} = Q_A(\bZ)  \wedge R(X,Y_1, Z_1) \wedge S(X,Y_1, Z_2) \wedge T(X,Y_2, Z_3) \wedge U(X, Y_2, Z_4), 
$$
we generate from the $5$ PMTDs the following $2$-phase disjunctive rules:
\begin{align*}
 T_{0}(\bZ, X) \vee S_{1234}(\bZ)  & \leftarrow \textsf{body}  \\
 T_{1}(Z_1, Z_2, Y_1, X)  \vee S_{12}(Z_1, Z_2, X) \vee S_{1234}(\bZ) & \leftarrow \textsf{body}  \\
 T_{1}(Z_1, Z_2, Y_1, X)  \vee T_{2}(Z_3, Z_4, Y_2, X) \vee S_{34}(Z_3, Z_4, X) \vee S_{1234}(\bZ)  &\leftarrow \textsf{body}  \\
 T_{1}(Z_1, Z_2, Y_1, X) \vee  T_{2}(Z_3, Z_4, Y_2, X) \vee S_{12}(Z_1, Z_2, X) \vee S_{1234}(\bZ) & \leftarrow \textsf{body} 
\end{align*}

We now construct the proof sequence for each of the four rules. For the first rule, we get the tradeoff $S \cdot T^3  \cong  |\mD|^4 \cdot |Q_A|^3$ as shown below
\begin{align*}
     4 \log |\mD| + 3 \log |Q_A(\bZ)|  & \geq 3 \textcolor{black}{\hT(X)} + \textcolor{black}{\hS(Z_1 Y_1 X | X)} + \textcolor{black}{\hS(Z_2 Y_1 X | X)} + \textcolor{black}{\hS(Z_3 Y_2 X | X)} + \textcolor{black}{\hS(Z_4 Y_2 X)}  + 3 \textcolor{black}{\hT(\bZ)}  \\
     & \geq 3 \textcolor{black}{\hT(X)} + \textcolor{black}{\hS(Z_1 Y_1 X | X)} + \textcolor{black}{\hS(Z_2 Y_1 X | X)} + \textcolor{black}{\hS(Z_4 Z_3 Y_2 X | Z_4 X)} + \textcolor{black}{\hS(Z_4 X)}  + 3 \textcolor{black}{\hT(\bZ)} \\
      & = 3 \textcolor{black}{\hT(X)} + \textcolor{black}{\hS(Z_1 Y_1 X | X)} + \textcolor{black}{\hS(Z_2 Y_1 X | X)} + \textcolor{black}{\hS(Z_4 Z_3 Y_2 X)}  + 3 \textcolor{black}{\hT(\bZ)} \\
      & \geq3 \textcolor{black}{\hT(X)} + \textcolor{black}{\hS(Z_1 Y_1 X | X)} + \textcolor{black}{\hS(Z_4 Z_3 Z_2 Y_1 X | Z_4 Z_3 X)} + \textcolor{black}{\hS(Z_4 Z_3 X)}  + 3 \textcolor{black}{\hT(\bZ)} \\
      & = 3 \textcolor{black}{\hT(X)} + \textcolor{black}{\hS(Z_1 Y_1 X | X)} + \textcolor{black}{\hS(Z_4 Z_3 Z_2 Y_1 X)}  + 3 \textcolor{black}{\hT(\bZ)}  \\
      & \geq 3 \textcolor{black}{\hT(X)} + \textcolor{black}{\hS(Z_4 Z_3 Z_2 Z_1 Y_1 X | Z_4 Z_3 Z_2 X)} + \textcolor{black}{\hS(Z_4 Z_3 Z_2 X)}  + 3 \textcolor{black}{\hT(\bZ)}   \\
       & = 3 \textcolor{black}{\hT(X)} + \textcolor{black}{\hS(Z_4 Z_3 Z_2 Z_1 Y_1 X)}   + 3 \textcolor{black}{\hT(\bZ)}\\
       & \geq 3 \textcolor{black}{\hT(X \bZ)} + \textcolor{black}{\hS(\bZ)} && (S \cdot T^3  \cong  |\mD|^4 \cdot |Q_A|^3)
\end{align*}
For the rest of the rules, we get the tradeoff $S \cdot T  \cong  |\mD|^2 \cdot |Q_A|$. We show the proof sequence for the second rule below. The proof sequences for the third and fourth rules are very similar and thus omitted.
\begin{align*}
             2 \log |\mD| + \log |Q_A(\bZ)| & \geq \textcolor{black}{\hT(Y_1 X)}  + \textcolor{black}{\hS(Z_1 Y_1 X | Y_1 X)} + \textcolor{black}{\hS(Z_2 Y_1 X)} + \textcolor{black}{\hT(Z_2 Z_1)} \\
             & \geq \textcolor{black}{\hT(Y_1 X)}  + \textcolor{black}{\hS(Z_2 Z_1 Y_1 X | Z_2 Y_1 X)} + \textcolor{black}{\hS(Z_2 Y_1 X)} + \textcolor{black}{\hT(Z_2 Z_1)} \\
             & =  \textcolor{black}{\hT(Y_1 X)}  + \textcolor{black}{\hS(Z_2 Z_1 Y_1 X)} + \textcolor{black}{\hT(Z_2 Z_1)}\\
             & \geq \textcolor{black}{\hT(Z_2 Z_1 Y_1 X)}  + \textcolor{black}{\hS(Z_2 Z_1 X)} && (S \cdot T  \cong  |\mD|^2 \cdot |Q_A|)
\end{align*}
Overall, we obtain the tradeoff $S \cdot T^3  \cong  |\mD|^4 \cdot |Q_A|^3$ since $S \cdot T  \cong  |\mD|^2 \cdot |Q_A|$ is dominated by it.

\smallskip\introparagraph{Improved Tradeoffs} We now show an alternative proof sequence that can improve upon the tradeoff $S \cdot T^3  \cong  |\mD|^4 \cdot |Q_A|^3$ for our running example query. The key insight is to bucketize on the bound variables rather than the free variables, an idea also used by~\cite{deep2021enumeration}. We fix the same disjunctive rules from before. For the first rule, we can obtain an improved tradeoff as follow:
\begin{equation}\label{improvetradeoff}
    \begin{aligned}
    4 \log |\mD| + 4 \log |Q_A(\bZ)|  &= \textcolor{black}{\hT(Y_1 X Z_1 | Z_1)} + \textcolor{black}{\hS(Z_1)} + \textcolor{black}{\hT(Y_1 X Z_2 | Z_2)} + \textcolor{black}{\hS(Z_2)} + \textcolor{black}{\hT(Y_2 X Z_3| Z_3)} \\
     & \qquad + \textcolor{black}{\hS(Z_3)} + \textcolor{black}{\hT(Y_2 X Z_4| Z_4)} + \textcolor{black}{\hS(Z_4)}  + 4 \textcolor{black}{\hT(\bZ)}  \\
     & \geq \textcolor{black}{\hS(\bZ)} + \textcolor{black}{\hT(Y_1 X Z_1 | Z_1)} + \textcolor{black}{\hT(\bZ)} + \textcolor{black}{\hT(Y_1 X Z_2 | Z_2)} + \textcolor{black}{\hT(\bZ)} \\
     & \qquad + \textcolor{black}{\hT(Y_2 X Z_3| Z_3)} + \textcolor{black}{\hT(\bZ)} + \textcolor{black}{\hT(Y_2 X Z_4| Z_4)} + \textcolor{black}{\hT(\bZ)} \\
     & \geq \textcolor{black}{\hS(\bZ)} + 4\textcolor{black}{\hT(X \bZ)} && (S \cdot T^4  \cong  |\mD|^4 \cdot |Q_A|^4)
    \end{aligned}
\end{equation}
The above proof sequence generates the tradeoff $S \cdot T^4  \cong  |\mD|^4 \cdot |Q_A|^4$, a clear improvement for $|Q_A| = 1$. For the rest of the rules, we keep the tradeoff derived above, i.e. $S \cdot T  \cong  |\mD|^2 \cdot |Q_A|^2$. 

Note that the tradeoff $S \cdot T^3  \cong  |\mD|^4 \cdot |Q_A|^3$ dominates both $S \cdot T^4  \cong  |\mD|^4 \cdot |Q_A|^4$ for the first rule and $S \cdot T  \cong  |\mD|^2 \cdot |Q_A|^2$ for the rest of the rules. So we get a strictly improved tradeoff across all regimes.

\introparagraph{Capturing~\autoref{thm:static:adapted} in Our Framework} Before we conclude this section, we present a general strategy to capture the tradeoff from~\autoref{thm:static:adapted} for a subset of hierarchical queries. In particular, we show that for any Boolean hierarchical CQAP that contains $\bx_A$ only in the leaf variables, there is a proof sequence that recovers the tradeoff obtained from~\autoref{thm:static:adapted}. We will use $\bZ$ to denote the leaf variables. Recall that each hierarchical query admits a \emph{canonical ordering}\footnote{The original definition in~\cite{kara19} also has a dependency function but we omit that since we do not use it for the decomposition construction.}, which is a rooted tree where the variables of each atom in the query lie along the same root-to-leaf path in the tree and each atom is a child of its lowest variable.

We begin by describing the query decomposition that we will use. Consider the canonical variable ordering of the hierarchical query. The root bag of the decomposition $T_0$ consists of $\bx_A$ and the variable at the root of the variable ordering (say $X$). For each child $Y_i$ of $X$, we add a child bag of $T_i$ containing the subset of $\bx_A$ in the subtree rooted at $Y_i$ and $\anc{(Y_i)} \cup Y_i$. We continue this procedure by traversing the variable ordering in a top-down fashion and processing all non-bound variables. It is easy to see that the tree obtained is indeed a valid decomposition.

\begin{example}
\autoref{fig:decomp} shows the query decomposition generated from the canonical ordering in~\autoref{fig:hierarchy}. The root bag contains all bound variables and $X$. $X$ contains two children $Y_1, Y_2$ so the decomposition contains two children of the root node. The left child contains $X, Y_1$ and the subtree rooted at $Y_1$ contains $Z_1, Z_2$ as the bound variables, which are added to the left node in the decomposition. Similarly, the right node contains $X,Y_2, Z_3, Z_4$.
\end{example}

Similar to the running example, we now construct the set of PMTDs induced from this decomposition and generate the corresponding 2-phase disjunctive rules. It is easy to see that every disjunctive rule contains $S(\bZ)$. 

\eat{First, we show that the disjunctive rule $T_0(\bZ X) \vee S(\bZ)$ achieves the tradeoff $S 
\cdot T^{\mw-1} \cong |\mD|^\mw \cdot |Q_A(\bZ)|^{\mw-1}$. Indeed,

\begin{align*}
     \mw \log |\mD| + (\mw-1) \log |Q_A(\bZ)|  &= (\mw-1) \textcolor{black}{\hT(X)} + \sum_{i=1}^{\mw-1} \textcolor{black}{\hS(\{\anc{(Z_i)} \cup Z_i \} \mid X)} + \textcolor{black}{\hS(\anc{(Z_\mw)} \cup Z_\mw)} + (\mw-1) \textcolor{black}{\hT(\bZ)}\\
     &\geq (\mw-1) \textcolor{black}{\hT(X \bZ)} + \sum_{i=1}^{\mw-2} \textcolor{black}{\hS(\{\anc{(Z_i)} \cup Z_i \} \mid X)} + \textcolor{black}{\hS(\{\anc{(Z_{\mw-1})} \cup Z_{\mw-1} \cup Z_{\mw} \} \mid Z_{\mw} X)} + \textcolor{black}{\hS(Z_\mw X)} \\
     &\geq (\mw-1) \textcolor{black}{\hT(X \bZ)} + \sum_{i=1}^{\mw-2} \textcolor{black}{\hS(\{\anc{(Z_i)} \cup Z_i \} \mid X)} + \textcolor{black}{\hS(\anc{(Z_{\mw-1})} \cup Z_{\mw-1} \cup Z_{\mw})} \\
     &  \hspace{0.5em}  \vdots \\
     &\geq (\mw-1) \textcolor{black}{\hT(X \bZ)} + \textcolor{black}{\hS(\anc{(Z_{1})} \cup Z_1 \cup \dots \cup Z_{\mw-1} \cup Z_{\mw})} \\
     &\geq (\mw-1) \textcolor{black}{\hT(X \bZ)} + \textcolor{black}{\hS(X \bZ)} \\
     &\geq (\mw-1) \textcolor{black}{\hT(X \bZ)} + \textcolor{black}{\hS(\bZ)}
\end{align*}}

Let $\bU_1 \subseteq \bZ$ and $\mw_1 = |\bU_1|$. For any disjunctive rule, it must contain a term of the form $T(\bU_1 p)$ and $S(\bZ)$. Here, $p$ denotes the set of variables other than the bound variables in the bag corresponding to the $T$-view that is picked. Note that the induced PMTD where $M = \emptyset$ forces any disjunctive rule to have at least one $T$-view in the head of the rule. Consider some $v^\star \in U_1$.
\begin{align*}
     \mw_1 \log |\mD| + (\mw_1 - 1) \log |Q_A(\bZ)|  &=   \textcolor{black}{\hS(\anc(v^\star) \cup v^\star)} +  \sum_{v \in \bZ \setminus v^\star} (\textcolor{black}{\hT(p)} + \textcolor{black}{\hS(\anc(v) \cup v  \mid p)}) + (\mw_1 - 1) \textcolor{black}{\hT(\bZ)} \\
     &\geq (\mw_1 - 1) \textcolor{black}{\hT(\bU_1 p)} + \textcolor{black}{\hS(\anc(v^\star) \cup v^\star)} +  \sum_{v \in \bZ \setminus v^\star} \textcolor{black}{\hS(\anc(v) \cup v  \mid p)}  \\
     &= (\mw_1 - 1) \textcolor{black}{\hT(\bU_1 p)} + \textcolor{black}{\hS(\anc(v_1) \cup v_1  \mid p)} + \dots + \textcolor{black}{\hS(\anc(v_k) \cup v_k \cup v^\star  \mid v^\star \cup p)} + \textcolor{black}{\hS(\anc(v^\star) \cup v^\star)} \\
     &\geq (\mw_1 - 1) \textcolor{black}{\hT(\bU_1 p)} + \textcolor{black}{\hS(\anc(v_1) \cup v_1  \mid p)} + \dots + \textcolor{black}{\hS(\anc(v_k) \cup v_k \cup v^\star  \mid v^\star \cup p)} + \textcolor{black}{\hS(p \cup v^\star)} \\
     &\geq (\mw_1 - 1) \textcolor{black}{\hT(\bU_1 p)} + \textcolor{black}{\hS(\anc(v_1) \cup v_1  \mid p)} + \dots + \textcolor{black}{\hS(\anc(v_k) \cup v_k \cup v^\star )} \\
     &\geq (\mw_1 - 1) \textcolor{black}{\hT(\bU_1 p)} + \textcolor{black}{\hS(\anc(v_1) \cup v_1 \dots \cup v_k \cup v^\star )} \\
     &\geq (\mw_1 - 1) \textcolor{black}{\hT(\bU_1 p)} + \textcolor{black}{\hS(\bZ)} \hspace{15em} (S \cdot T^{\mw_1-1} \cong |\mD|^{\mw_1} \cdot |Q_A|^{\mw_1 - 1})
\end{align*}

The tradeoff is the most expensive when $\mw_1$ is as large as possible. Thus, for $\mw_1 = \mw$, which corresponds to $U_1 = \bZ$, we achieve the tradeoff $S 
\cdot T^{\mw-1} \cong |\mD|^\mw \cdot |Q_A(\bZ)|^{\mw-1}$. However, for the same disjunctive rule $T_{0}(\bZ, A) \vee S_{\bx_A}(\bZ)$ that gives the dominating tradeoff, we can also obtain a different proof sequence that provides an improvement, similar to \eqref{improvetradeoff}
\eat{\begin{align*}
     \mw \log |\mD| + \mw \log |Q_A(\bZ)|  &= \mw\ \textcolor{black}{\hT(X)} + \sum_{i=1}^{\mw} \textcolor{black}{\hT(\{\anc{(Z_i)} \cup Z_i \} \mid Z_i)} + \mw\ \textcolor{black}{\hT(\bZ)} \\
     &\geq \mw\ \textcolor{black}{\hS(\bZ)} + \sum_{i=1}^{\mw}(\textcolor{black}{\hT(X)} + \textcolor{black}{\hT(\bZ)}) \\
     &\geq \mw\ \textcolor{black}{\hS(\bZ)} + \textcolor{black}{\hT(A\bZ)} \hspace{15em} (S \cdot T^{\mw} \cong |\mD|^{\mw} \cdot |Q_A|^{\mw})
\end{align*}}

\begin{align*}
     \mw \log |\mD| + \mw \log |Q_A(\bZ)|  &= \textcolor{black}{\hS(\bZ)} + \sum_{i=1}^{\mw} \textcolor{black}{\hT(\anc{(Z_i)} \cup Z_i  \mid Z_i)} + \mw\ \textcolor{black}{\hT(\bZ)} \\
     &=\textcolor{black}{\hS(\bZ)} + \sum_{i=1}^{\mw} (\textcolor{black}{\hT(X \cup Z_i  \mid Z_i)} + \textcolor{black}{\hT(\bZ)}) \\
     &\geq \textcolor{black}{\hS(\bZ)} + \mw  \cdot \textcolor{black}{\hT(X\bZ)}\hspace{15em} (S \cdot T^{\mw} \cong |\mD|^{\mw} \cdot |Q_A|^{\mw})
\end{align*}

\eat{
The only case that remains is when $\bU_1 \cap \bU_2 = \emptyset$. For this case, we get the following:

\begin{align*}
     (\mw_1 + \mw_2) \log |\mD| + (\mw_1 - 1) \log |Q_A(\bZ)|  &=   \sum_{v \in \bU_1 } (\textcolor{black}{\hT(q)} + \textcolor{black}{\hS(\anc(v) \cup v \mid q)}) \\
     &+ \sum_{v \in \bU_2  \setminus v_2^\star} (\textcolor{black}{\hT(p)} + \textcolor{black}{\hS(\anc(v) \cup v \mid p)}) + \textcolor{black}{\hS(\anc(v_2^\star) \cup v_2^\star)} + (\mw_1 -1 ) \textcolor{black}{\hT(\bZ)} \\
     &\geq (\mw_1 - 1 ) \textcolor{black}{\hT(\bU_1 q)} + \sum_{v \in \bU_2  \setminus v_2^\star} (\textcolor{black}{\hT(p)} + \textcolor{black}{\hS(\anc(v) \cup v \mid p)}) + \textcolor{black}{\hS(\anc(v_2^\star) \cup v_2^\star)}\\
     &\geq (\mw_1 - 1 ) \textcolor{black}{\hT(\bU_1 q )} +\textcolor{black}{\hS(\anc(v_1) \cup v_1 \mid p)} + \dots + \textcolor{black}{\hS(\anc(v_k) \cup v_k \cup v_2^\star \mid v_2^\star p)} + \textcolor{black}{\hS(p \cup v_2^\star)} \\
     &\geq (\mw_1 - 1 ) \textcolor{black}{\hT(\bU_1 q )} + \textcolor{black}{\hS(\bU_2 p)}
\end{align*}

Note that $\mw_1 + \mw_2 \leq \mw$ since the $S$ and $T$ bags do not overlap and together, they can cover all variables $\bZ$. Further, $\mw_1 \leq \mw - 1$ as explained before.}

\end{document}